\newcommand{\ErrInd}{\mathrm{ErrorIndicator}}
\newcommand{\RobInd}{\mathrm{RobustIndicator}}
\newcommand{\test}{\mathrm{test}}
\newcommand{\iso}{\mathrm{iso}}
\title{Robust learning of halfspaces under log-concave marginals}
\author{%
  Jane Lange\thanks{Alphabetical order.} \\
  MIT\\
  \texttt{jlange@mit.edu} \\
   \And
   Arsen Vasilyan$^*$ \\
   UT Austin \\
   \texttt{ArsenVasilyan@gmail.com} \\
}
\begin{document}

\maketitle

\begin{abstract}

We say that a classifier is \emph{adversarially robust} to perturbations of norm $r$ if,
with high probability over a point $x$ drawn from the input distribution, there is no point within distance $\le r$ from $x$ that is classified differently. 
The \emph{boundary volume} is the probability that a point falls within distance $r$ of a point with a different label.
This work studies the task of computationally efficient learning of hypotheses with small boundary volume, where the input is distributed as a subgaussian isotropic log-concave distribution over $\mathbb{R}^d$.

Linear threshold functions are adversarially robust; they have boundary volume proportional to $r$.
Such concept classes are efficiently learnable by polynomial
regression, which produces a polynomial threshold function (PTF), but PTFs in general may have boundary volume $\Omega(1)$, even for $r \ll 1$.

We give an algorithm that agnostically learns linear threshold functions and returns a classfier with boundary volume $O(r+\varepsilon)$ at
radius of perturbation $r$.
The time and sample complexity of $d^{\tilde{O}(1/\varepsilon^2)}$ matches the 
complexity of polynomial regression.

Our algorithm augments the classic approach of polynomial regression with three additional steps:
\begin{itemize}
		\item[a)] performing the $\ell_1$-error regression under noise sensitivity constraints,
		\item[b)] a structured partitioning and rounding step that returns a Boolean classifier with error $\opt + O(\varepsilon)$ 
				and noise sensitivity $O(r+\varepsilon)$ simultaneously, and
		\item[c)] a local corrector that ``smooths'' a function with low noise sensitivity into a function that is adversarially robust.
\end{itemize}
\end{abstract}

\section{Introduction}
\label{sec: intro}
A predictor is robust to adversarial examples if for most possible inputs,
a small perturbation will not cause the input to be misclassified.
We define the \emph{boundary volume} as the probability, over the input distribution, that a point is close to the boundary:
\[\AdvRob_{\mcD, r}(f) = \Prx_{x \sim \mcD}[\exists z: \norm{z}_2 \le r \text{ and } f(x) \ne f(x +z)].\]
The sum of the boundary volume and the classification error is a natural upper bound on the standard notion of robust risk:
\[\mathrm{RobRisk}_{\mcD, r}(f) = \Prx_{(x,y) \sim \mcD}[\exists z: \norm{z}_2 \le r \text{ and } y \ne f(x +z)].\]
In this work, we discuss computationally efficient learning of classifiers with optimal classification error and boundary volume, thus minimizing the robust risk.

There is a large body of research on adversarial robustness in machine learning,
the focus of which is to assess the robustness of classifiers commonly deployed in practice --- see \cite{BiggioCMNSLGR13, Goodfellow2014ExplainingAH,SzegedyZSBEGF13}, as well as \cite{kolter18} for an overview of the topic. It is well known that deep networks trained on classic benchmark data sets, such as ImageNet, can be ``tricked'' into misclassifying a test input by making a perturbation so small that it cannot be detected by humans, and that training robust models is a challenge in practice.

In this work we consider
linear threshold functions (halfspaces), one of the most basic and well-known concept classes. As 
a robustness benchmark, we consider the robustness of a \emph{proper} learning algorithm --- one that outputs a hypothesis that is a halfspace. 
For a halfspace, when the input is distributed according the $d$-dimensional standard Gaussian --- or more generally, a subgaussian isotropic log-concave distribution --- the probability that an input falls within Euclidean distance $r$ of the classification boundary is $O(r)$; thus a $1 - O(r)$ fraction of inputs are robust to adversarial perturbations of norm $r$.
A proper agnostic learner would output a hypothesis with the following guarantees:
\begin{itemize}
\item \textbf{Agnostic approximation:} $\Prx_{(x,y) \sim \mcD}[y \ne h(x)] \le \opt + \eps$, where $\opt$ is the classification error of the best halfspace, and
\item \textbf{Adversarial robustness:} $\AdvRob_{\mcD, r}(h) \le O(r)$.
\end{itemize}

However, the state of the art for agnostically learning halfspaces is an improper algorithm --- an algorithm that does not output a halfspace, but instead outputs a polynomial threshold function (PTF).
Polynomial regression with randomized rounding learns halfspaces over $\mcN(0,I_d)$ with time and sample complexity $d^{\tilde{O}(1/\eps^2)}$ and satisfies the accuracy guarantee \cite{diakonikolas_bounded_2009, kalai2008agnostically}.
In fact, the efficiency of polynomial regression is due to the inherent robustness of halfspaces --- their small surface area! 
However, there exist degree-$1/\eps^2$ PTFs with boundary volume $\Omega(1)$ even at a small radius of perturbation,\footnote{Consider the PTF $\sign(\prod_{i=1}^{1/\eps^2} x_i)$, which has boundary volume $\Omega(1)$ over $\mcN(0, I_d)$ at any radius $\ge \eps^2$.}
and this algorithm makes no guarantee that its output is robust.

No proper agnostic learning algorithm with $2^{o(d)}$ running time for general subgaussian isotropic log-concave distributions is known,
and for the Gaussian distribution the state of the art for proper learning is $d^{O(1/\eps^4)}$ \cite{diakonikolas21b}. 
The complexity of proper learning in these settings is still an open question.
We circumvent the possible difficulty of proper learning by giving an improper algorithm that still satisfies both the accuracy and robustness properties.
Our algorithm has time and sample complexity $d^{\tilde{O}(1/\eps^2)}$,
matching the run-time of the best (improper and non-robust) agnostic learning algorithm \cite{diakonikolas_bounded_2009}. 
Up to polylogarithmic factors in the exponent, this also matches the statistical query lower bound for agnostically learning halfspaces \cite{diakonikolas2021optimality}.

\subsection{Main result}
Our main result is an agnostic learner (\Cref{alg:RobustLearn}) for halfspaces that outputs an adversarially robust hypothesis.

\begin{theorem}[Adversarially robust learning of halfspaces, informal]
\label{thm:main-informal}
Let $\mcD$ be a distribution over $\R^d \times \bits$ such that the $\R^d$-marginal is subgaussian, isotropic, and log-concave.
There is an algorithm that takes a robustness radius $r$, an accuracy parameter $\eps$, and a sample of size 
$d^{\tilde{O}(1/\eps^2)}$ from $\mcD$.
It outputs a hypothesis $h: \R^d \to \bits$ with the following guarantees:
\begin{itemize}
		\item \textbf{Agnostic approximation:} $\Prx_{(x,y) \sim \mcD}[h(x) \ne y] \le \opt + O(\eps)$, 
				where $\opt$ is the classification error of the best halfspace.
\item \textbf{Adversarial robustness:} $\AdvRob_{\mcD, r}(h) \le O(r)$.
\end{itemize}
\end{theorem}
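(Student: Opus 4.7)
The plan is to follow the three-step architecture outlined in the abstract, starting from the classic polynomial-regression baseline and progressively upgrading its robustness properties.

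\textbf{Step 1 (Constrained polynomial regression).} First I would run polynomial regression of degree $k = \tilde{O}(1/\eps^2)$ under the $\ell_1$ loss, augmented with a linear constraint that the polynomial have small noise sensitivity at scale $r$, say $\mathrm{NS}_r(p) \le O(r)$. Since the target halfspace has $\mathrm{NS}_r = O(r)$ under any subgaussian isotropic log-concave marginal, and since the low-degree Hermite truncation of a halfspace inherits (up to constants) the noise sensitivity of the original function, the constraint is feasible with an $\ell_1$ error at most $\opt + O(\eps)$. The objective and constraint are both linear in the polynomial's coefficients, so this is a linear program that can be solved on an empirical sample of size $d^{\tilde{O}(1/\eps^2)}$; uniform convergence for degree-$k$ polynomials over log-concave marginals then transfers the guarantee to $\mcD$.

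\textbf{Step 2 (Structured partitioning to a Boolean classifier).} The polynomial $p$ from Step 1 is real-valued, so it must be rounded. To avoid destroying noise sensitivity (the footnote PTF in the introduction is exactly the kind of object that ordinary sign-rounding can produce), I would partition $\R^d$ into a small number of ``level slabs'' $\{x : p(x) \in I_j\}$ for an $\eps$-scale grid $\{I_j\}$ of the real line, and then assign to every point in slab $I_j$ the empirical majority label of $(x,y)$-samples with $p(x) \in I_j$. The accuracy analysis is the standard one for polynomial-regression rounding and yields classification error $\opt + O(\eps)$. For the noise-sensitivity analysis, a random step $(x, x+\eta)$ can flip the label only if the pair crosses a slab boundary; but the number of such boundary crossings is controlled by $\mathrm{NS}_r(p)$ divided by the slab width, which is $O((r+\eps))$ by our choice of grid.

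\textbf{Step 3 (Smoothing via a local corrector).} Noise sensitivity only controls expected disagreement under a \emph{random} perturbation, whereas adversarial robustness asks about the \emph{existence} of a perturbation. To close this gap I would set
\[
h(x) = \mathds{1}\bigl[\Prx_{z \sim B(x,r')}[g(z)=1] \ge 1/2\bigr],
\]
for $r' \asymp r$, where $g$ is the classifier from Step~2. A Markov-type argument applied to the local-averaging operator shows that the set of $x$ at which this local probability lies in the ``ambiguous'' band $[1/4,3/4]$ has measure $O(r+\eps)$. At every other point the local majority is pinned down by a wide margin, so $h$ is constant on a ball of radius $\Omega(r)$ around $x$ and hence adversarially robust there; the ambiguous band contributes the $O(r)$ boundary volume. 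A short computation (again using log-concavity and the smoothness of ball averages under such marginals) shows that replacing $g$ by $h$ shifts the classification error by at most $O(\eps)$.

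The step I expect to be the main obstacle is Step~2: exhibiting a rounding procedure that \emph{simultaneously} preserves the $\opt + O(\eps)$ accuracy of polynomial regression and the $O(r+\eps)$ noise sensitivity inherited from the constraint in Step~1. Standard roundings (randomized thresholding, sign rounding) are tuned for accuracy only and can blow noise sensitivity up to $\Omega(1)$, so the argument really has to exploit the fact that the partition respects level sets of $p$ and that slab boundaries are crossed by noise pairs only in proportion to $\mathrm{NS}_r(p)$. Steps~1 and~3 are, respectively, a convex-program modification of a classical algorithm and a standard smoothing argument via the local-averaging operator under a log-concave marginal.
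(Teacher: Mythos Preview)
Your proposal has the right three-phase architecture, but there are two genuine gaps that would prevent the argument from closing.

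\textbf{The isolation-probability constraint is missing, so Step~3 costs $O(r)$ error rather than $O(\eps)$.} Your smoothing step flips the label at every $x$ whose local disagreement probability exceeds $1/2$. You claim this shifts the error by $O(\eps)$, but Markov's inequality on the noise sensitivity only gives $\Pr[\phi_g(x)>1/2]\le 2\,\NS(g)=O(r+\eps)$, so the shift can be as large as $O(r)$. When $r\gg\eps$ this destroys the $\opt+O(\eps)$ guarantee. The paper fixes this by adding, already in the convex program of Step~1, a second constraint---a convex surrogate for the \emph{isolation probability} $\Pr_x[\phi_p(x)>0.7]$---and then carrying this quantity through the rounding step so that the corrector provably changes only an $O(\eps)$ mass of labels, independently of $r$.

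\textbf{Step~2's level-slab rounding does not yield the stated noise-sensitivity bound.} With slabs of width $\eps$, the probability that $p(x)$ and $p(x+\eta)$ land in different slabs is essentially $\E|p(x)-p(x+\eta)|/\eps = O(r/\eps)$, not $O(r+\eps)$; widening the slabs enough to fix this leaves only $O(1)$ slabs, at which point the ``majority vote per slab'' classifier has no reason to achieve $\opt+O(\eps)$. The paper's Step~2 is structurally different: it observes that for a \emph{random} threshold $t$, the rounded PTF $\sign(p-t)$ has good error, noise sensitivity, and isolation probability \emph{in expectation} but not simultaneously; applies Carath\'eodory's theorem in $\R^3$ to extract four thresholds $t_1,\dots,t_4$ with weights $w_1,\dots,w_4$ whose mixture meets all three targets at once; and then partitions the domain along a random unit direction (using the random-projection theorem of \cite{dasgupta2006concentration}) into four pieces of mass $w_1,\dots,w_4$, applying $\sign(p-t_i)$ on the $i$th piece. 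Having only four pieces keeps the added boundary volume at $O(r)$, which is exactly the obstacle your level-slab scheme runs into.
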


\subsection{Technical overview and intermediate results}
Our algorithm and its analysis have three main components, which may be of independent interest.
Two of the components solve relaxations of the robust learning problem, and the third transforms the almost-robust hypothesis into a robust one.
The relaxed learners produce hypotheses with small \emph{noise sensitivity} (a notion of boundary volume with random perturbations instead of adversarial perturbations)
and small \emph{isolation probability} (the probability that the local noise sensitivity around a point is very high). 
These quantities are defined formally in \Cref{sec:perturbations}.

\subsubsection{Local correction of adversarial robustness (\Cref{alg: RobustnessLCA})}
The part of our algorithm that transforms an almost-robust hypothesis into a robust one is a \emph{local corrector} for adversarial robustness. 
Local correctors are part of the family of \emph{local computation algorithms,} or LCAs \cite{rubinfeld2011fast, alon2012space}: fast randomized algorithms that compute parts of a large object without constructing the object in its entirety.
Local correctors, also known as local property reconstructors \cite{ailon2008property}, are LCAs that evaluate queries to a nearby object that satisfies a desired property; in our case that object is an adversarially robust hypothesis. 
Our algorithm uses local correction in the fashion of \cite{lange2022properly}: to ``append'' a local corrector with a fixed random seed to a non-robust hypothesis in order to \emph{globally} correct the hypothesis.

We analyze a very simple LCA that makes queries to
a function and outputs queries to a nearby function with reduced boundary volume.
This algorithm estimates the probability that a random perturbation of $x$
causes the label to change, and changes the label if this probability is too high.
This is, in essence, the smoothing procedure discussed in \cite{LAGHJ19, li2018second, cohen2019certified}.
We give guarantees on the boundary volume and the error introduced by the smoothing procedure in terms of the relaxed robustness properties of the input function (the noise sensitivity and isolation probability).

\begin{lemma}[Local corrector for adversarial robustness, informal]
Let $\eps, \alpha, \beta, r \in (0,1)$. Let $f:\R^d \to \bits$ be a degree-$k$ polynomial threshold function with noise sensitivity $\le \alpha$ and isolation probability $\le \beta$. There is an efficient randomized
algorithm \textsc{RobustnessLCA} that makes black-box queries to $f$ and answers black-box queries to a function
$g: \R^d \to \bits$ that satisfies the following:
\begin{itemize}
		\item \textbf{Adversarial robustness:} $\AdvRob_{\mcD, r} \le O(\alpha + \eps)$.
		\item \textbf{Small distance:} $\Prx_{x \sim \mcD}[g(x) \ne f(x)] \le O(\beta + \eps)$.
\end{itemize}
\end{lemma}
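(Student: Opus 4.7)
I would implement the standard randomized-smoothing corrector in the style of \cite{cohen2019certified}: fix a Gaussian smoothing kernel $z\sim\mcN(0,\sigma^2 I_d)$ at a scale $\sigma$ tied to the adversarial radius~$r$, and consider the idealized corrected function
\[ g^\star(x) \;=\; \mathrm{sign}\!\bigl(F(x)-\tfrac12\bigr), \qquad F(x) \;=\; \Prx_{z}[\,f(x+z)=1\,]. \]
Given a query $x$, \textsc{RobustnessLCA} realizes $g^\star$ by Monte Carlo: using a pre-committed random seed, it draws $m=\mathrm{poly}(1/\eps)$ samples $z_1,\dots,z_m$, issues $m$ black-box queries to the degree-$k$ PTF $f$, and returns the majority vote $g(x)$. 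The pre-committed seed makes $g$ a well-defined function of $x$.

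\textbf{Small distance to $f$.}
By construction, $g^\star(x)\neq f(x)$ iff a fresh perturbation is more likely to flip $f$ at $x$ than to preserve it---i.e., iff the local noise sensitivity at $x$ exceeds $\tfrac12$. This is, up to threshold constants in the definition, exactly the event controlled by the isolation probability, so $\Prx_x[g^\star(x)\neq f(x)]\le\beta$. The finite-sample estimate satisfies $|\hat F(x)-F(x)|\le\eps$ except with probability $e^{-\Omega(m\eps^2)}$ by Chernoff; only those $x$ with $|F(x)-\tfrac12|\le\eps$ can be additionally mislabeled by sampling noise, and their total mass is $O(\eps)$ by the variance bound derived in the next step.

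\textbf{Robustness via Lipschitz smoothing and a variance bound.}
The Cohen--Rosenfeld--Kolter argument gives that $\Phi^{-1}(F)$ is $(1/\sigma)$-Lipschitz, so if the ball of radius $r$ around $x$ contains a sign change of $g^\star$ then $F(x)$ lies within $\Phi(r/\sigma)-\tfrac12 \le O(r/\sigma)$ of $\tfrac12$, i.e.\ within some constant $\delta<\tfrac12$ once $\sigma\asymp r$ is fixed with a sufficiently large ratio. To bound the mass of such borderline points, I will use the two-sample identity
\[ \Exp_x\!\bigl[F(x)(1-F(x))\bigr] \;=\; \tfrac12\,\Prx_{x,z,z'}[f(x+z)\neq f(x+z')] \;\le\; \mathrm{NS}_\sigma(f) \;\le\; \alpha, \]
where the middle inequality is the usual triangle inequality for noise sensitivity. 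Markov applied to $F(x)(1-F(x))=\tfrac14-(F(x)-\tfrac12)^2$ then gives $\Prx_x[\,|F(x)-\tfrac12|\le\delta\,]\le \alpha/(\tfrac14-\delta^2) = O(\alpha)$. Combining with the $O(\eps)$ sampling slack yields $\AdvRob_{\mcD,r}(g)\le O(\alpha+\eps)$.

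\textbf{Main difficulty.}
The delicate point is the calibration of $\sigma$: the Lipschitz guarantee wants $\sigma$ at least a constant factor larger than $r$ so that $\Phi(r/\sigma)-\tfrac12$ stays bounded away from $\tfrac12$, while the variance bound requires the hypothesis $\mathrm{NS}\le\alpha$ to be stated at exactly the smoothing scale $\sigma$ used inside the algorithm. The lemma is therefore most naturally proved by committing to a single fixed scale that matches the paper's definition of noise sensitivity from \Cref{sec:perturbations}. A secondary---routine---issue is that in the LCA model an adversary may query $g$ at any point, so the sampling-error event has to hold uniformly over the queried points; this is dealt with by pre-committing the random seed and applying a pointwise Chernoff bound, paying an additive $O(\eps)$ that is absorbed into the final guarantees.
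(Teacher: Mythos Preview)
Your overall strategy—correct via Gaussian smoothing and output the (estimated) majority label—matches the paper's, and your variance bound $\Ex_x[F(x)(1-F(x))]\le\alpha$ plus Markov is essentially the same as the paper's direct Markov on $\phi$. The paper differs in using a conservative threshold (flip only when $\hat\phi>0.8$, not $0.5$) and in proving robustness by a TV-distance bound between shifted Gaussians followed by a two-case analysis on whether $f(x')=\pm f(x)$, rather than the Cohen--Rosenfeld--Kolter Lipschitz argument; at the level of the idealized $g^\star$ these are interchangeable.

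The genuine gap is the step you call ``routine.'' Pre-committing the seed and applying pointwise Chernoff gives, for each fixed $x$, that $|\hat F(x)-F(x)|\le\eps$ with high probability over the seed; by Fubini this bounds the $\mcD$-measure of the bad set $E=\{x:|\hat F(x)-F(x)|>\eps\}$. But to transfer robustness from $g^\star$ to $g$ at a point $x$ you need \emph{every} $x'$ in the $r$-ball around $x$ to lie outside $E$, and a set of arbitrarily small measure can still intersect almost every ball (imagine $E$ concentrated on a hypersurface). The paper closes this with a VC argument: for any fixed $x$, the map $z\mapsto f(x)\cdot f(x+\eta z)$ is itself a degree-$k$ PTF in $z$, so a single sample set $T$ of size $d^{O(k)}$ makes the empirical estimate $\eps$-accurate simultaneously for \emph{all} $x\in\R^d$ and all degree-$k$ PTFs. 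This is exactly why the degree-$k$ hypothesis is in the lemma—the paper's own footnote remarks that it can be dropped only when each call is allowed independent randomness, which is not the shared-seed LCA setting. Your proposal never invokes this structure.

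A secondary point: your claim that the mass of $\{x:|F(x)-\tfrac12|\le\eps\}$ is $O(\eps)$ is not what your variance bound yields; Markov on $F(1-F)$ gives $O(\alpha)$, so with threshold $\tfrac12$ the distance bound becomes $O(\beta+\alpha)$ rather than $O(\beta+\eps)$. The paper's higher threshold $0.8$, together with uniform accuracy of $\hat\phi$, avoids this: $g(x)\ne f(x)$ only when $\hat\phi(x)>0.8$, hence $\phi(x)>0.8-\eps$, which is bounded directly by the isolation probability without any reference to $\alpha$.
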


We use this algorithm in a ``deterministic'' fashion, where all calls to the local corrector share the same random seed, which is good with high probability. 
When the random seed is good, the guarantees hold for all $x \in \R^d$.\footnote{We remark that in a setting where each call to the corrector is allowed to fail \emph{independently} with probability $\delta$, the query complexity can be reduced from $d^{O(k)} \cdot \log(1/\delta)/\eps^2$ to $O(\log(1/\delta)/\eps^2)$ and the assumption that $f$ is a degree-$k$ PTF can be dropped.
This is not applicable in our setting but might be desired in a non-learning application of local correction for robustness.}
This allows us to append the local corrector and its fixed seed to some degree-$k$ PTF, 
creating a deterministic robust hypothesis that can be evaluated in $d^{O(k)}$ time.

From the guarantees of the local corrector and the desired guarantees of the overall algorithm,
we determine that we must feed the local corrector a PTF of error $\opt + O(\eps)$, noise sensitivity $O(r)$, and isolation probability $O(\eps)$.

\subsubsection{Polynomial regression under noise sensitivity constraints (\Cref{alg: LearnRealValued})}
Our first step in finding such a PTF is to learn a polynomial with low error, noise sensitivity, and isolation probability in the $\ell_1$-distance regime.

\begin{theorem}[Learning a polynomial, informal]
\label{thm:real-valued-informal}
There is an algorithm with time and sample complexity $d^{\tilde{O}(1/\eps^2)}$ that takes samples from the distribution $\mcD$ and returns a degree-$\tilde{O}(1/\eps^2)$ polynomial $p$ with the following properties:
\begin{itemize}
\item \textbf{Accuracy:} $\err(p) \le \opt + O(\eps)$,
\item \textbf{Low noise sensitivity:} $\NS(p) \le O(r + \eps)$, and
\item \textbf{Low isolation probability:} $\iso(p) \le O(\eps)$.
\end{itemize}
\end{theorem}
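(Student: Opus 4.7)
The plan is to augment classical $L_1$ polynomial regression with additional convex constraints that enforce low noise sensitivity and low isolation probability, and to solve the resulting convex program on a sample of size $d^{\tilde O(1/\eps^2)}$.

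\textbf{Step 1 (Feasibility of the constrained program).} I will first exhibit a degree $k = \tilde{O}(1/\eps^2)$ polynomial $p^*$ that simultaneously satisfies all three targets for the optimal halfspace $f^*$: $\Exp[|p^*-f^*|]\le \eps$, $\NS(p^*) \le O(r+\eps)$, and $\iso(p^*) \le O(\eps)$. The $L_1$ bound is the standard low-degree approximation of halfspaces under subgaussian isotropic log-concave marginals. For the noise-sensitivity and isolation bounds, the candidate $p^*$ will be the degree-$k$ truncation of the Hermite/basis expansion of $f^*$ (equivalently, a smoothing of $f^*$ by the noise operator $T_\rho$ followed by low-degree projection). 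Because $f^*$ itself has $\NS(f^*)\le O(r)$, and because noise smoothing and low-degree projection contract noise-sensitivity-related quantities rather than amplify them, $p^*$ inherits the desired bounds.

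\textbf{Step 2 (Convex program).} I will then solve the empirical program
\[
\min_{\deg p\le k}\ \widehat{\mathbb{E}}_{(x,y)}\bigl[\,|p(x)-y|\,\bigr]\quad\text{s.t.}\quad \widehat{\NS}(p)\le O(r+\eps),\ \widehat{M}(p)\le O(\eps),
\]
where $\widehat{\mathbb{E}}$ is the empirical average over samples, $\widehat{\NS}$ is the empirical noise sensitivity (itself an $L_1$ average over fresh noise, hence convex in the coefficients of $p$), and $\widehat{M}$ is a convex surrogate for $\iso(p)$ given by a controlled higher moment of the local noise sensitivity $q(x)=\Exp_z[|p(x+z)-p(x)|]$; by Markov's inequality this moment upper-bounds the tail probability that defines $\iso(p)$. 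All three functionals are convex in the polynomial coefficients, so the program is a convex minimization over $d^{O(k)}$ variables and is solvable in time $d^{\tilde O(1/\eps^2)}$.

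\textbf{Step 3 (Uniform convergence and wrap-up).} Standard $L_2$ covering / Rademacher bounds for degree-$k$ polynomials under subgaussian isotropic log-concave marginals (together with a tail truncation to control unbounded ranges) imply that, with $d^{\tilde O(1/\eps^2)}$ samples, the three empirical quantities concentrate uniformly over the hypothesis class. Combined with Step 1, $p^*$ is feasible for the empirical program with slack, so any near-optimum $p$ inherits $\err(p)\le \opt+O(\eps)$, $\NS(p)\le O(r+\eps)$, and $\iso(p)\le O(\eps)$.

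\textbf{Main obstacle.} The hardest piece is engineering the isolation-probability constraint: $\iso(p)$ is a tail probability and is not itself a convex functional of $p$, so I must identify a convex surrogate that (i) upper-bounds $\iso(p)$ and (ii) is small at the planted polynomial $p^*$ from Step 1. Verifying (ii) amounts to a quantitative statement that low-degree polynomial approximants of halfspaces have few ``isolated'' points, which I expect to require an anti-concentration / margin argument tailored to the target halfspace.
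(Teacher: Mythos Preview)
Your three-step outline (constrained convex program, feasibility witness, uniform convergence) is exactly the paper's plan, so the architecture is right. The technical choices differ in two places and the paper's are sharper. First, the paper's convex surrogate for isolation is not a higher moment but a hinge: $\psi(p) = 10\,\E_x[\max(\phi_{p}(x)-0.6,\,0)]$, which upper-bounds $\Pr[\phi_p > 0.7]$ by Markov and is convex because $\phi_p(x)$ is convex in $p$ (an $L_1$ average) and $t\mapsto\max(t-0.6,0)$ is convex nondecreasing. Second, and more importantly, the feasibility argument is not a contraction-under-smoothing claim but a pointwise observation about halfspaces: for any $x$, the Gaussian noise direction satisfies $\sign(u\cdot z)=\sign(u\cdot x-\tau)$ with probability exactly $1/2$, so $\phi_{f^*,10r}(x)\le 1/2$ \emph{everywhere}, hence $\psi(f^*)=0$. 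To transfer this from $f^*$ to a polynomial $p^*$, the paper uses the explicit Diakonikolas--Gopalan--Jaiswal--Servedio--Viola approximator, which has the feature that a \emph{single} polynomial is $\eps$-close in $L_1$ to $f^*$ under every subgaussian (approximately) isotropic log-concave marginal; in particular under both $\mathcal D$ and the convolution $\mathcal D * \mathcal N(0,(10r)^2 I)$. Your Hermite/low-degree-projection witness is tied to one inner product and need not control $\E_{x,z}[|p^*(x+\eta z)-f^*(x+\eta z)|]$, which is exactly what the triangle-inequality transfer of $\phi$ and $\psi$ requires. Finally, the paper's uniform-convergence step leans on VC theory for degree-$k$ PTFs (applied after rounding) rather than $L_2$ covering of the real-valued class; this sidesteps the unbounded-range truncation you mention.
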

We achieve this with convex programming.
The $\ell_1$ noise sensitivity is a convex constraint, and we use a convex upper bound on the $\ell_1$ analogue of the isolation probability.
We minimize error over the set of degree-$\tilde{O}(1/\eps^2)$ polynomials under these constraints.
To show that a feasible polynomial of error $\opt + O(\eps)$ exists, 
we show that the halfspace-approximating polynomial given in \cite{diakonikolas_bounded_2009} satisfies the constraints
under any subgaussian isotropic log-concave distribution.
\subsubsection{Randomized partitioning and rounding (\Cref{alg:ComputeClassifier}, \Cref{alg:ComputeRoundingThresholds})}
With $p$ in hand, we must then round $p$ to some polynomial threshold function $\sign(p-t)$ that satisfies the error, noise sensitivity, and isolation probability constraints.
Simply rounding at a uniformly random threshold $t \sim [-1,1]$, as in \cite{kalai2008agnostically}, does result in error $\opt + O(\eps)$, noise sensitivity $O(r)$, and isolation probability $O(\eps)$ in expectation,
but doesn't guarantee that the conditions ever hold simultaneously for the same $t$.
Thus, running the local corrector on just one rounded function would not guarantee a hypothesis with the right properties.
In this step, we find a ``deterministic mixture'' of rounded functions that simultaneously satisfies the three conditions: a partition of the domain into parts, where each part is rounded at a different threshold.
The corrector is run on each part separately.

A first attempt at finding such a partition would be to allow it to have $O(1/\eps^2)$ parts.
Observe that the error, noise sensitivity, and isolation probability concentrate with deviation $\eps$ when averaging over $O(1/\eps^2)$ random thresholds. 
Thus, an equal-weighted mixture of $O(1/\eps^2)$ independent random roundings would satisfy all of the guarantees simultaneously with high probability.
But suppose we partitioned the domain into $1/\eps^2$ sets of equal mass --- we can't guarantee robustness for any point near the boundaries of these sets, and the total volume of these boundaries scales inversely with $\eps$, which is undesirable. 
To minimize the increase in boundary volume due to partitioning, it is necessary to find a partition of constant size.

We show that there exists a mixture of four rounded functions satisfying all the constraints simultaneously by applying Carath\'eodory's theorem (\cite{Caratheodory1907}),
and we find the mixture by linear programming.

To understand how to turn the mixture into a partition, consider the example of handling just the error when $y$ is a deterministic function of $x$. 
The error of the mixture is
\[\sum_{i \in [4]} w_i \cdot \err(\sign(p-t_i)) \le \opt + O(\eps),\]
where $w_i$ is the mixing weight of the $i^{th}$ PTF.
For each $i$ in the mixture, there is a set of volume $\err(\sign(p-t_i))$ consisting of the points misclassified by $\sign(p-t_i)$. We want to partition the domain such that a $w_i$-fraction of this set falls into the $i^{th}$ part, so that
\[\sum_{i \in [4]}  \Ind[x\text{ is in the $i^{th}$ part and }y \ne \sign(p(x) - t_i)] = \sum_{i \in [4]} w_i \cdot \err(\sign(p-t_i)).\]
We cite a theorem of \cite{dasgupta2006concentration}, which says, essentially, that projecting on a random unit vector causes the set to be very close to normally-distributed with high probability.
Thus, our partition is according to the inner product with a random unit vector $u$, and the parts are intervals $J_1,\ldots,J_4$ of Gaussian mass $w_1,\ldots,w_4$ respectively.
With high probability, the sets of misclassified points and the sets for which robustness is guaranteed by the corrector are all partitioned with the correct weights.
\begin{theorem}[Randomized partitioning, informal]
Let $p$ be a polynomial satisfying the guarantees of \Cref{thm:real-valued-informal}. 
There is an algorithm that outputs a unit vector $u$, rounding thresholds $t_1, \ldots, t_4$, and a partition of the real line into intervals $J_1,\ldots J_4$ such that the hypothesis
\[h(x) = \sum_{i =1}^4 \textsc{RobustnessLCA}(\sign(p(x) - t_i)) \cdot \Ind[\langle x, u \rangle \in J_i]\]
has the following properties:
\begin{itemize}
\item \textbf{Accuracy:} $\err(h) \le \opt + O(\eps)$,
\item \textbf{Robustness:} $\AdvRob_{\mcD, r}(h) \le O(r + \eps)$.
\end{itemize}
The time complexity is $d^{\tilde{O}(1/\eps^2)}$.
\end{theorem}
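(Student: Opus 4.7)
The plan is to produce four thresholds $t_1, \ldots, t_4$ and weights $w_1, \ldots, w_4$ yielding a mixture that simultaneously attains the three targeted bounds (error, noise sensitivity, isolation), then realize this mixture as a deterministic partition of $\R^d$ via a random projection, and finally invoke the local corrector on each piece. This corresponds to the three ingredients highlighted in the excerpt: Carath\'eodory's theorem, the random-projection concentration of \cite{dasgupta2006concentration}, and the guarantees of \textsc{RobustnessLCA}.

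For the first stage, I would consider the map $\phi: t \mapsto (\err, \NS, \iso)(\sign(p-t)) \in \R^3$. The standard randomized-rounding identities for uniform thresholds give
\[
\Exp_{t \sim [-1,1]}[\phi(t)] \le \bigl(\err(p) + O(\eps),\ O(\NS(p)),\ O(\iso(p))\bigr) \le \bigl(\opt + O(\eps),\ O(r+\eps),\ O(\eps)\bigr)
\]
coordinatewise, using the guarantees on $p$ from \Cref{thm:real-valued-informal}. Since this expectation lies in the convex hull of $\phi([-1,1]) \subseteq \R^3$, Carath\'eodory's theorem produces weights $w_1, \ldots, w_4$ and thresholds $t_1, \ldots, t_4$ realizing the same vector. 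Algorithmically, I would discretize $[-1,1]$ to a grid, sample-estimate each coordinate of $\phi$ at every grid point, and recover $(w_i, t_i)_{i=1}^4$ by a feasibility LP; uniform convergence over the grid requires $d^{\tilde{O}(1/\eps^2)}$ samples to evaluate the degree-$\tilde{O}(1/\eps^2)$ PTFs.

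For the second stage, I would draw $u \sim \mathbb{S}^{d-1}$ and set $J_1, \ldots, J_4$ to be contiguous intervals on $\R$ with $\Prx_{x \sim \mcD}[\langle x, u\rangle \in J_i] = w_i$. By the random-projection concentration result of \cite{dasgupta2006concentration}, with high probability over $u$, for any fixed Borel set $S \subseteq \R^d$ we have $\Prx[x \in S,\ \langle x, u\rangle \in J_i] = w_i \Prx[x \in S] \pm \eps$. I would apply this to the error sets and non-robust sets of the deterministic fixed-seed functions $g_i := \textsc{RobustnessLCA}(\sign(p - t_i))$ --- $O(1)$ sets in total --- so a single union bound suffices. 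The only extra source of non-robustness for $h$ beyond the within-slice analysis is cross-slice jumps, contributed by points within Euclidean distance $r$ of one of the three internal partition boundaries; since $\langle x, u\rangle$ is a $1$-dimensional isotropic log-concave distribution of bounded density and any perturbation of norm $\le r$ shifts $\langle x, u\rangle$ by at most $r$, each boundary contributes mass $O(r)$, totalling $O(r)$.

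Finally, invoke the local corrector: for each $i$, $\err(g_i) \le \err(\sign(p-t_i)) + O(\iso(\sign(p-t_i)) + \eps)$ and $\AdvRob_{\mcD, r}(g_i) \le O(\NS(\sign(p-t_i)) + \eps)$. Summing with weights $w_i$ and using the Stage-1 bounds gives $\err(h) \le \opt + O(\eps)$ and $\AdvRob_{\mcD, r}(h) \le O(r + \eps)$, as required. The main obstacle is ensuring the random-projection concentration holds for \emph{all} relevant sets simultaneously, but since the LCA seed is fixed before $u$ is drawn and the mixture is of constant size, the union bound is over only $O(1)$ sets. The $d^{\tilde{O}(1/\eps^2)}$ running time is dominated by evaluating the PTFs at the grid of thresholds to set up the feasibility LP.
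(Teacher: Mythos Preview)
Your three-stage plan matches the paper's structure, and your use of Carath\'eodory and the local corrector are essentially as in the paper. However, there is a genuine gap in Stage~2.

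The concentration theorem of \cite{dasgupta2006concentration} (Theorem~11 as quoted in the paper) applies to \emph{mean-zero} distributions with bounded covariance and thin-shell support. You want to apply it to $\mcD$ conditioned on an error set or a non-robust set $S$; but $\mcD \mid S$ has nonzero mean in general, and its covariance can blow up by a factor of $1/\Pr[S] \ge 1/\eps$. Your sentence ``for any fixed Borel set $S$ we have $\Pr[x \in S,\ \langle x,u\rangle \in J_i] = w_i \Pr[x\in S] \pm \eps$'' is exactly the desired conclusion, but it does not follow from the cited theorem for a uniformly random $u \in \mathbb{S}^{d-1}$ without further work. The paper's fix is a concrete extra step you omit: it first \emph{estimates} the means $\hat\mu_i = \Ex_{x}[r_i(x)\,x]$ of each of the eight indicator-weighted distributions from $O(d^3)$ samples, then draws $u$ uniformly from a $d/2$-dimensional subspace \emph{orthogonal} to all the $\hat\mu_i$. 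After this projection the conditional means are $O(d^{-1/2})$ in norm, and a thin-shell truncation (using log-concavity) handles the support condition. These ingredients are packaged in the paper's Claims~\ref{clm:interval-mass} and~\ref{clm:log-concave-satisfies-properties}, and they are also where the assumption $\eps \ge d^{-1/7}$ enters---the failure probability of the projection step has $\eps$ in the exponent and must beat a $\poly(1/\eps)$ prefactor.

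A minor stylistic difference: in Stage~1 you discretize $[-1,1]$ to a grid, whereas the paper samples $O(1/\eps^2)$ thresholds uniformly at random and uses Hoeffding to show the empirical average of $(\err_t,\NS_t,\iso_t)$ is close to the true expectation, then invokes Carath\'eodory on this finite set and brute-forces over all $4$-tuples. Your grid approach would also work, but note that the paper's choice of intervals $J_i$ is by \emph{Gaussian} mass $w_i$ rather than by $\mcD$-mass along $u$; these coincide up to $O(\eps)$ precisely because of the projection-concentration argument, so once Stage~2 is repaired the distinction is immaterial.
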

\subsection{Verifiable robustness}
\label{sec:intro-verifiable}
The work of \cite{goldwasser2022planting} discusses the planting of ``backdoors'' in learned classifiers --- structured violations of the robustness condition --- and the impossibility of efficiently distinguishing a robust model from a backdoored one in the most general case. 
The takeaway is that when training is performed by an untrusted service, it is not generally possible to verify that a hypothesis is robust, even if a description of the hypothesis is available.
However, our learning algorithm produces a hypothesis with a specific structure that can be checked, and the reduced expressivity of this structure allows robustness to be verifiable.
Under complexity assumptions, the robustness of our algorithm's output can be verified in the following way:

\begin{corollary}[Deterministic robustness, informal]
If $\textsf{P} = \textsf{BPP}$, then there is a learning algorithm $\mathcal{B}$ that, given access to labeled samples from a subgaussian isotropic log-concave distribution, runs in time $d^{\tilde{O}(1/\eps^2)}$ and produces a hypothesis $h$ with the following guarantees:
\begin{itemize}
\item \textbf{Agnostic approximation:} $\Prx_{(x,y) \sim \mcD}[h(x) \ne y] \le \opt + O(\eps)$, 
				where $\opt$ is the misclassification error of the best halfspace.

\item \textbf{Verifiable robustness:} There is an efficient deterministic verifier that takes as input a hypothesis $g$ and a point $x \in \R^d$, and always rejects if 
\[\exists z:\norm{z}_2 \le r \text{ and } g(x) \ne g(x+z).\]
If $g$ is the output of $\mathcal{B}$, then the verifier accepts with probability at least $1 - O(r + \eps)$ over $x$ drawn from $\mcD$.
\end{itemize}
\end{corollary}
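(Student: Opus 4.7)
The plan is to combine a derandomization of the learning algorithm of \Cref{thm:main-informal} with a structural verifier that exploits the smoothness of the Gaussian-convolved function that the local corrector implicitly thresholds.

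\textbf{Derandomization of the learner.} First I would observe that the learner uses internal randomness in three places: the partition direction $u$ together with the interval endpoints $J_1,\ldots,J_4$, the Carath\'eodory-type choice of thresholds $t_1,\ldots,t_4$, and the random seed $\sigma$ of \textsc{RobustnessLCA}. Each is produced by a BPP subroutine that succeeds with high probability over the labeled samples from $\mcD$, so under $\textsf{P}=\textsf{BPP}$ each can be replaced by a deterministic implementation. The derandomized learner $\mathcal{B}$ then outputs an explicit description of $h$: the polynomial $p$, the partition data $(u, J_1, \ldots, J_4)$, thresholds $(t_1, \ldots, t_4)$, and the fixed LCA seed $\sigma$.

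\textbf{Construction of the verifier.} Given $(g,x)$, the verifier first checks that $g$ is in the structural format that $\mathcal{B}$ outputs; if not, it rejects. Otherwise, let $i$ be the index with $\langle x,u\rangle \in J_i$, and check two sufficient conditions for robustness: (1) the one-dimensional partition margin $\mathrm{dist}(\langle x,u\rangle, \partial J_i) > r$, which forces every perturbation of norm at most $r$ to keep $x+z$ in part $i$; and (2) the smoothed-PTF margin $|m(x)| > L\cdot r$, where $m(y) := \mathbb{E}_{\eta}[\sign(p(y+\eta)-t_i)]$ is the Gaussian smoothing underlying the LCA's majority vote and $L$ is an analytic upper bound on the Lipschitz constant of $m$ obtained from standard gradient estimates for Gaussian convolutions of bounded functions. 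Under $\textsf{P}=\textsf{BPP}$, the value $m(x)$ can be approximated to inverse-polynomial precision deterministically by derandomizing the obvious Monte Carlo estimator, so both checks run in deterministic polynomial time; the verifier accepts iff both pass.

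\textbf{Soundness and completeness.} Soundness is immediate from the two checks: (1) makes $g(x+z)$ coincide with the LCA-corrected PTF of part $i$ throughout the ball of radius $r$, and (2) combined with the Lipschitz bound keeps $m(x+z)$ on the same side of $0$ throughout that ball, so $g(x+z)=g(x)$. For completeness on $g = \mathcal{B}(\mcD)$, the mass of $x$ failing (1) is $O(r)$ by anti-concentration of the $\langle\cdot,u\rangle$-marginal of a log-concave distribution combined with the constant number of partition boundaries, and the mass of $x$ failing (2) is $O(r+\eps)$, because both the noise sensitivity and the isolation probability of $p$ are $O(r+\eps)$ by \Cref{thm:real-valued-informal} and together these quantities bound the mass of the set where $|m|$ is small. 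A union bound then yields acceptance probability at least $1 - O(r+\eps)$.

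\textbf{Main obstacle.} The delicate point is that the empirical estimator $\widehat{m}(y) = \frac{1}{N}\sum_j \sign(p(y+\eta_j) - t_i)$ that the LCA actually evaluates is piecewise constant in $y$ and hence not Lipschitz, so one cannot certify adversarial robustness by looking at $\widehat{m}$ at a single point $x$. The key idea is to certify robustness through the smooth population quantity $m$, which is genuinely Lipschitz, and to use $\textsf{P}=\textsf{BPP}$ precisely to evaluate $m$ deterministically to high precision. Establishing that $\sign(m)$ agrees with the LCA's output $\sign(\widehat{m})$ on the accept set, with quantitative margins strong enough to absorb the Monte Carlo and Lipschitz slacks, is the step that requires the most care.
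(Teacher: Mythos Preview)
Your high-level plan matches the paper's: derandomize the noise-sensitivity estimate via $\mathsf{P}=\mathsf{BPP}$, check that the hypothesis matches the compiled template, check the partition margin $\mathrm{dist}(\langle x,u\rangle,\partial J_i)>r$, and check a smoothed-margin condition at $x$. The divergence is in that last check, and it creates a genuine soundness gap.

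You model the corrector's output as $\sign(\hat m)$ (majority vote) and propose to accept when $|m(x)|>Lr$, where $L=O(1/(10r))$ is the Lipschitz constant of the Gaussian smoothing at scale $10r$; so your threshold is a small absolute constant, roughly $|m(x)|\gtrsim 0.1$. But \textsc{RobustnessLCA} is \emph{not} majority vote: it returns $g(x)$ unless $\hat\phi_{g,10r}(x)>0.8$, in which case it flips. Hence keeping $m$ on the same side of $0$ throughout the $r$-ball does not force $h$ to be constant there. Concretely: if $m(x)\approx 0.3$, $g(x)=1$, and some $x'$ in the ball has $g(x')=-1$ with $m(x')\approx 0.25$, then $\phi(x')=(1+m(x'))/2\approx 0.625\le 0.8$, so the corrector outputs $h(x')=g(x')=-1\neq h(x)=1$. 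Your verifier would accept this $x$, contradicting soundness. The ``main obstacle'' you flag (matching $\sign m$ to $\sign\hat m$) is therefore not the actual obstacle; the issue is that neither $\sign m$ nor $\sign\hat m$ describes the corrector at all.

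The paper's verifier instead checks exactly the hypothesis of \Cref{lem:robustness-lca}(c): accept only if $\hat\phi_{h_i,10r}(x)\le 0.1-\eps$ (in your notation, $g(x)\,m(x)\ge 0.8+O(\eps)$, not merely $|m(x)|>Lr$). Soundness is then immediate from that lemma, whose proof is essentially your Lipschitz/TV argument but with the correct threshold and a case split on the sign of $g(x')$. Completeness follows, as you sketch, from Markov on $\hat\phi$ together with the noise-sensitivity bound. Finally, the wholesale derandomization of the learner you propose is unnecessary: the paper only needs a deterministic $\hat\phi$ so that the compiled hypothesis and the verifier evaluate the same function; the remaining randomness in the learner can stay.
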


As a result, training can be done entirely by an untrusted service who claims to be using our learning algorithm, but may not be.
The verifier checks whether the hypothesis matches the format our learning algorithm produces, rejects if it doesn't match, and performs a robustness test that is sound if it does match.
The user must trust that the error of the hypothesis really is $\opt + O(\eps)$, but
does not have to place any trust in the service to guarantee robustness --- there is a deterministic algorithm that the user can perform to verify that most points are not near the boundary.
Furthermore, in a setting where the user performs some of the construction of the hypothesis (but need not access any training data),
verifiable robustness can be made unconditional, but the soundness of the verifier can fail with small probability over the randomness of the hypothesis construction.
See \Cref{sec:verifiable} for further discussion.

\subsection{Related work.}
\noindent \textbf{Adversarially robust learning.}
Several recent papers study adversarially robust learning of various concept classes in the distribution-free PAC setting.
These papers also generalize the set of adversarial perturbations beyond the $\ell_2$
ball. 
Some consider $\ell_p$ perturbations and some consider fully arbitrary sets.

The works of \cite{cullina2018pac, montasser2019vc} show that there is essentially no statistical separation between robust learning and standard learning;
thus any gap in the hardness of these tasks must be computational. 
The works of \cite{bubeck2019adversarial, degwekar2019computational, AwasthiDV19} exhibit concept classes and perturbation sets such that, 
under standard hardness assumptions,
robust learning is computationally harder than standard learning.
The work of \cite{shafahi2018adversarial} exhibits some concept classes and perturbation sets such that a high robust risk is inevitable.

The work of \cite{montasser2020efficiently} gives an algorithm for robustly learning halfspaces in the \emph{realizable} setting where the data is assumed to be labeled by a halfspace with random classification noise.
The works of \cite{DKM20, DiakonikolasKM19} give algorithms for robust, proper, agnostic learning of halfspaces; however, the data distributions are assumed to be supported on the unit ball.
When scaled up to match the parameters of our setting --- distributions concentrated on a radius-$\sqrt{d}$ ball --- 
the running times of these algorithms have an exponential dependence on $\sqrt{d}$.

The work of \cite{AwasthiDV19} also studies robust learning in the distribution-free setting.
They give an algorithm for learning halfspaces and degree-2 PTFs with robustness under $\ell_\infty$-bounded perturbations.

In the distribution-specific setting,
\cite{gourdeau2021hardness} gives an algorithm for learning monotone conjunctions with robustness to perturbations of $O(\log n)$ Hamming distance, under log-Lipschitz distributions on the Hamming cube. 
As mentioned earlier, the proper agnostic learner of \cite{diakonikolas21b} is a robust learner for halfspaces under the Gaussian distribution.

\noindent \textbf{Other work on learning halfspaces.} See \Cref{sec: intro} for comparison of our work with \cite{diakonikolas_bounded_2009, kalai2008agnostically}.
Under general log-concave distributions, \cite{awasthi2017power} gives an algorithm for semiagnostic proper learning of origin-centered halfspaces under log-concave distributions, i.e. the algorithm gives a halfspace with error $O(\opt)+\eps$, where $\opt$ is the error of best halfspace. In contrast to this, our work focuses on obtaining the optimal $\opt+\eps$ error. \cite{daniely2015ptas} gives a poly-time method of achieving error $(1+\alpha)\opt+\eps$ under for any constant $\alpha$, but via an improper learning algorithm that uses polynomial regression\footnote{The algorithm of \cite{daniely2015ptas} also applies only to the Gaussian distribution, and not to genera}. The work of \cite{diakonikolas21b}, in addition to the $d^{\tilde{O}(1/\epsilon^4)}$-time algorithm for achieving error $\opt+\eps$ for halfspaces under the Gaussian distribution (discussed in \Cref{sec: intro}), gives a poly-time proper learning algorithm for origin-centered halfspaces that achieves error $(1+\alpha)\opt+\eps$ (similar to \cite{daniely2015ptas}). All algorithms in \cite{diakonikolas21b} are highly specific to Gaussian distributions and do not extend to, for example, to the uniform distribution over $[-1,1]^d$ and other sub-Gaussian log-concave distributions.

\section{Preliminaries}

When dealing with a distribution $\mcD$ over $\R^d \times \{\pm 1\}$, we will sometimes overload the notation to use the symbol $\mcD$ to refer to the $\R^d$-marginal of $\mcD$. 
Analogously, when dealing with a collection $S$ of pairs $\{(x_i, y_i)\}$ in $\R^d \times \{\pm 1\}$ we will sometimes write $\Pr_{x \sim S}[\cdots]$ to refer to $\Pr_{(x,y) \sim S}[\cdots]$ when the values of $y$ are not referenced.
\subsection{Perturbation and robustness models}
\label{sec:perturbations}
\label{sec: definitions of phi and psi}
\begin{definition}
[$r$-boundary volume]
\label{def:RR}
The boundary volume of a function $f$ at radius $r$ on distribution $\mcD$ is defined as 
\[\AdvRob_{\mcD,r}(f) = \Prx_{x \sim \mcD} [\exists z: \norm{z}_2 \le r\text{ and } f(x) \ne f(x+z)].\]
\end{definition}

\begin{definition}[Local noise sensitivity]
\label{def:local-NS}
We define the local noise sensitivity $\phi_{f,\eta}(x)$ at noise scale $\eta$ as 
\[\phi_{f,\eta}(x) \coloneqq \Ex_{z \sim \mcN(0,I_d)}[\lfrac 12 |f(x) - f(x + \eta z)|].\]
We remark that when $f$ is Boolean-valued, this is equivalently
\[\phi_{f,\eta}(x) \coloneqq \Prx_{z \sim \mcN(0,I_d)}[f(x) \ne f(x + \eta z)].\]
\end{definition}
\begin{definition}[Distributional noise sensitivity
\footnote{
We remark that when $\mcD = \mcN(0,I_d)$, our definition of distributional noise sensitivity differs from the standard definition of Gaussian noise sensitivity.
}]
\label{def:NS}
The noise sensitivity of a function $f$ at noise scale $\eta$ on distribution $\mcD$ is defined as 
\[\NS_{\mcD, \eta}(f) = \Ex_{x \sim \mcD}[\phi_{f,\eta}(x)].\]
\end{definition}

\begin{definition}[Isolation probability]
We call a point \emph{isolated} if its local noise sensitivity is over a threshold.
The \emph{isolation probability} of a function $f$ relative to a threshold $t$ is
\[\iso_{\mcD, \eta}(f, t) := \Prx_{x \sim \mcD}[\phi_{f, \eta}(x) > t].\]
We also use a convex relaxation of the isolation indicator and define $\psi$ to be its expectation:
\[\psi_{\mcD, \eta}(f) := 10 \Ex_{x \sim \mcD}[\Ind[\phi_{f,\eta}(x) > 0.6] \cdot ( \phi_{f, \eta}(x) - 0.6)].\]
Note that $\psi(f)$ is an upper bound on $\iso(f, 0.7)$.
\end{definition}


\subsubsection{Noise sensitivity approximators}
\label{sec: noise sensitivity approximators}

Below, we introduce the notion of a local noise sensitivity approximator. Many of the algorithms we introduce will use such an approximator as a black box they can query (see \Cref{alg:RobustLearn}, \Cref{alg:ComputeClassifier}, \Cref{alg:ComputeRoundingThresholds} and \Cref{alg: RobustnessLCA}). We first introduce a local noise sensitivity approximator and some related notions, and present a randomized method for efficiently instantiating it.\footnote{In \Cref{sec: randomized construction of phi hat} a randomized construction of $\hat{\phi}$ is given and in \Cref{sec:verifiable} a deterministic method is presented assuming $\mathsf{BPP} = \mathsf{P}$.}
{
\begin{definition}[Local noise sensitivity approximator for PTFs]
\label{def: local noise sensitivity approximator}
For a degree parameter $k$, an algorithm $\hat{\phi}$ is an $\epsilon$-accurate \emph{noise sensitivity approximator} if whenever it is given (i) a degree-$k$ polynomial threshold function $f$, (ii) a noise rate $\eta \in (0,1)$, and (iii) a point $x \in \R^d$, it outputs a value $\hat{\phi}_{f,\eta}(x)$ such that 
\[
\left\lvert
\hat{\phi}_{f, \eta}(x) -
\phi_{f, \eta}(x)
\right\rvert
\leq \epsilon.
\]
\end{definition}

\begin{definition}
Let $\hat{\phi}$ be as in the definition above.
Then we define the empirical noise sensitivity and isolation probabilities as 
\[\wh{\NS}_{\mcD, \eta}(f) \coloneqq  \Ex_{x \sim \mcD} \hat{\phi}_{f,\eta}(x) \quad \text{and} \quad \wh{\iso}_{\mcD, \eta}(f, t) \coloneqq \Ex_{x \sim \mcD} \Ind[\hat{\phi}_{f,\eta}(x) > t].\]
We also define
\[
\hat{\psi}_{\mcD,\eta}(f)
=
10\Ex_{x \sim \mcD}[\Ind[\hat{\phi}_{f,T,\eta}(x) > 0.6] \cdot ( \phi_{f, \eta}(x) - 0.6)].
\]
\end{definition}}

\subsection{Distances and errors}

\begin{definition}[Error and optimal error]
We will denote the error of a hypothesis on a distribution
\[\err_{\mcD}(h) \coloneqq \Ex_{(x,y) \sim \mcD}
\left[\lfrac 12 |h(x) - y|
\right]\]
and remark that when $h$ is Boolean-valued this is equivalently
\[\err_{\mcD}(h) \coloneqq \Prx_{(x,y) \sim \mcD}[h(x) \ne y].\]

Relative to a sample set we use 
\[\wh{\err}_S(h) = \frac {1}{|S|} \sum_{(x,y) \in S} \lfrac 12 |h(x) - y|.\]
We will often denote by $\opt$ the optimal error of a function $f$ with respect to a concept class $\mcF$:
\[\opt \coloneqq \inf_{g \in \mcF} \err_{\mcD}(g).\]
In this work, $\mcF$ is always the class of linear threshold functions over $\R^d$.
\end{definition}

\subsection{Miscellaneous}
We will use the notation of the form $a=b
\pm c$ as a shorthand for $|a-b|
\leq c$.
We will also drop subscripts, particularly on $\phi$ and $\hat{\phi}$, when they are clear from context.
Throughout this work, wherever there is noise, the noise scale $\eta$ is always $10r$, where $r$ is the desired robustness radius.
We use the notation $t \sim [-1,1]$ to denote that $t$ is drawn uniformly from $[-1,1]$.

\section{Results and pseudocode of our main algorithm }
Our main result is the following. It states the correctness and running time of the main algorithm, {\sc RobustLearn.}

\begin{restatable}[Correctness and complexity of \Cref{alg:RobustLearn}]{theorem}{robustlearn}
\label{thm:main}
Let $\eps \ge d^{-1/7}$, let $r, \delta \in (0,1)$, and let $C$ be a sufficiently large absolute constant. Furthermore, let $\hat{\phi}$ be an $\epsilon$-accurate local noise sensitivity approximator for degree-$k$ PTFs over $\R^d$, where $k \coloneqq \frac{C \log^2 (1/\epsilon)}{\epsilon^2}$, and let the running time and query complexity of $\hat{\phi}$ be $\tau$.

Let $\mathcal{F}$ be the class of linear threshold functions over $\R^d$ and
$\mcD$ be a distribution over $\R^d \times \bits$ whose $\R^d$-marginal is isotropic, subgaussian and log-concave.
The algorithm {\sc RobustLearn}, given
 i.i.d. sample access to the distribution $\mcD$,
with probability at least $1-O(\delta)$ returns a hypothesis $h:\R^d \rightarrow \{\pm 1\}$ for which the following hold:
\[\err_{\mcD}(h) \le \opt + O(\eps)\]
\[\AdvRob_{\mcD,r}(h) \le O(r + \eps).\]
The running time and sample complexity are 
$\poly\left(d^{\log^2(1/\eps)/\eps^2}  \cdot \log(1/\delta) \cdot \tau \right)$.
\end{restatable}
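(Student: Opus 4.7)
The plan is to prove this theorem by composing the three intermediate results described in the technical overview, with a union bound over their failure probabilities. The structure of \textsc{RobustLearn} should be: (i) run the polynomial-regression algorithm of \Cref{thm:real-valued-informal} to obtain a degree-$\tilde{O}(1/\eps^2)$ polynomial $p$ with $\err(p) \le \opt + O(\eps)$, $\NS(p) \le O(r+\eps)$, and $\psi(p) \le O(\eps)$; (ii) run the randomized partitioning procedure to pick four thresholds $t_1,\dots,t_4$, a random direction $u$, and intervals $J_1,\dots,J_4$ so that the partitioned-PTF mixture $\tilde h(x) = \sum_i \sign(p(x)-t_i)\,\Ind[\langle x,u\rangle \in J_i]$ still has error at most $\opt + O(\eps)$, distributional noise sensitivity $O(r+\eps)$, and isolation probability $O(\eps)$; and (iii) compose $\tilde h$ with \textsc{RobustnessLCA} (with a single shared random seed) on each partition piece to obtain $h$.

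The accuracy bound will follow by a direct chain: $p$ has $\ell_1$-error $\opt+O(\eps)$, which translates into an average-over-$t$ classification error of $\opt+O(\eps)$ for $\sign(p-t)$ by the standard randomized rounding identity. Carathéodory's theorem yields a four-point mixture with weights achieving all three constraints simultaneously, and the Dasgupta concentration result guarantees that projecting the relevant error/noise-sensitivity/isolation sets onto a random direction $u$ preserves their masses up to $\eps$ on the intervals $J_i$ with high probability. Finally, the local corrector's ``small distance'' guarantee ensures $\Prx[h \ne \tilde h] \le O(\psi(\tilde h) + \eps) = O(\eps)$, which only bumps the classification error by an additional $O(\eps)$.

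For robustness, the local corrector converts noise sensitivity $\alpha = O(r+\eps)$ and isolation probability $O(\eps)$ of $\tilde h$ on the bulk of each partition piece into adversarial boundary volume $O(r+\eps)$ for $h$ inside each piece. The subtle point, and the main obstacle I expect, is handling points $x$ whose $r$-neighborhood straddles two partition cells --- on such points the four sub-PTFs are spliced and the local corrector's guarantees on a single PTF do not directly apply. I would handle this by absorbing these points into the boundary volume: since the partition is defined by a one-dimensional projection onto $u$ and the $\R^d$-marginal is subgaussian isotropic log-concave, the projection is a one-dimensional log-concave density bounded by an absolute constant, so the total measure of points within $r$ of any of the three endpoints of $J_1,\dots,J_4$ is $O(r)$. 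Combining this $O(r)$ boundary contribution with the $O(r+\eps)$ in-cell robustness gives the claimed $\AdvRob_{\mcD,r}(h) \le O(r+\eps)$.

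For the complexity, each of the three components runs in $d^{\tilde O(1/\eps^2)}$ time and draws that many samples; fixing the random seed of \textsc{RobustnessLCA} and hardcoding it into $h$ makes $h$ deterministic and evaluable in $d^{O(k)} \cdot \tau = d^{\tilde O(1/\eps^2)} \cdot \tau$ time per query, matching the claim. The success probability is controlled by taking the union bound over: the concentration of the $\ell_1$ empirical objective in the polynomial regression step, the Carathéodory/LP feasibility after empirical estimation, the Dasgupta projection event, and the ``good seed'' event for the corrector, each of which can be driven below $\delta$ by the standard blowup $\poly(\log(1/\delta))$ in sample complexity.
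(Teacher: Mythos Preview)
Your proposal is correct and follows the same overall approach as the paper. The paper's own proof of \Cref{thm:main} is only a few lines: it directly invokes \Cref{thm:LearnRealValued} to get the polynomial $p$ satisfying the three premises, then invokes \Cref{thm:compute-partition} (which internally packages the Carath\'eodory/LP step, the Dasgupta random-projection partition, the boundary-mass argument, and the \textsc{RobustnessLCA} step), and finally combines the two running-time bounds. You have essentially inlined the content of those two intermediate theorems rather than black-boxing them, but the logical structure is identical, including the handling of points near the interval boundaries via the one-dimensional log-concave anticoncentration bound.

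One small imprecision worth fixing: in the paper the local corrector is applied separately to each PTF $\sign(p-t_i)$, i.e.\ $h(x)=\sum_i \textsc{RobustnessLCA}(x,\sign(p-t_i),r)\cdot\Ind[\langle x,u\rangle\in J_i]$, rather than to the spliced function $\tilde h$. Your phrasing ``compose $\tilde h$ with \textsc{RobustnessLCA} on each partition piece'' is ambiguous, and the distance bound should read as a weighted sum of the per-piece isolation probabilities $\wh{\iso}(\sign(p-t_i),0.8)$ rather than $\psi(\tilde h)$. This matters because the corrector's guarantees (\Cref{lem:robustness-lca}) are stated for a single degree-$k$ PTF, which $\tilde h$ itself is not.
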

\begin{corollary}
By instantiating $\hat{\phi}$ with the algorithm of \Cref{sec: randomized construction of phi hat}, which evaluates queries in $\poly\left(d^{\log^2(1/\eps)/\eps^2}  \cdot \log(1/\delta) \right)$ time, \textsc{RobustLearn} runs in total time $\poly\left(d^{\log^2(1/\eps)/\eps^2}  \cdot \log(1/\delta) \right)$.
\end{corollary}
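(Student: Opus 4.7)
The corollary is essentially a substitution: \Cref{thm:main} already furnishes a running-time expression of the form $\poly(d^{\log^2(1/\eps)/\eps^2} \cdot \log(1/\delta) \cdot \tau)$, where $\tau$ is the per-query cost of the noise-sensitivity approximator $\hat{\phi}$ supplied to \textsc{RobustLearn}. So my plan is to invoke \Cref{thm:main} as a black box and simply verify that the concrete instantiation of $\hat{\phi}$ promised by the referenced construction fits its hypotheses and has a small enough $\tau$.

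First, I would identify the two numerical inputs to \Cref{thm:main}: the PTF degree parameter $k = C\log^2(1/\eps)/\eps^2$ and the approximation accuracy $\eps$. The randomized construction in the cited subsection is claimed to be an $\eps$-accurate local noise sensitivity approximator for degree-$k$ PTFs, running in time $\tau = \poly(d^{\log^2(1/\eps)/\eps^2} \cdot \log(1/\delta))$ per query. (Here I would briefly remark that the degree and accuracy the main algorithm feeds $\hat{\phi}$ are exactly those for which the randomized construction's analysis applies, so there is no mismatch between the advertised guarantee of the approximator and the use made of it by \textsc{RobustLearn}.) Thus the hypotheses of \Cref{thm:main} are satisfied, with this explicit value of $\tau$.

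Second, I would substitute $\tau$ into the running-time bound from \Cref{thm:main} and simplify. Writing $A \coloneqq d^{\log^2(1/\eps)/\eps^2} \cdot \log(1/\delta)$ for brevity, the overall running time is
\[
\poly(A \cdot \tau) \;=\; \poly\bigl(A \cdot \poly(A)\bigr) \;=\; \poly(A),
\]
since a polynomial in a polynomial is still a polynomial in the base quantity. This yields the claimed bound $\poly(d^{\log^2(1/\eps)/\eps^2} \cdot \log(1/\delta))$ for both the total time and (by the same substitution into the sample-complexity bound) the sample complexity.

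The ``hard'' step here is purely bookkeeping: making sure the degree-$k$ PTFs on which \textsc{RobustLearn} actually invokes $\hat{\phi}$ are the same degree-$k$ PTFs for which the randomized $\hat{\phi}$ is analyzed, and that the accuracy parameter used by \textsc{RobustLearn} matches the one supported by the construction. Since both are set to the quantities fixed in the statement of \Cref{thm:main}, no further argument beyond substitution is required; there is no new analytic content in this corollary.
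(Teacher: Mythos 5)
Your proof is correct and matches the paper's (implicit) argument: the corollary is stated without a separate proof precisely because it is the direct substitution of $\tau = \poly\left(d^{\log^2(1/\eps)/\eps^2}\cdot\log(1/\delta)\right)$ for the randomized $\hat{\phi}$ of \Cref{sec: randomized construction of phi hat} into the running-time bound of \Cref{thm:main}. Your bookkeeping about matching the degree $k$ and accuracy parameter is exactly the right (and only) thing to check.
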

\begin{algorithm}
\begin{algorithmic}[1] 
\State \textbf{Input:} sample access to distribution $\mcD$ over $\R^d \times \{\pm 1\}$, error bound $\eps$,\\
\hskip3.5em 
confidence bound $\delta$, robustness radius $r$, \\
\textbf{Uses:}
local noise sensitivity approximator $\hat{\phi}$. \hspace*{\fill} (See \Cref{def: local noise sensitivity approximator})
\State \textbf{Output:} hypothesis $h:\R^d \to \bits$
\State $p:=\textsc{LearnRealValued}(\mcD,
\eps, r)$.
\hspace*{\fill} (See \Cref{alg: LearnRealValued})
\State \Return \textsc{ComputeClassifier}$(\mcD, p,r,\eps,\delta)$.\hspace*{\fill} (See \Cref{alg:ComputeClassifier})
\end{algorithmic}
\caption{$\textsc{RobustLearn}(\mcD, \eps, \delta, r)$:}
\label{alg:RobustLearn}
\end{algorithm}

The algorithm has two ``phases'': \textsc{LearnRealValued}, which solves a convex relaxation of the learning task,
and \textsc{ComputeClassifier}, which rounds the real-valued hypothesis to a Boolean one with small boundary volume.
We present and analyze these algorithms in \Cref{sec: analyzing LearnRealValued} and \Cref{sec:ComputeClassifier} respectively.
\begin{restatable}[Correctness of \Cref{alg: LearnRealValued}, \textsc{LearnRealValued}\label{thm:LearnRealValued}]{theorem}{LearnRealValued}
    Let $r \in (0,1)$, and let $\mcD$ be a distribution over $\R^d\times \{\pm 1\}$ whose $\R^d$-marginal is a subgaussian isotropic log-concave distribution over $\R^d$.
        Let $\mathcal{F}$ be the class of halfspaces on $\R^d$.
        Then, for some sufficiently large absolute constant $C$, the algorithm LearnRealValued, given $\epsilon, \delta$ and sample access to $\mcD$, runs in time $\poly\left({d}^{O(\log^2(1/\epsilon)/\epsilon^2)}\right)$ and will with probability at least $1-O(\delta)$ return a degree-$O(\log^2(1/\epsilon)/\epsilon^2)$ polynomial $p$ for which the following hold:
        \begin{equation}
        \label{eq: p has small average phi hat}
            \Ex_{t \sim [-1,1]} 
            \left[
            \Ex_{(x,y)\sim D}
            \left[\phi_{\sign(p-t),10r}(x)\right]
            \right]
            \leq 100r + O(\epsilon),
        \end{equation}
\begin{equation}
        \label{eq: p has small average psi hat}
\underset{(x,y) \sim \mcD }{\Pr}\left[
\Ex_{t \sim [-1,1]}[\phi_{\sign(p(x)-t),10r}(x)] \ge 0.7\right]
            \leq O(\epsilon),
        \end{equation}
 \begin{equation}
 \label{eq: p has small average error}
            \Ex_{t \sim [-1,1]} 
            \left[\Prx_{(x,y) \sim D}
            \left[\sign(p(x)-t)\neq y\right]\right]
            \leq \opt + O(\epsilon).
        \end{equation}
    \end{restatable}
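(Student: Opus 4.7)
The approach is to recognize \eqref{eq: p has small average phi hat}--\eqref{eq: p has small average error} as three convex functionals of the coefficients of $p$, solve the resulting empirical convex program (minimize the error functional subject to the other two as convex constraints), and certify feasibility with error $\opt+O(\eps)$ by plugging in the halfspace-approximating polynomial of \cite{diakonikolas_bounded_2009}. The enabling identity is the standard randomized-rounding calculation: whenever $p\in[-1,1]$,
\[\Ex_{t\sim[-1,1]}\bigl[\phi_{\sign(p-t),\,10r}(x)\bigr] \;=\; \tfrac{1}{2}\Ex_{z\sim\mcN(0,I_d)}|p(x)-p(x+10rz)| \;=:\; \tilde\phi_p(x),\]
\[\Ex_{t\sim[-1,1]}\bigl[\err_{\mcD}(\sign(p-t))\bigr] \;=\; \tfrac{1}{2}\Ex_{(x,y)\sim\mcD}|p(x)-y|.\]
Thus \eqref{eq: p has small average phi hat} becomes $\Ex_x\tilde\phi_p(x)\le 100r+O(\eps)$, \eqref{eq: p has small average error} becomes $\tfrac12\Ex|p(x)-y|\le\opt+O(\eps)$, and \eqref{eq: p has small average psi hat} becomes $\Pr_x[\tilde\phi_p(x)\ge 0.7]\le O(\eps)$, which is upper-bounded by the convex hinge surrogate $\psi(p):=10\Ex_x[(\tilde\phi_p(x)-0.6)_+]$. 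All three are convex in the coefficients of $p$.

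The algorithm then draws $\poly(d^k)$ i.i.d.\ labeled samples with $k=\Theta(\log^2(1/\eps)/\eps^2)$, and for each draws a few fresh Gaussian perturbations to form Monte Carlo estimators of $\tilde\phi_p(x_i)$; it then minimizes the empirical $\ell_1$ error over degree-$k$ polynomials of bounded coefficient norm subject to the empirical $\ell_1$-noise-sensitivity constraint and the empirical hinge constraint. This is a finite-dimensional convex program solvable in time $\poly(d^k)$. Standard Rademacher bounds over the class of bounded degree-$k$ polynomials (whose complexity grows as $d^k$) yield uniform convergence of the three empirical functionals to their population counterparts up to $\pm\eps$ with probability $1-\delta$, from $\poly(d^k,\log(1/\delta))$ samples.

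Feasibility with error $\opt+O(\eps)$ is witnessed by the planted polynomial $p^*$: let $f^*$ be an optimal halfspace and let $p^*$ be its degree-$k$ $L_2$-approximator from \cite{diakonikolas_bounded_2009}, which under any subgaussian isotropic log-concave marginal satisfies $\Ex[(p^*-f^*)^2]=O(\eps^2)$ and hence $\Ex|p^*-f^*|\le O(\eps)$. The error bound is then immediate. For the noise-sensitivity constraint, the triangle inequality
\[\Ex_{x,z}|p^*(x)-p^*(x+10rz)| \;\le\; 2\Ex_x|p^*(x)-f^*(x)| + \Ex_{x,z}|f^*(x)-f^*(x+10rz)|,\]
combined with the $O(r)$ noise sensitivity of a halfspace under such marginals (and the fact that $\mcD*\mcN(0,(10r)^2 I_d)$ is again log-concave by Prékopa--Leindler with density ratio $O(1)$ to $\mcD$), yields $\Ex_x\tilde\phi_{p^*}(x)=O(r+\eps)$, as required.

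The main obstacle is the isolation/hinge constraint for $p^*$. Since any halfspace has $\phi_{f^*,10r}(x)\le\tfrac12$ pointwise, the same triangle decomposition forces $\tilde\phi_{p^*}(x)>0.6$ only when either $|p^*(x)-f^*(x)|>0.05$ or $\Ex_z|p^*(x+10rz)-f^*(x+10rz)|>0.05$; Chebyshev's inequality applied to the $L_2$-approximation guarantee then bounds each event by $O(\eps^2)$, so $\psi(p^*)=O(\eps)$. Markov on the $L_1$-approximation alone would only give $O(r+\eps)$, which is insufficient when $r>\eps$: extracting the stronger $L_2$ guarantee of DKS and transferring it to the noise-convolved distribution is the subtle step, and tracking constants carefully through the hinge surrogate is where the analysis requires care.
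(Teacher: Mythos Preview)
Your overall architecture matches the paper: solve a convex program with the empirical $\tilde\phi$ and its hinge surrogate as constraints, witness feasibility with the DGJSV polynomial $p^*$, and use the pointwise fact $\phi_{f^*,10r}(x)\le\tfrac12$ for any halfspace $f^*$.

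However, your treatment of the isolation/hinge constraint contains a genuine error. You assert that ``Markov on the $L_1$-approximation alone would only give $O(r+\eps)$'' and that an $L_2$ guarantee is required. This is false. The hinge $\psi(p)=10\,\Ex_x[(\tilde\phi_p(x)-0.6)_+]$ is $10$-Lipschitz in $\tilde\phi_p(x)$, so
\[
\psi(p^*)\;\le\;\psi(f^*)+10\,\Ex_x\bigl|\tilde\phi_{p^*}(x)-\tilde\phi_{f^*}(x)\bigr|
\;=\;0+O\Bigl(\Ex_x|p^*-f^*|+\Ex_{x,z}|p^*(x+10rz)-f^*(x+10rz)|\Bigr),
\]
and both $L_1$ terms are $O(\eps)$ with no dependence on $r$: the paper's approximation claim holds uniformly for any subgaussian log-concave distribution with covariance in $[I,200I]$, and the convolution $\mcD*\mcN(0,(10r)^2I_d)$ has covariance $(1+100r^2)I\preceq 101I$, so the \emph{same} polynomial $p^*$ satisfies $\Ex|p^*-f^*|=O(\eps)$ on the convolved distribution as well. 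This is exactly what the paper does---no Markov, no Chebyshev, no $L_2$. (An $L_2$ bound for DGJSV under general log-concave marginals is not provided in the paper and would need a separate argument.)

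Relatedly, your inference ``$\Pr[\tilde\phi_{p^*}>0.6]\le O(\eps^2)$ hence $\psi(p^*)=O(\eps)$'' is incomplete: since $p^*$ is not $[-1,1]$-valued, $\tilde\phi_{p^*}(x)$ is unbounded, so a tail-probability bound does not by itself control the hinge expectation. A minor additional point: your identity $\Ex_t[\phi_{\sign(p-t)}(x)]=\tilde\phi_p(x)$ is only an equality when $p$ stays in $[-1,1]$; the paper uses the inequality $\Ex_t[\phi_{\sign(p-t)}(x)]\le\tilde\phi_p(x)$ (valid unconditionally) to pass from the constraint to the conclusion.
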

    
\begin{restatable}[Correctness of \Cref{alg:ComputeClassifier}, \textsc{ComputeClassifier}]{theorem}{ComputeClassifier}
\label{thm:compute-partition} 
Let $\eps \ge d^{-1/7}$ and $\delta, r > 0$.
Let $\hat{\phi}$ be an $\epsilon$-accurate local noise sensitivity approximator for degree-$k$ PTFs over $\R^d$.
 Let $\mcD$ be a distribution over $\R^d \times \bits$ 
such that the $\R^d$-marginal is isotropic and log-concave.
Let $p : \R^d \to \R$ be a polynomial of degree at most $k$ satisfying 

\begin{equation}
\label{eq: premise 1 of LearnClassifier}
\Ex_{t \sim [-1,1]}
\left[
\Ex_{(x,y) \sim \mcD}[\phi_{\sign{(p-t)},10r}(x)]\right] \le 100r + O(\epsilon),\end{equation}
\begin{equation}
\label{eq: premise 2 of LearnClassifier}
\underset{(x,y) \sim \mcD }{\Pr}\left[
\Ex_{t \sim [-1,1]}[\phi_{\sign(p(x)-t),10r}(x)] \ge 0.7\right]
\le O(\epsilon).\end{equation}

Then the algorithm \textsc{ComputeClassifier}$(\mcD, p, r, \eps, \delta)$, given sample access to $\mcD$ and query access to $\hat{\phi}$, outputs a 
hypothesis $h:\R^d \to \bits$ such that the following properties hold with probability at least $1-O(\delta)$:
\[\err_{\mcD}(h) \le \Ex_{t \sim [-1,1]}[\err_{\mcD}(\sign(p-t))]+ O(\eps)\]
\[\AdvRob_{\mcD,r}(h) \le O(r + \eps).\]
The running time and number of queries to $\hat{\phi}$ are $\poly(d^k \cdot 1/\eps \cdot \log(1/\delta))$.
\end{restatable}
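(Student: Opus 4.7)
The plan is to decompose the averaged behavior of $\sign(p-t)$ (over $t \sim [-1,1]$) into a small mixture whose pieces can be laid out spatially so that a robustness-smoothing LCA can be applied to each piece independently. Premises \eqref{eq: premise 1 of LearnClassifier} and \eqref{eq: premise 2 of LearnClassifier} control the averaged noise sensitivity and isolation probability, while the target error bound concerns the averaged error. Thus for each $t$, consider the vector $v(t) = (\err_{\mcD}(\sign(p-t)), \NS_{\mcD, 10r}(\sign(p-t)), \psi_{\mcD,10r}(\sign(p-t))) \in \R^3$. Its expectation over $t$ lies inside the convex hull of the curve $\{v(t)\}$, and the three coordinates of $\Ex_t[v(t)]$ are bounded by $\opt + O(\eps)$, $100r + O(\eps)$, and $O(\eps)$ respectively.

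The first step is to apply Carath\'eodory's theorem to this $3$-dimensional convex hull: there exist four thresholds $t_1,\dots,t_4 \in [-1,1]$ and nonnegative weights $w_1,\dots,w_4$ summing to $1$ such that $\sum_i w_i v(t_i) \le \Ex_t[v(t)]$ coordinatewise. The algorithm finds such a mixture by a linear program over empirical estimates of $\err$, $\wh\NS$, and $\wh\psi$ obtained from fresh samples and the approximator $\hat\phi$. Standard Chernoff plus VC bounds on the class of degree-$k$ PTFs (combined with the $\eps$-accuracy of $\hat\phi$) ensure that empirical feasibility implies true feasibility up to $O(\eps)$ slack, using a sample of size $\poly(d^k/\eps)$.

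The second step partitions $\R^d$ by projection onto a random unit vector $u \in \mathbb{S}^{d-1}$ and cuts $\R$ into four consecutive intervals $J_1,\dots,J_4$ whose standard Gaussian mass equals $w_1,\dots,w_4$. The referenced concentration lemma of \cite{dasgupta2006concentration} implies that for any fixed measurable set $S \subseteq \R^d$ with $\mcD$-mass $\mu$, one has $\Pr_{x \sim \mcD}[x \in S \wedge \langle u, x\rangle \in J_i] = w_i \mu \pm \eps/(100)$ with probability at least $1 - \delta/(\mathrm{poly}(d^k))$ over $u$. A union bound over the $O(1)$ relevant sets arising from each $t_i$ (the misclassified set, the high-$\phi$ set, and the LCA-disagreement set, which is itself a function of the internal randomness but can be derandomized by fixing a good seed) then shows that the piecewise hypothesis $h(x) = \sum_i \textsc{RobustnessLCA}(\sign(p - t_i))(x) \cdot \Ind[\langle u, x\rangle \in J_i]$ has error at most $\sum_i w_i \err_{\mcD}(\sign(p-t_i)) + O(\eps) \le \Ex_t[\err_{\mcD}(\sign(p-t))] + O(\eps)$.

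Finally, the robustness guarantee combines two contributions. Inside each piece, the LCA converts $\sign(p - t_i)$ (whose noise sensitivity and isolation probability the LP has bounded by the mixture weights) into a function with adversarial boundary volume $O(\NS(\sign(p-t_i)) + \eps)$ and disagreement $O(\psi(\sign(p-t_i)) + \eps)$ with the input; summing over $i$ gives $O(r + \eps)$ boundary volume and $O(\eps)$ extra error from the LCA. The only additional non-robust points lie within the three slabs $|\langle u, x\rangle - \partial J_i| \le r$, whose total $\mcD$-mass is $O(r)$ since $\mcD$ is isotropic log-concave and $\langle u, x\rangle$ is one-dimensional with bounded density. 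The main obstacle is the simultaneity required in the Dasgupta-style concentration: the LCA's disagreement set depends on the same randomness being analyzed, so one must condition on a good fixed seed for $\textsc{RobustnessLCA}$ (valid with high probability by the local corrector's guarantees) before invoking concentration of $u$ over all relevant sets.
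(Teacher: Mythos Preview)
Your high-level decomposition matches the paper: Carath\'eodory over the three-dimensional vector $(\err,\NS,\iso)$ to get four thresholds, a spatial partition via projection onto a random direction into intervals of Gaussian mass $w_i$, the \textsc{RobustnessLCA} applied on each piece, and the $O(r)$ slab-boundary loss. However, there is a real technical gap in your second step.

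You invoke the Dasgupta--Gupta concentration result as if, for an arbitrary measurable set $S$, the projection of $\mcD|_{S}$ onto a \emph{uniformly random} unit vector $u\in\mathbb{S}^{d-1}$ is close to $\mcN(0,1)$. That theorem requires the distribution being projected to have mean zero (and controlled covariance and a thin-shell support). The conditional distributions you need here --- $\mcD$ conditioned on ``$h_i$ misclassifies'' or on ``$\hat\phi_{\sign(p-t_i)}(x)\le 0.1$'' --- have no reason to be mean-zero, so a uniformly random $u$ will not give intervals of the correct conditional mass. The paper's fix is not cosmetic: it first \emph{estimates} the mean vectors $\hat\mu_i,\hat\mu_i'$ of these eight indicator-weighted distributions from $O(d^3)$ samples, then draws $u$ uniformly from a $d/2$-dimensional subspace \emph{orthogonal} to all of them. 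A separate claim then shows that, after this recentering, the conditional distributions satisfy the hypotheses of the Dasgupta--Gupta theorem with $\beta_1=O(d^{-1/2})$, $\beta_2=O(d^{-1/6}\log d)$ (from log-concave thin-shell concentration), and $\beta_3\ge\eps-O(1/d)$. This is where the assumption $\eps\ge d^{-1/7}$ is actually used: it makes the failure probability of the projection step $\exp(-\Omega(\eps^6 d))$ negligible. Your proposal neither performs this recentering nor explains where $\eps\ge d^{-1/7}$ enters.

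A secondary remark: the obstacle you flag at the end --- that the LCA's disagreement set depends on shared randomness with $u$ --- is not the real issue. In the paper, $\hat\phi$ is given as a fixed (possibly derandomized) oracle, so \textsc{RobustnessLCA} is deterministic once $\hat\phi$ is fixed; the sets being partitioned are therefore fixed before $u$ is drawn. The genuine coupling problem is the one above: making the projection theorem applicable to non-centered conditionals.
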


\subsection{Proof of \Cref{thm:main}.}
In this section we prove the main theorem assuming \Cref{thm:compute-partition} and \Cref{thm:LearnRealValued}.

\begin{proof}[Proof of \Cref{thm:main}]
By \Cref{thm:LearnRealValued}, with probability $1-O(\delta)$, the output $p$ of {\sc LearnRealValued} is a degree-$k$ polynomial satisfying 
       \begin{equation}
        \label{eq: p has small average phi hat restated}
            \Ex_{t \sim [-1,1]} 
            \left[
            \Ex_{(x,y)\sim D}
            \left[\phi_{\sign(p-t),10r}(x)\right]
            \right]
            \leq 100r + O(\epsilon),
        \end{equation}
        \begin{equation}
        \label{eq: p has small average psi hat restated}
        \underset{(x,y) \sim \mcD }{\Pr}\left[
\Ex_{t \sim [-1,1]}[\phi_{\sign(p(x)-t),10r}(x)] \ge 0.7\right]
            \leq O(\epsilon),
        \end{equation}
 \begin{equation}
 \label{eq: p has small average error restated}
            \Ex_{t \sim [-1,1]} 
            \left[\Pr_{(x,y) \sim D}
            \left[\sign(p(x)-t)\neq y\right]\right]
            \leq \opt + O(\epsilon).
        \end{equation}

This polynomial $p$ is the input to {\sc ComputeClassifier}. 
We see that the premises of \Cref{thm:compute-partition} are satisfied by the guarantees of  \Cref{eq: p has small average phi hat restated} and \Cref{eq: p has small average psi hat restated}.
Taking the conclusions of \Cref{thm:compute-partition} and combining it with
\Cref{eq: p has small average error restated}
we see that the hypothesis $h$ we return satisfies
\[\err_{\mcD}(h) \le \Ex_{t \sim [-1,1]}[\err_{\mcD}(\sign(p-t))]+ O(\eps) \leq \opt +O(\epsilon),\]
\[\AdvRob_{\mcD,r}(h) \le O(r + \eps),\]
which concludes the proof of correctness.

We now analyze the sample complexity and running time. By \Cref{thm:LearnRealValued}, {\sc LearnRealValued} takes $d^{O(\log^2(1/\epsilon)/\epsilon^2)}\poly(\log(1/\delta))$ time and samples. 
By \Cref{thm:compute-partition}, since the degree of $p$ is $O(\log^2(1/\epsilon)/\epsilon^2)$ and $\eps \ge d^{-1/7}$, \textsc{ComputeClassifier} takes $d^{O(\log^2(1/\epsilon)/\epsilon^2)} \poly(\log(1/\delta))$ time and queries to $\hat{\phi}$.
Evaluations of $\hat{\phi}$ take time $\tau$, so the total running time is 
\[\poly(d^{\log^2(1/\epsilon)/\epsilon^2)} \cdot \log(1/\delta) \cdot \tau),\]
and all components succeed with probability $\ge 1 - O(\delta)$.
\end{proof}

\section*{Acknowledgments}
We thank Ronitt Rubinfeld for proving insightful comments on a draft of this paper. We are also thankful to Vinod Vaikuntanathan for helpful discussions regarding backdoors and verifiable robustness.

Jane Lange is supported by NSF Awards CCF-2006664 and CCF-2310818, and NSF Graduate Research Fellowships Program, and Arsen Vasilyan is supported in part by NSF AI Institute for Foundations of Machine
Learning (IFML), NSF awards CCF-2006664,
CCF-2310818 and Big George Fellowship. Part of this work was conducted while the authors were visiting the Simons
Institute for the Theory of Computing.

\bibliographystyle{alpha}
\bibliography{references}

\newcommand{\etalchar}[1]{$^{#1}$}
\begin{thebibliography}{GKKW21}

\bibitem[ABL17]{awasthi2017power}
Pranjal Awasthi, Maria~Florina Balcan, and Philip~M Long.
\newblock The power of localization for efficiently learning linear separators with noise.
\newblock {\em Journal of the ACM (JACM)}, 63(6):1--27, 2017.

\bibitem[ACCL08]{ailon2008property}
Nir Ailon, Bernard Chazelle, Seshadhri Comandur, and Ding Liu.
\newblock Property-preserving data reconstruction.
\newblock {\em Algorithmica}, 51:160--182, 2008.

\bibitem[ADV19]{AwasthiDV19}
Pranjal Awasthi, Abhratanu Dutta, and Aravindan Vijayaraghavan.
\newblock On robustness to adversarial examples and polynomial optimization.
\newblock In Hanna~M. Wallach, Hugo Larochelle, Alina Beygelzimer, Florence d'Alch{\'{e}}{-}Buc, Emily~B. Fox, and Roman Garnett, editors, {\em Advances in Neural Information Processing Systems 32: Annual Conference on Neural Information Processing Systems 2019, NeurIPS 2019, December 8-14, 2019, Vancouver, BC, Canada}, pages 13737--13747, 2019.

\bibitem[Ant95]{Anthony95}
Martin Anthony.
\newblock Classification by polynomial surfaces.
\newblock {\em Discrete Appl. Math.}, 61(2):91–103, July 1995.

\bibitem[ARVX12]{alon2012space}
Noga Alon, Ronitt Rubinfeld, Shai Vardi, and Ning Xie.
\newblock Space-efficient local computation algorithms.
\newblock In {\em Proceedings of the twenty-third annual ACM-SIAM symposium on Discrete Algorithms}, pages 1132--1139. SIAM, 2012.

\bibitem[BCM{\etalchar{+}}13]{BiggioCMNSLGR13}
Battista Biggio, Igino Corona, Davide Maiorca, Blaine Nelson, Nedim Srndic, Pavel Laskov, Giorgio Giacinto, and Fabio Roli.
\newblock Evasion attacks against machine learning at test time.
\newblock In Hendrik Blockeel, Kristian Kersting, Siegfried Nijssen, and Filip Zelezn{\'{y}}, editors, {\em Machine Learning and Knowledge Discovery in Databases - European Conference, {ECML} {PKDD} 2013, Prague, Czech Republic, September 23-27, 2013, Proceedings, Part {III}}, volume 8190 of {\em Lecture Notes in Computer Science}, pages 387--402. Springer, 2013.

\bibitem[BLPR19]{bubeck2019adversarial}
S{\'e}bastien Bubeck, Yin~Tat Lee, Eric Price, and Ilya Razenshteyn.
\newblock Adversarial examples from computational constraints.
\newblock In {\em International Conference on Machine Learning}, pages 831--840. PMLR, 2019.

\bibitem[Car07]{Caratheodory1907}
C.~Carathéodory.
\newblock Über den variabilitätsbereich der koeffizienten von potenzreihen, die gegebene werte nicht annehmen. (mit 2 figuren im text).
\newblock {\em Mathematische Annalen}, 64:95--115, 1907.

\bibitem[CBM18]{cullina2018pac}
Daniel Cullina, Arjun~Nitin Bhagoji, and Prateek Mittal.
\newblock Pac-learning in the presence of adversaries.
\newblock {\em Advances in Neural Information Processing Systems}, 31, 2018.

\bibitem[CRK19]{cohen2019certified}
Jeremy Cohen, Elan Rosenfeld, and Zico Kolter.
\newblock Certified adversarial robustness via randomized smoothing.
\newblock In {\em international conference on machine learning}, pages 1310--1320. PMLR, 2019.

\bibitem[Dan15]{daniely2015ptas}
Amit Daniely.
\newblock A ptas for agnostically learning halfspaces.
\newblock In {\em Conference on Learning Theory}, pages 484--502. PMLR, 2015.

\bibitem[DGJ{\etalchar{+}}09]{diakonikolas_bounded_2009}
Ilias Diakonikolas, Parikshit Gopalan, Ragesh Jaiswal, Rocco Servedio, and Emanuele Viola.
\newblock Bounded {Independence} {Fools} {Halfspaces}, February 2009.
\newblock Number: arXiv:0902.3757 arXiv:0902.3757 [cs].

\bibitem[DHV06]{dasgupta2006concentration}
Sanjoy Dasgupta, Daniel Hsu, and Nakul Verma.
\newblock A concentration theorem for projections.
\newblock In {\em Proceedings of the Twenty-Second Conference on Uncertainty in Artificial Intelligence}, pages 114--121, 2006.

\bibitem[DKK{\etalchar{+}}21]{diakonikolas21b}
Ilias Diakonikolas, Daniel~M Kane, Vasilis Kontonis, Christos Tzamos, and Nikos Zarifis.
\newblock Agnostic proper learning of halfspaces under gaussian marginals.
\newblock In Mikhail Belkin and Samory Kpotufe, editors, {\em Proceedings of Thirty Fourth Conference on Learning Theory}, volume 134 of {\em Proceedings of Machine Learning Research}, pages 1522--1551. PMLR, 15--19 Aug 2021.

\bibitem[DKM19]{DiakonikolasKM19}
Ilias Diakonikolas, Daniel Kane, and Pasin Manurangsi.
\newblock Nearly tight bounds for robust proper learning of halfspaces with a margin.
\newblock In Hanna~M. Wallach, Hugo Larochelle, Alina Beygelzimer, Florence d'Alch{\'{e}}{-}Buc, Emily~B. Fox, and Roman Garnett, editors, {\em Advances in Neural Information Processing Systems 32: Annual Conference on Neural Information Processing Systems 2019, NeurIPS 2019, December 8-14, 2019, Vancouver, BC, Canada}, pages 10473--10484, 2019.

\bibitem[DKM20]{DKM20}
Ilias Diakonikolas, Daniel~M. Kane, and Pasin Manurangsi.
\newblock The complexity of adversarially robust proper learning of halfspaces with agnostic noise.
\newblock In {\em Proceedings of the 34th International Conference on Neural Information Processing Systems}, NIPS '20. Curran Associates Inc., 2020.

\bibitem[DKPZ21]{diakonikolas2021optimality}
Ilias Diakonikolas, Daniel~M Kane, Thanasis Pittas, and Nikos Zarifis.
\newblock The optimality of polynomial regression for agnostic learning under gaussian marginals in the sq model.
\newblock In {\em Conference on Learning Theory}, pages 1552--1584. PMLR, 2021.

\bibitem[DKS18]{diakonikolas2018learning}
Ilias Diakonikolas, Daniel~M Kane, and Alistair Stewart.
\newblock Learning geometric concepts with nasty noise.
\newblock In {\em Proceedings of the 50th Annual ACM SIGACT Symposium on Theory of Computing}, pages 1061--1073, 2018.

\bibitem[DMR18]{devroye2018total}
Luc Devroye, Abbas Mehrabian, and Tommy Reddad.
\newblock The total variation distance between high-dimensional gaussians with the same mean.
\newblock {\em arXiv preprint arXiv:1810.08693}, 2018.

\bibitem[DNV19]{degwekar2019computational}
Akshay Degwekar, Preetum Nakkiran, and Vinod Vaikuntanathan.
\newblock Computational limitations in robust classification and win-win results.
\newblock In {\em Conference on Learning Theory}, pages 994--1028. PMLR, 2019.

\bibitem[Duc23]{lecture_notes_about_covering}
John Duchi.
\newblock {VC}-dimension, covering, and packing, 2023.
\newblock Accessed: 2025-05-11 https://web.stanford.edu/class/stats300b/Notes/vc-dimension.pdf.

\bibitem[GKK23]{gollakota2023moment}
Aravind Gollakota, Adam~R Klivans, and Pravesh~K Kothari.
\newblock A moment-matching approach to testable learning and a new characterization of rademacher complexity.
\newblock In {\em Proceedings of the 55th Annual ACM Symposium on Theory of Computing}, pages 1657--1670, 2023.

\bibitem[GKKW21]{gourdeau2021hardness}
Pascale Gourdeau, Varun Kanade, Marta Kwiatkowska, and James Worrell.
\newblock On the hardness of robust classification.
\newblock {\em Journal of Machine Learning Research}, 22(273):1--29, 2021.

\bibitem[GKVZ22]{goldwasser2022planting}
Shafi Goldwasser, Michael~P Kim, Vinod Vaikuntanathan, and Or~Zamir.
\newblock Planting undetectable backdoors in machine learning models.
\newblock In {\em 2022 IEEE 63rd Annual Symposium on Foundations of Computer Science (FOCS)}, pages 931--942. IEEE, 2022.

\bibitem[GM11]{guédon2011}
Olivier Guédon and Emanuel Milman.
\newblock Interpolating thin-shell and sharp large-deviation estimates for isotropic log-concave measures, 2011.

\bibitem[GSS14]{Goodfellow2014ExplainingAH}
Ian~J. Goodfellow, Jonathon Shlens, and Christian Szegedy.
\newblock Explaining and harnessing adversarial examples.
\newblock {\em CoRR}, abs/1412.6572, 2014.

\bibitem[KKMS08]{kalai2008agnostically}
Adam~Tauman Kalai, Adam~R Klivans, Yishay Mansour, and Rocco~A Servedio.
\newblock Agnostically learning halfspaces.
\newblock {\em SIAM Journal on Computing}, 37(6):1777--1805, 2008.

\bibitem[KM18]{kolter18}
Zico Kolter and Aleksander Madry.
\newblock Adversarial robustness --- theory and practice.
\newblock 2018.

\bibitem[LAG{\etalchar{+}}19]{LAGHJ19}
Mathias Lecuyer, Vaggelis Atlidakis, Roxana Geambasu, Daniel Hsu, and Suman Jana.
\newblock Certified robustness to adversarial examples with differential privacy.
\newblock In {\em 2019 IEEE Symposium on Security and Privacy (SP)}, pages 656--672, 2019.

\bibitem[LCWC]{li2018second}
Bai Li, Changyou Chen, Wenlin Wang, and Lawrence Carin.
\newblock Second-order adversarial attack and certifiable robustness.

\bibitem[LRV22]{lange2022properly}
Jane Lange, Ronitt Rubinfeld, and Arsen Vasilyan.
\newblock Properly learning monotone functions via local correction.
\newblock In {\em 2022 IEEE 63rd Annual Symposium on Foundations of Computer Science (FOCS)}, pages 75--86. IEEE, 2022.

\bibitem[MGDS20]{montasser2020efficiently}
Omar Montasser, Surbhi Goel, Ilias Diakonikolas, and Nathan Srebro.
\newblock Efficiently learning adversarially robust halfspaces with noise.
\newblock In {\em International Conference on Machine Learning}, pages 7010--7021. PMLR, 2020.

\bibitem[MHS19]{montasser2019vc}
Omar Montasser, Steve Hanneke, and Nathan Srebro.
\newblock Vc classes are adversarially robustly learnable, but only improperly.
\newblock In {\em Conference on Learning Theory}, pages 2512--2530. PMLR, 2019.

\bibitem[RTVX11]{rubinfeld2011fast}
Ronitt Rubinfeld, Gil Tamir, Shai Vardi, and Ning Xie.
\newblock Fast local computation algorithms.
\newblock {\em arXiv preprint arXiv:1104.1377}, 2011.

\bibitem[SHS{\etalchar{+}}18]{shafahi2018adversarial}
Ali Shafahi, W~Ronny Huang, Christoph Studer, Soheil Feizi, and Tom Goldstein.
\newblock Are adversarial examples inevitable?
\newblock {\em arXiv preprint arXiv:1809.02104}, 2018.

\bibitem[SW14]{saumard2014log}
Adrien Saumard and Jon~A Wellner.
\newblock Log-concavity and strong log-concavity: a review.
\newblock {\em Statistics surveys}, 8:45, 2014.

\bibitem[SZS{\etalchar{+}}14]{SzegedyZSBEGF13}
Christian Szegedy, Wojciech Zaremba, Ilya Sutskever, Joan Bruna, Dumitru Erhan, Ian~J. Goodfellow, and Rob Fergus.
\newblock Intriguing properties of neural networks.
\newblock In Yoshua Bengio and Yann LeCun, editors, {\em 2nd International Conference on Learning Representations, {ICLR} 2014, Banff, AB, Canada, April 14-16, 2014, Conference Track Proceedings}, 2014.

\bibitem[VDVW09]{van2009note}
Aad Van Der~Vaart and Jon~A Wellner.
\newblock A note on bounds for vc dimensions.
\newblock {\em Institute of Mathematical Statistics collections}, 5:103, 2009.

\bibitem[VW97]{vaart1997weak}
AW~van~der Vaart and Jon~A Wellner.
\newblock Weak convergence and empirical processes with applications to statistics.
\newblock {\em Journal of the Royal Statistical Society-Series A Statistics in Society}, 160(3):596--608, 1997.

\end{thebibliography}

\appendix
\section{Further preliminaries}
\subsection{Randomized implementation of $\hat{\phi}$}
\label{sec: randomized construction of phi hat}
{
Let $T$ be a sufficiently large set of i.i.d. samples from $\mcN(0,I_d)$. Below, we show that the choice of $\hat{\phi}$ as the following empirical estimate will satisfy \Cref{def: local noise sensitivity approximator} with high probability over the choice of $T$:
\[
\tilde{\phi}_{f, T,  \eta}(x) = \frac{1}{|T|}\sum_{z \in T}\frac{|f(x)-f(x+\eta z)|}{2}.
\]
}

\begin{fact}[\cite{Anthony95}]
\label{fact:VC-PTF}
The VC dimension of the class of polynomial threshold functions of degree $k$ is $\Theta(\binom{d}{\le k}) = O(d^k)$.
\end{fact}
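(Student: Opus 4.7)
The plan is to lift degree-$k$ polynomials into linear functions via the monomial feature map and then invoke the classical VC-dimension bound for halfspaces. Let $N$ denote the number of monomials in $d$ variables of total degree at most $k$, so that $N = \binom{d+k}{k} = O(d^k)$ (this is the quantity loosely denoted $\binom{d}{\le k}$ in the statement, up to notational conventions). I would define the feature map $\Phi : \R^d \to \R^N$ sending $x$ to the vector of all such monomials evaluated at $x$. Any degree-$k$ polynomial $p$ can then be written as $p(x) = \langle w, \Phi(x) \rangle$ for some $w \in \R^N$, so every degree-$k$ PTF over $\R^d$ is the pullback under $\Phi$ of a linear threshold function over $\R^N$.

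For the upper bound, the class of halfspaces on $\R^N$ is well known to have VC dimension exactly $N+1$. Since the class of degree-$k$ PTFs on $\R^d$ embeds into the halfspace class on $\R^N$ via $\Phi$, a set shattered by degree-$k$ PTFs must have its $\Phi$-image shattered by halfspaces, giving a VC-dimension bound of at most $N+1 = O(d^k)$.

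For the matching lower bound, I would exhibit $N$ shattered points. The key observation is that it suffices to choose points $x^{(1)}, \ldots, x^{(N)} \in \R^d$ whose images $\Phi(x^{(1)}), \ldots, \Phi(x^{(N)})$ are in general position in $\R^N$: by a standard separating-hyperplane argument, every $\pm 1$ labeling of such a set can be realized by some halfspace on $\R^N$, hence by some degree-$k$ PTF on $\R^d$. Genericity of the monomial map makes this easy: the locus of $N$-tuples in $(\R^d)^N$ whose monomial images are linearly dependent is cut out by the vanishing of a nonzero polynomial (a determinant of a generalized Vandermonde matrix), and hence is a proper Zariski-closed subset; so any $N$ points drawn from an absolutely continuous distribution on $\R^d$, or from a suitable explicit Vandermonde-type construction, will satisfy the independence condition.

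The main obstacle, such as it is, lies in verifying the lower bound — concretely, exhibiting or arguing for the existence of $N$ points in $\R^d$ whose monomial embeddings are linearly independent. Once that is in hand, both the VC upper bound and the shattering follow immediately from standard halfspace arguments, yielding the tight $\Theta(N) = \Theta(\binom{d}{\le k}) = O(d^k)$ bound claimed.
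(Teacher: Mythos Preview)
The paper does not actually prove this statement: it is stated as a \emph{Fact} with a citation to \cite{Anthony95} and no accompanying argument. Your proposal supplies the standard proof---linearize via the monomial feature map, invoke the halfspace VC bound for the upper bound, and exhibit $N$ points with linearly independent monomial images for the lower bound---and it is correct. This is essentially the textbook argument one would find in the cited reference, so there is nothing to compare; you have simply filled in what the paper left to the literature.
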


\begin{fact}[Concentration of expectations from VC dimension]
\label{fact:VC-convergence}
Let $\mcC$ be a concept class and $\mcD$ be a distribution over $\R^d$. For some sufficiently large absolute constant $C$, the following is true. With probability at least $1-\delta$ over a sample
$S$ of i.i.d. samples from $\mcD$, with $|S| \ge C \cdot VC(\mcC) \log(1/\delta)/\eps^2$, the following holds:
\[\sup_{f \in \mcC}\left | \Ex_{x \sim \mcD}[f(x)] - \frac{1}{|T|}\sum_{x \in T} f(x) \right | \le \eps.\]
\end{fact}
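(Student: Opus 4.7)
The statement is the classical Vapnik--Chervonenkis uniform convergence theorem, so the plan is to execute the textbook two-step argument: \emph{symmetrization} followed by a \emph{growth-function union bound}. Write $m = |S|$, and recall that since $\mcC$ is a concept class, every $f \in \mcC$ is $\{0,1\}$- or $\{-1,1\}$-valued, so all averages are sums of bounded random variables.

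First I would introduce a ghost sample $S'$ drawn i.i.d.\ from $\mcD$, independent of $S$, with $|S'|=m$, and reduce the question about a population expectation to one about two empirical averages. A short Chebyshev argument shows that, whenever $m \geq 4/\eps^2$,
\[
\Pr_S\!\left[\sup_{f \in \mcC}\left|\Ex_{\mcD}[f] - \tfrac{1}{m}\!\sum_{x \in S} f(x)\right| > \eps\right]
\;\le\; 2\,\Pr_{S,S'}\!\left[\sup_{f \in \mcC}\left|\tfrac{1}{m}\!\sum_{x \in S} f(x) - \tfrac{1}{m}\!\sum_{x \in S'} f(x)\right| > \eps/2\right].
\]

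Next I would condition on the unordered multiset $U = S \cup S'$ of size $2m$. Once $U$ is fixed, the supremum over the (possibly infinite) class $\mcC$ reduces to a supremum over the \emph{projections} of $\mcC$ onto $U$; by the Sauer--Shelah lemma there are at most $(2m)^{VC(\mcC)}$ such projections. For each fixed projection, the randomness remaining in the difference of empirical averages conditional on $U$ comes from the uniformly random partition of $U$ into $S$ and $S'$, which acts as $m$ independent Rademacher signs on the projected values. Hoeffding's inequality bounds each projection's exceedance probability by $2\exp(-m\eps^2/8)$, and a union bound followed by averaging over $U$ yields
\[
\Pr_S\!\left[\sup_{f \in \mcC}\left|\Ex_\mcD[f] - \tfrac{1}{m}\!\sum_{x \in S} f(x)\right| > \eps\right]
\;\le\; 4\,(2m)^{VC(\mcC)}\exp(-m\eps^2/8).
\]

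To finish, I would set $m = C \cdot VC(\mcC)\log(1/\delta)/\eps^2$ with $C$ a sufficiently large absolute constant; elementary manipulation shows the exponential dominates the polynomial once $m\eps^2 \gg VC(\mcC)\log m$, making the right-hand side at most $\delta$. The only step that requires real care is the symmetrization, which needs the uniform lower bound $\Pr_{S'}[|\Ex_{\mcD}[f] - \tfrac{1}{m}\sum_{x \in S'} f(x)| \le \eps/2] \ge \tfrac12$ for every fixed $f \in \mcC$; since $f$ is bounded this follows from Chebyshev as soon as $m \geq 4/\eps^2$, a mild condition dominated by the final choice of $m$. The argument is distribution-free, so no log-concavity is required, and in combination with \Cref{fact:VC-PTF} (giving $VC(\mcC) = O(d^k)$ for degree-$k$ PTFs) it supplies the uniform convergence used elsewhere in the paper.
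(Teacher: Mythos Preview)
The paper does not prove this statement; it is listed as a ``fact'' and used as a black box. Your proposal is the standard symmetrization-plus-growth-function proof of the VC uniform convergence theorem and is essentially correct.

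One small caveat on the closing arithmetic: the inequality $4(2m)^{VC(\mcC)}\exp(-m\eps^2/8) \le \delta$ requires roughly $m\eps^2 \gtrsim VC(\mcC)\log m + \log(1/\delta)$, and this is \emph{not} implied by $m = C \cdot VC(\mcC)\log(1/\delta)/\eps^2$ alone when $\delta$ is a moderate constant while $VC(\mcC)$ or $1/\eps$ are large (the $VC(\mcC)\log m$ term then dominates $C\cdot VC(\mcC)\log(1/\delta)$ regardless of $C$). The sample-complexity form written in the paper is in fact slightly nonstandard; the textbook bound is additive, $m \ge C\,(VC(\mcC) + \log(1/\delta))/\eps^2$, obtained via chaining, or $m \ge C\,(VC(\mcC)\log(1/\eps) + \log(1/\delta))/\eps^2$ from the argument you wrote. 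In every place the paper invokes this fact the sample sizes are $d^{O(k)}\log(1/\delta)/\eps^2$ with $VC(\mcC) = O(d^k)$, so the missing $\log m$ slack is absorbed and the discrepancy is immaterial to the paper's results.
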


We will call a sample set ``representative'' of distribution $\mcD$ for a concept class if it satisfies the above success condition.

\begin{claim}[Representative samples provide accurate noise sensitivity estimates]
\label{clm:phi-samples-good}
Let $\mcC$ be the class of degree-$k$ PTFs.
Let $T$ be a set of points in $\R^d$ that is representative for $\mcN(0,I_d)$ for $\mcC$ (in the sense of \Cref{fact:VC-convergence}).
Then with probability $\ge 1 -\delta$ over $T$, the following holds for all $f \in \mcC$, $x \in \R^d$, and $\eta \in (0,1)$: 
\[\abs{\tilde{\phi}_{f,T,\eta}(x) - \phi_{f,\eta}(x)} \le \eps.\]
\end{claim}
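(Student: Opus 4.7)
The plan is to reduce the pointwise approximation $\tilde\phi_{f,T,\eta}(x) \approx \phi_{f,\eta}(x)$ to the standard uniform convergence statement for the class $\mcC$ of degree-$k$ PTFs on $\R^d$, by noticing that for fixed $x$ and $\eta$, the function of $z$ being averaged is itself (a linear function of) a PTF in $\mcC$. Since $f$ takes values in $\{\pm 1\}$, the identity $\tfrac{1}{2}|a-b| = (1-ab)/2$ for $a, b \in \{\pm 1\}$ gives
\[\phi_{f,\eta}(x) = \tfrac{1}{2}\bigl(1 - f(x)\cdot \Ex_{z \sim \mcN(0,I_d)}[f(x+\eta z)]\bigr), \qquad \tilde\phi_{f,T,\eta}(x) = \tfrac{1}{2}\bigl(1 - f(x)\cdot \tfrac{1}{|T|}\textstyle\sum_{z \in T} f(x+\eta z)\bigr).\]
Subtracting, one obtains $|\tilde\phi_{f,T,\eta}(x) - \phi_{f,\eta}(x)| = \tfrac{1}{2}\bigl|\Ex_z[f(x+\eta z)] - \tfrac{1}{|T|}\sum_{z \in T} f(x+\eta z)\bigr|$, so it suffices to uniformly control the empirical deviation of $f(x+\eta z)$ viewed as a function of $z$, simultaneously over every choice of $(f, x, \eta)$.

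Next, I would observe that for each fixed $(f, x, \eta)$ the map $g_{f,x,\eta}: z \mapsto f(x+\eta z)$ itself lies in $\mcC$: writing $f = \sign(p)$ with $\deg(p) \le k$, the composition $p(x+\eta z)$ is a degree-$k$ polynomial in the variable $z$, so $g_{f,x,\eta}(z) = \sign(p(x+\eta z))$ is a degree-$k$ PTF on $\R^d$. Hence the triply-indexed family $\{g_{f,x,\eta} : f \in \mcC,\, x \in \R^d,\, \eta \in (0,1)\}$ is contained in $\mcC$, and a single uniform convergence bound for $\mcC$ handles all parameter choices at once.

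Finally, I would combine Fact~\ref{fact:VC-PTF} (which gives $\mathrm{VC}(\mcC) = O(d^k)$) with Fact~\ref{fact:VC-convergence} applied at accuracy $2\eps$: with probability at least $1 - \delta$ over an i.i.d.\ draw of $T$ from $\mcN(0,I_d)$ of size $|T| = \Omega(d^k \log(1/\delta)/\eps^2)$, i.e., whenever $T$ is representative in the sense of the hypothesis, every $g \in \mcC$ satisfies $|\Ex_z[g(z)] - \tfrac{1}{|T|}\sum_{z \in T} g(z)| \le 2\eps$. Specializing this to $g = g_{f,x,\eta}$ and dividing by two yields $|\tilde\phi_{f,T,\eta}(x) - \phi_{f,\eta}(x)| \le \eps$ for all $f \in \mcC$, $x \in \R^d$, and $\eta \in (0,1)$ simultaneously, as required. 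The only real ``obstacle'' is spotting that the translated and rescaled function $z \mapsto f(x+\eta z)$ is itself a degree-$k$ PTF in $z$; once that identification is made, the seemingly difficult uniformity over the continuous parameters $x$ and $\eta$ collapses to an off-the-shelf VC-dimension uniform convergence argument, requiring no union bound or epsilon-net.
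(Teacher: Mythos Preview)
Your proposal is correct and follows essentially the same approach as the paper: both reduce to uniform convergence over $\mcC$ by observing that for each fixed $(f,x,\eta)$ the function of $z$ being averaged is itself a degree-$k$ PTF. The only cosmetic difference is that the paper packages this as $g(z) = -f(x)\cdot f(x+\eta z)$ while you use $g_{f,x,\eta}(z) = f(x+\eta z)$; since multiplying by the constant sign $-f(x)$ preserves membership in $\mcC$, the two are equivalent, and your version is arguably slightly cleaner.
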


\begin{proof}
Observe that for any fixed $x$, the function $g(z) = -f(x) \cdot f(x + \eta z)$
is a member of $\mcC$.
Thus by \Cref{fact:VC-convergence}, we have with probability $1-\delta$
that 
\[\abs{\Ex_{z \sim T}[g(z)] - \Ex_{z \sim \mcN(0,I_d)}[g(z)]} \le \eps.\]
The claim follows from observing that $\phi_{f,r}(x) = \frac 12 (\E[g(z)] + 1)$, and thus
\[\abs{\Tilde{\phi}_{f,T,\eta}(x) - \phi_{f,\eta}(x)} = \frac 12 \abs{\Ex_{z \sim T}[g(z)] - \Ex_{z \sim \mcN(0,I_d)}[g(z)]} \le \eps.\]
\end{proof}

\subsection{Distances and errors}

The following facts about the error and local noise sensitivity 
after rounding a real-valued function to Boolean 
follow directly from linearity of expectation.
\begin{fact}[Randomized rounding]
\label{fact:rounding}
Let $f$ and $g$ be real-valued functions.
Let the rounded function $f_t$ be defined as 
\[f_t(x) = \sign(f(x) - t),\]
and likewise for $g_t$.
Then,
\[\Ex_{t \sim [-1,1]}\Brac{\Prx_{(x,y) \sim \mcD}[f_t(x) \ne y]} \leq  \Ex_{(x,y) \sim \mcD}\Brac{\frac{|f(x) - y|}{2}}.\]
\[\Ex_{t \sim [-1,1]}\Brac{\Prx_{(x,y) \sim \mcD}[f_t(x) \ne g_t(x)]} \leq  \Ex_{(x,y) \sim \mcD}\Brac{\frac{|f(x) - g(x)|}{2}}.\]
Similarly,\footnote{See \Cref{sec: randomized construction of phi hat} for the definition of $\tilde{\phi}$.}
\[\Ex_{t \sim [-1,1]}[\phi_{f_t, \eta}(x)] \le \phi_{f,\eta}(x)\quad \text{and} \Ex_{t \sim [-1,1]}[\tilde{\phi}_{f_t, T,\eta}(x)] \le \tilde{\phi}_{f,T,\eta}(x)\]
for any $x$, $\eta$ and $T$.
\end{fact}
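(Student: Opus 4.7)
The plan is to establish all four inequalities by the same two-step recipe: first prove a pointwise statement for each fixed argument (and, for the noise-sensitivity bounds, for each fixed noise sample $z$), then lift it to the stated expectation by Fubini's theorem. All four bounds reduce to a one-line computation of the measure of a threshold interval inside $[-1, 1]$.

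For the first inequality, I would fix $(x, y)$ with $y \in \bits$ and view $\Prx_{t \sim [-1,1]}[\sign(f(x) - t) \ne y]$ as the normalized Lebesgue measure of the set of thresholds $t \in [-1,1]$ that produce a sign mismatch with $y$. A short case analysis on the sign of $y$ shows this measure equals $|f(x) - y|/2$ exactly when $f(x) \in [-1,1]$, and is strictly smaller when $f(x)$ leaves this interval, since the ``mismatch region'' for $t$ gets clipped by the endpoints $\pm 1$. Swapping the $t$- and $(x,y)$-expectations via Fubini then completes the argument. The second inequality is structurally identical with $g(x)$ in place of $y$: for fixed $x$, the probability that the two rounded functions disagree is the clipped measure of $t$-values lying between $f(x)$ and $g(x)$, which is at most $|f(x) - g(x)|/2$.

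For the third inequality, I would expand the Boolean-valued form
\[
\phi_{f_t, \eta}(x) = \Ex_{z \sim \mcN(0, I_d)}\left[\Ind[f_t(x) \ne f_t(x + \eta z)]\right],
\]
swap the $t$- and $z$-expectations, and for each fixed $z$ apply the pointwise bound from the second step to the scalars $f(x)$ and $f(x + \eta z)$. Integrating over $z$ collapses the right-hand side to $\Ex_z[|f(x) - f(x + \eta z)|/2]$, which is the real-valued definition of $\phi_{f,\eta}(x)$. The fourth inequality is identical, but averages over the finite set $T$ in place of sampling $z \sim \mcN(0, I_d)$; the swap step is now a finite linearity of expectation rather than a Fubini-type argument, and uses nothing about $T$.

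The only real obstacle is bookkeeping when $f(x)$ or $g(x)$ leaves $[-1, 1]$; but in that regime the clipping phenomenon pushes the inequality in the direction we want, so no additional care is needed. No tool beyond Fubini's theorem and elementary one-dimensional integration is required.
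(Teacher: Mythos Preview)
Your proposal is correct and matches the paper's approach: the paper simply asserts that these facts ``follow directly from linearity of expectation,'' and your Fubini-plus-pointwise-interval-computation argument is exactly the fleshed-out version of that one-line justification. The clipping observation you make for values outside $[-1,1]$ is the right way to handle the inequality (as opposed to equality) and requires no further work.
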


Finally, the following is a standard fact, see for example \cite{devroye2018total}.
\begin{fact}[TV distance between one-dimensional Gaussians]
\label{fact: TV distance between Gaussians}
    For all $\mu_1, \mu_2$ in $\R$ and $\sigma_1, \sigma_2$ in $\R_{>0}$ we have
    \[
    \dtv(\mathcal{N}(\mu_1, \sigma_1^2),\mathcal{N}(\mu_2, \sigma_2^2))
    \leq
    \frac{3|\sigma_1^2-\sigma_2^2|}{2\sigma_1^2}+\frac{|\mu_1-\mu_2|}{2\sigma_1}
    \]
\end{fact}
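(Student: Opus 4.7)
My plan is to split the bound into a mean-shift piece and a variance-shift piece via the triangle inequality for $\dtv$:
\[\dtv(\mathcal{N}(\mu_1,\sigma_1^2), \mathcal{N}(\mu_2,\sigma_2^2)) \le \dtv(\mathcal{N}(\mu_1,\sigma_1^2), \mathcal{N}(\mu_2,\sigma_1^2)) + \dtv(\mathcal{N}(\mu_2,\sigma_1^2), \mathcal{N}(\mu_2,\sigma_2^2)),\]
and then argue that the first summand contributes the $|\mu_1-\mu_2|/(2\sigma_1)$ term and the second contributes the $3|\sigma_1^2-\sigma_2^2|/(2\sigma_1^2)$ term.

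The mean-shift term is essentially immediate. The two densities have the same shape and cross at $x = (\mu_1+\mu_2)/2$, so direct integration gives $\dtv(\mathcal{N}(\mu_1,\sigma_1^2), \mathcal{N}(\mu_2,\sigma_1^2)) = 2\Phi(|\mu_1-\mu_2|/(2\sigma_1)) - 1$, where $\Phi$ is the standard normal CDF. Using the elementary bound $\Phi(t) - 1/2 \le t/\sqrt{2\pi}$ for $t \ge 0$ (since $\Phi'(t) \le 1/\sqrt{2\pi}$), this is at most $|\mu_1-\mu_2|/(\sqrt{2\pi}\sigma_1) \le |\mu_1-\mu_2|/(2\sigma_1)$, as required.

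For the variance-shift term I would proceed analytically, reparameterizing the family of one-dimensional Gaussians with common mean $\mu_2$ by $\tau = \log \sigma^2$. Writing $p_\tau$ for the density of $\mathcal{N}(\mu_2, e^\tau)$, a short differentiation gives $\partial_\tau p_\tau(x) = p_\tau(x) \cdot \tfrac{1}{2}\bigl((x-\mu_2)^2/e^\tau - 1\bigr)$, whence $\|\partial_\tau p_\tau\|_1 = \tfrac{1}{2}\,\mathbb{E}_{Z \sim \mathcal{N}(0,1)}[|Z^2-1|] =: c$, which a routine computation shows is strictly less than $1$. Integrating along the interval from $\tau_1 = \log \sigma_1^2$ to $\tau_2 = \log \sigma_2^2$ and using $\dtv(P,Q) = \tfrac{1}{2}\|P-Q\|_1$ yields
\[\dtv(\mathcal{N}(\mu_2,\sigma_1^2), \mathcal{N}(\mu_2,\sigma_2^2)) \le \tfrac{c}{2}\,|\log(\sigma_2^2/\sigma_1^2)|.\]

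The main obstacle is converting this logarithmic estimate into the stated linear one, with the asymmetric normalization $\sigma_1^2$ sitting in the denominator. I would handle this by a short case split. When $\sigma_2^2 \ge \sigma_1^2$, the inequality $\log(1+u) \le u$ yields $|\log(\sigma_2^2/\sigma_1^2)| \le (\sigma_2^2-\sigma_1^2)/\sigma_1^2$ directly. When $\sigma_2^2 < \sigma_1^2$, split further: if $\sigma_2^2 < \sigma_1^2/3$ then $3|\sigma_1^2-\sigma_2^2|/(2\sigma_1^2) > 1$ and the claim is trivial, while if $\sigma_2^2 \ge \sigma_1^2/3$ then $\log(\sigma_1^2/\sigma_2^2) \le (\sigma_1^2-\sigma_2^2)/\sigma_2^2 \le 3(\sigma_1^2-\sigma_2^2)/\sigma_1^2$. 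Since $c < 1$, the variance-shift contribution is in every case bounded by $3|\sigma_1^2-\sigma_2^2|/(2\sigma_1^2)$, and summing with the mean-shift contribution completes the proof.
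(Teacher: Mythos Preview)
Your proof is correct. The paper does not actually prove this statement; it records it as a standard fact and cites \cite{devroye2018total} for a reference, so there is no ``paper's own proof'' to compare against in detail.

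That said, your argument is a clean self-contained derivation. The triangle-inequality split into a mean-shift and a variance-shift is exactly the natural decomposition (and is the one used in the cited reference). Your mean-shift bound via $2\Phi(t)-1 \le 2t/\sqrt{2\pi} \le t$ is standard. For the variance-shift, your path-integration in the $\tau = \log\sigma^2$ parameterization is a nice touch: it gives the clean constant $c = \tfrac{1}{2}\mathbb{E}|Z^2-1| = 2\phi(1) = \sqrt{2/\pi}\,e^{-1/2} \approx 0.484$, and your case split to convert the logarithmic estimate into the asymmetric linear bound $3|\sigma_1^2-\sigma_2^2|/(2\sigma_1^2)$ is correct in all three cases (in particular, in the middle case $\sigma_1^2/3 \le \sigma_2^2 < \sigma_1^2$ you need $3c/2 < 3/2$, which holds since $c<1$). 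The cited reference instead bounds the variance-shift term directly by computing where the two densities cross and integrating, arriving at a bound via the chi-squared CDF; your approach is arguably more transparent and immediately explains why the constant can be taken below $3/2$.
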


\subsection{Miscellaneous}
We say that $f$ is a \textit{randomized Boolean function} over $\R^d$ if $f$ maps every point in $\R^d$ to a random variable $f(x)$ over $\{0,1\}$. Sometimes we will write $f(x)=\text{Ber}(p)$, to denote that $f(x)=1$ with probability $p$ and $0$ with probability $1-p$. In \Cref{sec: misc uniform convergence}, miscellaneous claims about uniform convergence are shown, which are used in the analysis of \Cref{alg: LearnRealValued}.

\section{Pseudocode and analysis of LearnRealValued}
In this section we present and analyze our algorithm for learning a polynomial with small noise sensitivity and isolation probability.

\begin{algorithm}
\begin{algorithmic}[1]
\caption{$\textsc{LearnRealValued}(D, \eps, r)$:}

\State \textbf{Input:} Sample access to distribution $D$, error bound $\eps$, robustness radius $r$
\State \textbf{Output:} hypothesis $h:\bits^d \to \R$
\State Set $k \coloneqq \frac{C \log^2 (1/\epsilon)}{\epsilon^2}$.
\For{$i$ in $\{1,\cdots, \log(1/\delta)\}$}
\State $S_i \coloneqq \{ Cd^{Ck} \cdot \log(1/\delta)/\eps^2\text{ i.i.d. draws from }D\}$.
\State $T_i \coloneqq \{Cd^{Ck} \cdot \log(1/\delta)/\eps^2\text{ i.i.d. draws from }\mcN(0,I_d)\}$
\State Let $p_i$ be the output of the following convex program:
\begin{itemize}
		\item Domain: polynomials over $\R^d$ of degree $k$.
		\item Minimize: $\sum_{(x,y) \in S} |p(x) - y|$ 
		\item Constraints:
				\begin{itemize}
                \item Define $\Tilde{\phi}(x)_{p,T,10r}:=\frac{1}{|T|} \sum_{z \in T} \frac{\abs{f(x) - f(x + \eta z)}}{2}$
                \item Define $\Tilde{\psi}_{p,T,10r}:=10\cdot\Ind[\Tilde{\phi}(x)_{p,T,10r}>0.6]\cdot(\Tilde{\phi}(x)_{p,T,10r}-0.6)$
						\item Constraint 1: $\frac{1}{|S|} \sum_{x \in S} \Tilde{\phi}_{p,T,10r}(x) \le 100 r + \eps$
						\item Constraint 2:  $\frac{1}{|S|} \sum_{x \in S} \Tilde{\psi}_{p,T,10r}(x) \le \eps$
				\end{itemize}
\end{itemize}
\EndFor
\State $i^*\coloneqq \argmin _i \left( \E_{(x,y) \sim S_i}
            \left[\frac{\abs{p_i(x)-y}}{2}\right] \right)$
\State $p:=p_{i^*}$.
\State \Return $p$
\label{alg: LearnRealValued}
\end{algorithmic}
\end{algorithm}
\subsection{Facts about log-concave distributions.}

\begin{definition}
\label{fact: sub-Gaussian are concentrated}
   A distribution $D$ over $\R^d$ is is called $C$-sub-Gaussian, if for any unit vector $u$: 
\[\Pr_{x\sim D}[|u \cdot x|>t]\leq \exp(-C t^{2})\] for any $t\geq 0$. 
We say $D$ is sub-Gaussian, if $D$ is $C$-sub-Gaussian for some absolute constant $C$. 
\end{definition}

See e.g. \cite{saumard2014log} for the following fact:
\begin{fact}
\label{fact: log-concave under convolution}
    If $D_1$ and $D_2$ are log-concave distributions over $\R^d$, then the convolution of $D_1$ and $D_2$ is also a log-concave distribution over $\R^d$.
\end{fact}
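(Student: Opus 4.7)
The plan is to express the convolution density as the marginal of a suitably chosen function on a product space, and then invoke the classical fact that marginals of log-concave functions are log-concave (Prékopa's theorem).

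First, I would assume both $D_1$ and $D_2$ admit log-concave densities $f_1$ and $f_2$; the general measure-theoretic case reduces to this by a standard mollification argument (convolve with a narrow Gaussian, then take limits, using that the pointwise limit of log-concave functions is log-concave where it is finite). Writing the convolution density as
\[
h(z) \;=\; \int_{\R^d} f_1(x)\, f_2(z-x)\, dx,
\]
I would define the auxiliary function $F : \R^d \times \R^d \to \R_{\geq 0}$ by $F(x,z) := f_1(x)\, f_2(z-x)$, so that $h(z) = \int F(x,z)\, dx$ is the marginal of $F$ in the $z$-variable.

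The next step is to verify that $F$ is log-concave on $\R^{2d}$. The map $(x,z) \mapsto f_1(x)$ is log-concave because log-concavity is preserved under extending the domain by a dummy coordinate. Similarly, $(x,z) \mapsto f_2(z-x)$ is log-concave because log-concavity is preserved under affine maps, and $(x,z) \mapsto z-x$ is affine. Since the product of two non-negative log-concave functions is log-concave (the logarithm turns the product into a sum of concave functions), $F$ is log-concave on $\R^{2d}$.

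Finally, I would apply Prékopa's theorem, which states that if $F : \R^{n+m} \to \R_{\geq 0}$ is log-concave, then its marginal $\int F(x, \cdot)\, dx$ is log-concave on $\R^m$. Applied to our $F$, this gives that $h$ is log-concave, completing the proof. The only nontrivial ingredient is Prékopa's theorem itself, which is the real content; it is a direct consequence of the Prékopa--Leindler inequality applied slicewise in the marginalized variable. Since the problem reduces entirely to quoting this well-known inequality, there is no genuine obstacle beyond citing the appropriate classical reference, which is exactly why the paper states this as a fact rather than proving it.
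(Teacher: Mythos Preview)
Your proof is correct and is the standard argument via Pr\'ekopa's theorem. The paper itself does not give a proof; it simply states this as a known fact with a reference to the survey \cite{saumard2014log}, so your write-up fills in exactly what such a citation would point to.
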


Likewise, the following fact follows directly from the definition of a sub-Gaussian distribution:
\begin{fact}
\label{fact: subGaussian under convolution}
    If $D_1$ and $D_2$ are sub-Gaussian distributions over $\R^d$, then the convolution of $D_1$ and $D_2$ is also a sub-Gaussian distribution over $\R^d$.
\end{fact}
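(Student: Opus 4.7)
The plan is to reduce the $d$-dimensional claim to a one-dimensional statement about sums of independent sub-Gaussian random variables, and then to dispatch the latter by standard moment-generating-function arguments. Let $X_1 \sim D_1$ and $X_2 \sim D_2$ be independent draws, so that $X := X_1 + X_2$ is distributed as the convolution $D_1 \ast D_2$. For an arbitrary unit vector $u \in \R^d$, one has
\[u \cdot X = (u \cdot X_1) + (u \cdot X_2),\]
and the hypothesis that each $D_i$ is $C_i$-sub-Gaussian gives, per \Cref{fact: sub-Gaussian are concentrated}, the one-dimensional tail bounds $\Pr[|u \cdot X_i|>t] \le \exp(-C_i t^2)$. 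Hence it suffices to show that the sum of two independent one-dimensional random variables with such tail bounds again satisfies $\Pr[|Z_1+Z_2|>t] \le \exp(-C' t^2)$ for some $C'$ depending only on $C_1$ and $C_2$.

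For the one-dimensional step I would pass through moment generating functions. The tail hypothesis on each $u \cdot X_i$ implies, via the standard equivalence between two-sided sub-Gaussian tail bounds and moment-generating-function bounds, that
\[\Ex\!\left[\exp\!\left(\lambda \bigl(u \cdot X_i - \Ex[u \cdot X_i]\bigr)\right)\right] \le \exp(K_i \lambda^2) \quad \text{for all } \lambda \in \R,\]
with $K_i$ depending only on $C_i$, and integrating the tail gives $|\Ex[u \cdot X_i]| \le O(1/\sqrt{C_i})$ as well. Independence then yields
\[\Ex\!\left[\exp\!\left(\lambda\bigl(u \cdot X - \Ex[u \cdot X]\bigr)\right)\right] \le \exp\bigl((K_1 + K_2)\lambda^2\bigr),\]
and applying the Chernoff inequality and optimizing over $\lambda$ produces a sub-Gaussian tail bound on $u \cdot X - \Ex[u \cdot X]$. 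Absorbing the bounded mean shift $\Ex[u \cdot X]$ then recovers the desired tail bound on $u \cdot X$ with an appropriate constant $C'$ depending only on $C_1$ and $C_2$.

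The main obstacle is the bookkeeping of constants, since \Cref{fact: sub-Gaussian are concentrated} requires the strict form $\exp(-C t^2)$ with no multiplicative prefactor, whereas both the MGF-plus-Chernoff route and the direct union bound $\Pr[|u \cdot X|>t] \le 2\exp(-\min(C_1,C_2)\,t^2/4)$ naturally produce a leading factor of $2$. This factor is absorbed by taking $C'$ to be a small absolute-constant multiple of $\min(C_1,C_2)$: for $t$ large enough that $2 e^{-\min(C_1,C_2)\,t^2/4} \le e^{-C' t^2}$, the inequality is immediate, while for smaller $t$ the target $\exp(-C' t^2)$ remains close to $1$ and is verified directly from the hypothesis by controlling the near-origin behavior of $u \cdot X$ inherited from the sub-Gaussianity of $u \cdot X_1$ and $u \cdot X_2$.
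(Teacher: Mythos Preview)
The paper offers no proof of this fact beyond the remark that it ``follows directly from the definition of a sub-Gaussian distribution,'' so your proposal is already far more detailed than anything in the paper. Your strategy---project onto a unit vector to reduce to one dimension, then combine the two independent sub-Gaussian summands via either the union bound or the MGF route---is the standard argument and is what the paper presumably has in mind.

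The only soft spot is your last paragraph, where you absorb the leading factor of $2$ to match the paper's prefactor-free form $\Pr[|u\cdot X|>t]\le e^{-C't^2}$. Your large-$t$ treatment is fine, but the small-$t$ clause (``verified directly from the hypothesis by controlling the near-origin behavior'') is vague; note in particular that the crude independence bound
\[
\Pr[|u\cdot X|\le t]\;\ge\;\Pr[|u\cdot X_1|\le t/2]\,\Pr[|u\cdot X_2|\le t/2]\;\gtrsim\; t^4
\]
is not by itself strong enough to yield $\Pr[|u\cdot X|>t]\le e^{-C't^2}$ near $t=0$. In the paper's only use of this fact one of the two distributions is a Gaussian, so the convolution has an everywhere-positive density and the small-$t$ regime is immediate; but as a freestanding statement under the paper's literal definition, this step would need a sharper argument than you have sketched.
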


The following fact can be found in \cite{saumard2014log}: 
\begin{fact}
\label{fact: log-concave closed under projections}
   If a distribution $D$ over $\R^d$ is isotropic log-concave, then the projection of $D$ on any linear subspace in $\R^d$ is likewise log-concave. 
\end{fact}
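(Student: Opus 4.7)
The plan is to reduce ``projection onto an arbitrary subspace'' to ``marginalization over the last $d-k$ coordinates'' via an orthogonal change of basis, and then invoke the Pr\'ekopa--Leindler inequality to establish that the marginal of a log-concave density is log-concave. Neither step will use isotropy, which is consistent with the fact that the conclusion of \Cref{fact: log-concave closed under projections} only asserts log-concavity.

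For the change of basis, let $V \subseteq \R^d$ be a $k$-dimensional linear subspace and extend an orthonormal basis of $V$ to an orthonormal basis of $\R^d$, yielding an orthogonal matrix $Q \in O(d)$ whose first $k$ columns span $V$. The pushforward of $\mcD$ under the map $x \mapsto Q^\top x$ has density $\tilde f(y) = f(Qy)$, where $f$ is the density of $\mcD$; since $Q$ is affine and the Jacobian is a positive constant, $\log \tilde f = \log f \circ Q + \mathrm{const}$ remains concave, so $\tilde f$ is log-concave on $\R^d$. Moreover, under this identification, the projection of $\mcD$ onto $V$ corresponds to the marginal of $\tilde f$ onto the first $k$ coordinates. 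So it suffices to show that whenever $\tilde f : \R^k \times \R^{d-k} \to \R_{\geq 0}$ is log-concave, the marginal
\[ g(x) \;\coloneqq\; \int_{\R^{d-k}} \tilde f(x, z)\,dz \]
is log-concave on $\R^k$.

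To prove this, I would apply the Pr\'ekopa--Leindler inequality: if $h_1, h_2, h_3 : \R^{d-k} \to \R_{\geq 0}$ satisfy $h_3\bigl((1-\lambda)z_1 + \lambda z_2\bigr) \geq h_1(z_1)^{1-\lambda} h_2(z_2)^{\lambda}$ for all $z_1, z_2$ and some $\lambda \in [0,1]$, then $\int h_3 \geq \bigl(\int h_1\bigr)^{1-\lambda}\bigl(\int h_2\bigr)^{\lambda}$. Fix $x_1, x_2 \in \R^k$ and $\lambda \in [0,1]$, and set $h_i(\cdot) \coloneqq \tilde f(x_i, \cdot)$ for $i=1,2$ and $h_3(\cdot) \coloneqq \tilde f\bigl((1-\lambda)x_1 + \lambda x_2, \cdot\bigr)$. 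The hypothesis of Pr\'ekopa--Leindler follows immediately from joint log-concavity of $\tilde f$ on $\R^k \times \R^{d-k}$ applied to the points $(x_1, z_1)$ and $(x_2, z_2)$. The conclusion reads precisely
\[ g\bigl((1-\lambda)x_1 + \lambda x_2\bigr) \;\geq\; g(x_1)^{1-\lambda}\, g(x_2)^{\lambda}, \]
which is log-concavity of $g$.

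The only nontrivial ingredient is Pr\'ekopa--Leindler itself, which is a classical result and the standard engine for all such marginalization statements; everything else is routine. A minor bookkeeping point is to handle the possibility that $g(x_1)$ or $g(x_2)$ vanishes (in which case the inequality is trivial), and to verify measurability of $g$, both of which are standard. Since no step of the argument relied on the mean-zero or identity-covariance conditions, the same proof works for every log-concave distribution on $\R^d$, not just isotropic ones.
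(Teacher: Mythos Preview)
Your proof is correct and follows the classical route: reduce projection to coordinate marginalization by an orthogonal change of basis (which preserves log-concavity), then apply Pr\'ekopa--Leindler to conclude that the marginal of a log-concave density is log-concave. You also correctly observe that isotropy plays no role.

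As for comparison with the paper: the paper does not supply a proof of this fact at all. It is stated as a known result and attributed to \cite{saumard2014log}. So there is no ``paper's proof'' to compare against; you have simply filled in the standard argument (Pr\'ekopa's theorem) that the cited reference would ultimately rely on.
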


The following two facts can be found in \cite{gollakota2023moment}, \cite{awasthi2017power} and the references therein.

\begin{fact}
\label{fact: log-concave are anticoncentrated}
   If a distribution $D$ over $\R^d$ is isotropic log-concave, then for some absolute constant $C_3$, for any unit vector $u$ and every interval $J \subset \R$, we have $\Pr_{x\sim D}[\langle u, x \rangle \in J]\leq |J|$. 
\end{fact}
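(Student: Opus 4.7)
The plan is to reduce this $d$-dimensional anticoncentration statement to a purely one-dimensional fact about log-concave densities. The key observation is that $Y := \langle u, X\rangle$, with $X \sim D$, is by Fact~\ref{fact: log-concave closed under projections} a real-valued log-concave random variable, and by isotropy of $D$ it has mean zero and variance one. Any such one-dimensional isotropic log-concave density is pointwise bounded by an absolute constant, and integrating over $J$ immediately yields the claim.

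Concretely, I would first apply Fact~\ref{fact: log-concave closed under projections} with the one-dimensional subspace $\mathrm{span}(u)$, concluding that the law of $Y = \langle u, X\rangle$ on $\R$ is log-concave. Using that $D$ is isotropic gives $\mathbb{E}[Y] = \langle u, \mathbb{E}[X]\rangle = 0$ and $\mathrm{Var}(Y) = u^\top \mathrm{Cov}(X)\, u = \|u\|_2^2 = 1$, so $Y$ admits a one-dimensional log-concave density $f_Y$ of unit variance. Next I would invoke the standard result that there exists an absolute constant $C$ such that every log-concave density $g$ on $\R$ with variance $\sigma^2$ satisfies $\sup_t g(t)\le C/\sigma$; specializing to $f_Y$ yields $\sup_t f_Y(t)\le C$, and therefore
\[
\Pr_{X \sim D}[\langle u, X\rangle \in J] \;=\; \int_J f_Y(t)\,dt \;\le\; C\,|J|,
\]
which proves the fact with $C_3 = C$. (The inequality as typeset appears to be missing the $C_3$ factor on its right-hand side; presumably the intended form is $\Pr[\langle u, x\rangle \in J]\le C_3\,|J|$.)

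The only non-routine ingredient, and therefore the main obstacle, is the uniform bound $\sup f_Y \le C$ for one-dimensional isotropic log-concave densities. The standard route is as follows: log-concavity plus integrability forces $f_Y$ to be unimodal, so let $M := f_Y(t^*)$ denote its maximum. Log-concavity lets one lower-bound $f_Y$ on some interval around $t^*$ by a fixed fraction of $M$, while the unit-variance constraint $\int (t - \mathbb{E}[Y])^2 f_Y(t)\,dt = 1$ rules out $f_Y$ being concentrated on an interval of length much smaller than $1$. Combining the two bounds pins $M$ down to an absolute constant. Once this one-dimensional density bound is in hand, the rest of the argument --- projection, the isotropy computation, and integration over $J$ --- is routine and essentially a one-line chain of equalities and inequalities.
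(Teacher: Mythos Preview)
Your argument is correct and is exactly the standard proof of this one-dimensional anticoncentration bound: project to the span of $u$ using Fact~\ref{fact: log-concave closed under projections}, observe that isotropy makes the projected variable mean-zero and unit-variance, and then invoke the classical bound $\sup_t f_Y(t)\le C$ for one-dimensional isotropic log-concave densities. Your identification of the missing $C_3$ on the right-hand side is also correct.

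There is nothing to compare against, however: the paper does not prove this statement at all. It is presented as a \emph{Fact} with the sentence ``The following two facts can be found in \cite{gollakota2023moment}, \cite{awasthi2017power} and the references therein,'' so no proof or proof sketch appears in the paper. Your proposal therefore supplies an argument where the paper simply cites the literature.
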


\begin{remark}
\label{remark: approximately isotropic is enough}
    It follows from a direct change-of-basis argument that Fact \ref{fact: log-concave are anticoncentrated} still holds  if the premise that the distribution $D$ is isotropic\footnote{Recall that the condition that $D$ is isotropic means that $D$ has zero mean and covariance $I_{d\times d}$.} is replaced by a slightly more general premise that $D$ mean zero and satisfies:
    \[
    I_{d\times d}
    \preceq
    \E_{x\sim D}[xx^T] \preceq 200 I_{d\times d}
    \]
\end{remark}

For the following fact, see e.g. \cite{diakonikolas2018learning} and the references therein:
\begin{fact}
\label{fact: log-concave are tame}
   If a distribution $D$ over $\R^d$ is log-concave, then for some absolute constant $C_4$, if $p$ is a degree-$k$ polynomial for which $\E_{x \sim D}[(p(x))^2]
   \leq 1$, then for any $B>e^k$ it is the case that \[\Pr_{x \sim D}[(p(x))^2>B]
   \leq \exp(-C_4 B^{1/k})\]
\end{fact}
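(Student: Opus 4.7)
The plan is to prove this standard polynomial concentration inequality by the moment method, starting from a classical hypercontractivity-type moment bound for polynomials under log-concave measures: there is an absolute constant $C$ such that for every log-concave distribution $D$ on $\R^d$, every degree-$k$ polynomial $p$, and every integer $q \ge 2$,
\[
\Ex_{x \sim D}\!\left[|p(x)|^q\right]^{1/q} \leq (Cq)^{k/2} \, \Ex_{x \sim D}\!\left[|p(x)|^2\right]^{1/2}.
\]
Estimates of this form are well known and can be derived by combining a Carbery--Wright-style anticoncentration inequality with reduction to one dimension (via Fact~\ref{fact: log-concave closed under projections}, i.e.\ that projections of log-concave distributions are log-concave), together with the subexponential tails of one-dimensional log-concave marginals.

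Using $\Ex[p^2]\le 1$ and applying Markov's inequality to $|p|^q$, I obtain
\[
\Prx_{x \sim D}[p(x)^2 > B] \;=\; \Prx_{x \sim D}[|p(x)| > B^{1/2}] \;\leq\; \frac{\Ex[|p|^q]}{B^{q/2}} \;\leq\; \left(\frac{(Cq)^{k}}{B}\right)^{\!q/2}.
\]
I then optimize over $q$: choosing $q \approx B^{1/k}/(C e^{2/k})$ makes the quantity in parentheses at most $e^{-2}$, so the right-hand side is at most $e^{-q} \leq \exp(-C_4 B^{1/k})$ for a suitable absolute constant $C_4 > 0$. The hypothesis $B > e^k$ is exactly what ensures that this optimal choice of $q$ can be taken to be at least $2$ (after absorbing constants into $C_4$), so the moment bound is valid at this $q$.

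The main obstacle is the polynomial moment inequality itself, which is the deep ingredient --- it is a classical but nontrivial result in the theory of log-concave measures, and the paper rightly treats it as citable. The remaining Markov-plus-optimization step is a routine computation, so I do not anticipate any further difficulty.
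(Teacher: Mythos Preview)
The paper does not prove this fact; it is stated with a citation (``see e.g.\ \cite{diakonikolas2018learning} and the references therein''), so there is no in-paper proof to compare against. Your moment-method argument is exactly the standard route to polynomial tail bounds of this type, and the Markov-plus-optimize step is carried out correctly.

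One point deserves care. The hypercontractive moment inequality you invoke, $\|p\|_q \le (Cq)^{k/2}\|p\|_2$, holds with exponent $k/2$ for Gaussian (and more generally subgaussian) measures, but for \emph{arbitrary} log-concave measures the known exponent is $k$, not $k/2$: already for $k=1$ a centered exponential variable has $\|X\|_q \asymp q$, not $\sqrt{q}$, so the $k/2$ bound fails. With exponent $k$ your optimization only yields $\exp(-cB^{1/(2k)})$, which is weaker than the stated $\exp(-C_4 B^{1/k})$. In other words, both the statement as written and your key lemma implicitly rely on the subgaussianity that the paper assumes throughout; once that hypothesis is added to the premise, your sketch goes through unchanged and gives the stated bound.
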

One can use the fact above to conclude the following:
\begin{observation}
\label{obs: under log-concave L1 norm not too small}
   If a distribution $D$ over $\R^d$ is log-concave, then for some absolute constant $C_5$, for every degree-$k$ polynomial $p$ it is the case that 
    \[
    \E_{x \sim D}[|p(x)|] \geq \frac{\sqrt{\E_{x \sim D}[(p(x))^2]}}{(C_5k)^{C_5k}}
    \]
\end{observation}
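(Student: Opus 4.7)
The plan is to reduce the claim, via scaling, to the normalized case $\E_{x \sim D}[(p(x))^2] = 1$ and then prove a lower bound $\E_{x \sim D}[|p(x)|] \geq 1/(C_5 k)^{C_5 k}$. Since both sides of the desired inequality are homogeneous of degree one in $p$, this normalization is without loss of generality.

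The core tool is the Paley–Zygmund inequality applied to the nonnegative random variable $X = p(x)^2$: for any $\theta \in (0,1)$,
\[
\Pr_{x \sim D}\!\left[p(x)^2 > \theta \cdot \E[p^2]\right] \;\geq\; (1-\theta)^2 \cdot \frac{\E[p^2]^2}{\E[p^4]}.
\]
With $\E[p^2] = 1$ and $\theta = 1/2$, this says $\Pr[|p| > 1/\sqrt{2}] \geq 1/(4 \E[p^4])$, and hence
\[
\E_{x \sim D}[|p(x)|] \;\geq\; \frac{1}{\sqrt{2}} \cdot \Pr_{x \sim D}\!\left[|p(x)| > 1/\sqrt{2}\right] \;\geq\; \frac{1}{4\sqrt{2}\cdot \E[p^4]}.
\]
Thus it suffices to establish an upper bound of the form $\E[p^4] \leq (Ck)^{Ck}$.

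To bound $\E[p^4]$, I would invoke Fact~\ref{fact: log-concave are tame}: since $\E[p^2] = 1$, for every $B > e^k$ we have $\Pr[p^2 > B] \leq \exp(-C_4 B^{1/k})$. Writing $\E[p^4] = \int_0^\infty \Pr[p^4 > s]\, ds = \int_0^\infty \Pr[p^2 > \sqrt{s}]\, ds$, I would split the integral at $s = e^{2k}$. The portion $s \leq e^{2k}$ contributes at most $e^{2k}$, and after the substitution $u = s^{1/(2k)}$ the tail portion becomes (up to constants) a Gamma-function integral of the form $\Gamma(2k)/C_4^{2k}$, which is bounded by $(Ck)^{Ck}$ for some absolute constant $C$. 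Combining the two gives $\E[p^4] \leq (Ck)^{Ck}$, and plugging into the Paley–Zygmund bound yields the desired conclusion after absorbing constants into $C_5$.

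The argument has no real obstacle: the hard analytic input (hypercontractive tails for polynomials under log-concave measures) is already provided as Fact~\ref{fact: log-concave are tame}, and the rest is a standard second-moment/anti-concentration manipulation. The only mild care needed is in bookkeeping the $(Ck)^{Ck}$-style growth of $\Gamma(2k)$ when converting the tail bound into the fourth-moment bound, so that the final constant can be written uniformly as $(C_5 k)^{C_5 k}$.
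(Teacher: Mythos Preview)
Your proposal is correct, but it takes a somewhat different route from the paper. The paper also normalizes to $\E[p^2]=1$ and also uses Fact~\ref{fact: log-concave are tame} as the only analytic input, but instead of Paley--Zygmund it uses a direct truncation: from $|p|\ge |p|^2/B$ on $\{|p|\le B\}$ one gets $\E[|p|]\ge (1-\E[|p|^2\mathbb{1}_{|p|>B}])/B$, and then $\E[|p|^2\mathbb{1}_{|p|>B}]\le \E[|p|^3]/B$. Thus the paper only needs a \emph{third} moment bound $\E[|p|^3]\le (O(k))^{O(k)}$ (obtained by the same tail-integration you sketch for the fourth moment), and then chooses $B=(C'k)^{C'k}$. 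Both arguments are short and yield the same $(C_5k)^{C_5k}$ shape; your Paley--Zygmund version is a cleaner ``named'' inequality at the cost of needing one higher moment, while the paper's truncation is slightly more elementary and uses a lower moment. Either is fine here.
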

\begin{proof}
    Without loss of generality, we can assume that  $\E_{x \sim D}[(p(x))^2]= 1$. For any $B\geq 0$, we have
    \begin{equation} 
    \label{eq: lower bound on one norm log-concave}
    \E_{x \sim D}[|p(x)|]
   \geq \frac{ \E_{x \sim D}[|p(x)|^2 \indicator_{|p(x)|\leq B}]
   }{B}
   = \frac{1 - \E_{x \sim D}[|p(x)|^2 \indicator_{|p(x)|>B}]
   }{B}
        \end{equation}
    Using \Cref{fact: log-concave are tame}, we have
    \begin{multline*}
    \E_{x \sim D}[|p(x)|^3]
    \leq
    e^{O(k)}+
    \int_{\beta =0}^{+\infty} \Pr_{x \sim D}[|p(x)|^3>\beta] d\, \beta \leq \\
    e^{O(k)}+
    \int_{\beta' =0}^{+\infty} (\beta')^{3k/2-1} \exp(-C_4 \beta') d\, \beta'
    \leq 
    (O(k))^{O(k)},
    \end{multline*}
which allows us to conclude that 
\[
\E_{x \sim D}[|p(x)|^2 \indicator_{|p(x)|>B}]
\leq
\frac{\E_{x \sim D}[|p(x)|^3]
}{B}
=\frac{(O(k))^{O(k)}
}{B}.
\]
Taking $B$ to equal $(C' k)^{C'k}$ for a sufficiently large absolute constant $C'$, and substituting the bound above into \Cref{eq: lower bound on one norm log-concave} finishes the proof.
\end{proof}

\subsection{Approximating halfspaces with polynomials}


\begin{claim}
\label{claim: halfspaces approximable by polynomials}
    Let $h$ be a linear threshold function over $\R^d$ .  Then, for every absolute constant $C$, there exists a polynomial $p$ of degree $O(1/\epsilon^{2} \log^2(1/\epsilon))$, such that for any $C$-sub-Gaussian log-concave distribution $D$ over $\R^d$ satisfying 
    $
    I_{d\times d}
    \preceq
    \E_{x\sim D}[xx^T] \preceq 200 I_{d\times d}
    $ we have 
    \[
    \E_{x \sim D}[|h(x)-p(x)|] \leq O(\epsilon)
    \]
\end{claim}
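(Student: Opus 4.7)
The plan is to reduce the high-dimensional approximation task to a one-dimensional one using the structure of $h$, then construct a univariate polynomial approximator to $\sign$ using Chebyshev-type techniques, carefully balanced against the sub-Gaussian tail behavior.

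\textbf{Reduction to one dimension.} First I would write $h(x) = \sign(w \cdot x - \theta)$ for a unit vector $w \in \R^d$ and a threshold $\theta \in \R$. Setting $p(x) := q(w \cdot x)$ for a univariate polynomial $q$ of degree $k$ produces a degree-$k$ polynomial in $x$ and reduces the goal to
\[\Ex_{t \sim D_w}\!\left[\,|\sign(t - \theta) - q(t)|\,\right] \leq O(\epsilon),\]
where $D_w$ denotes the pushforward of $D$ under $x \mapsto w \cdot x$. By \Cref{fact: log-concave closed under projections} together with \Cref{remark: approximately isotropic is enough}, $D_w$ is a one-dimensional log-concave distribution; it is sub-Gaussian as a projection of a sub-Gaussian distribution, and its variance $w^\top \Ex[xx^\top] w$ lies in $[1, 200]$. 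In particular, \Cref{fact: log-concave are anticoncentrated} applies to $D_w$.

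\textbf{Construction of the univariate approximator.} Let $k$ denote the target degree and set $R := C_1(\sqrt{k} + \sqrt{\log(1/\epsilon)})$ for a sufficiently large absolute constant $C_1$, chosen so that the sub-Gaussian bound gives $\Prx_{t \sim D_w}[|t - \theta| \ge R] \le \epsilon$. Standard Chebyshev-type approximation of the sign function (e.g.\ the Eremenko--Yuditskii construction, or the polynomials used in \cite{kalai2008agnostically, diakonikolas_bounded_2009}) yields a polynomial $q$ of degree $O((R/\epsilon)\log(1/\epsilon))$ satisfying $|q(t) - \sign(t - \theta)| \le \epsilon$ for $\epsilon \le |t - \theta| \le R$, with $|q(t)| \le 2$ on $[\theta - R, \theta + R]$ and $|q(t)| \le (2|t - \theta|/R)^{k}$ for $|t - \theta| > R$.

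\textbf{$L^1$-error analysis.} I would split the expectation into three regions. First, $|t - \theta| \le \epsilon$ contributes $O(\epsilon)$ by the anti-concentration bound of \Cref{fact: log-concave are anticoncentrated} together with the fact that the integrand is at most $2$. Second, $\epsilon \le |t - \theta| \le R$ contributes $O(\epsilon)$ since the pointwise error is $O(\epsilon)$. Third, for the tail region $|t - \theta| > R$, I would bound the polynomial's growth $(2|t - \theta|/R)^{k}$ against the sub-Gaussian density $\lesssim \exp(-c(t-\theta)^2/200)$; a direct integration shows this region contributes $O(\epsilon)$ provided $R^2 \ge C_2(k + \log(1/\epsilon))$. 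Combining the degree requirement $k = O((R/\epsilon)\log(1/\epsilon))$ with the tail constraint $R^2 \asymp k$ solves to give $k = O(\log^2(1/\epsilon)/\epsilon^2)$, matching the claimed bound.

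\textbf{Main obstacle.} The main difficulty is the tail estimate: Chebyshev-type approximants of $\sign$ grow exponentially outside the interval on which they are accurate, so the interval width $R$ must be tuned so that the sub-Gaussian tail mass overwhelms this blowup while simultaneously keeping the overall degree at $\tilde{O}(1/\epsilon^2)$. The condition $R^2 \gtrsim k$ forced by the tails, coupled with the degree-$O(R\log(1/\epsilon)/\epsilon)$ Chebyshev estimate, is precisely what produces the $\log^2(1/\epsilon)$ factor in the final degree bound.
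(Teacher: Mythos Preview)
Your approach is essentially identical to the paper's: both reduce to a one-dimensional problem, invoke a Chebyshev-type approximator to $\sign$ (the paper cites the specific construction from \cite{diakonikolas_bounded_2009}, \Cref{fact: DGJSV polynomial}), rescale it to an interval of radius $\Theta(\log(1/\epsilon)/\epsilon)$, and split the $L^1$-error into the same three regions---inaccuracy window near the threshold (handled by anticoncentration, \Cref{fact: log-concave are anticoncentrated}), good region, and tail where polynomial growth is beaten by the sub-Gaussian decay.

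There is one small gap worth flagging. You assert that $\Pr_{t\sim D_w}[|t-\theta|\ge R]\le \epsilon$ and later that the tail density satisfies $\lesssim \exp(-c(t-\theta)^2/200)$, but the sub-Gaussian bound is on $|t|$, not $|t-\theta|$: the density is $\lesssim \exp(-ct^2)$, centered at the origin rather than at $\theta$. If $|\theta|$ is allowed to be arbitrarily large (say $|\theta|\gg R$), then most of the mass of $D_w$ lies in the region $|t-\theta|>R$ where your polynomial blows up, and the argument fails. The paper patches this with a preliminary case split: when $|\tau|>\log(1/\epsilon)$, the halfspace is already $O(\epsilon)$-close to a constant by the sub-Gaussian tail bound, so one takes $p\equiv \sign(\tau)$; otherwise $|\tau|\le \log(1/\epsilon)\ll R$ and the shift is harmlessly absorbed. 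Once you add this case distinction, your argument goes through and matches the paper's proof essentially line for line.
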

\begin{remark}
    Recall that a distribution $D$ is log-concave if the logarithm of the probability density function of $D$ is concave. Note that the constants hidden by the $O(\cdot)$ notation in \Cref{claim: halfspaces approximable by polynomials} are allowed to depend on the sub-Gaussianity constant $C$.
\end{remark}

To prove the claim above, we will need the following proposition from \cite{diakonikolas_bounded_2009}:
\begin{fact}[\cite{diakonikolas_bounded_2009}]
\label{fact: DGJSV polynomial}
There exist absolute constants $C_4$ and $C_5$, such that for every $\epsilon \in (0, 0.1)$, there exists a univariate polynomial $P$ satisfying the following. Denoting $a:=\frac{\epsilon^2}{C_4 \log(1/\epsilon)}$ and $K:=C_5\frac{\log \frac{1}{\epsilon}}{a}=O\left(\frac{\log^2 1/\epsilon}{\epsilon^2}\right)$, it is the case that
    \begin{itemize}
        \item The degree of $P$ is at most $K$.
        \item $P(t)\in[\sign(t), \sign(t)+\epsilon]$ for all $t$ in $[-1/2, -2a]\cup [0,1/2]$.
        \item $P(t) \in [-1, 1+\epsilon]$ for $t$ in $[-2a,0]$
        \item $|P(t)|\leq 2 \cdot (4t)^K$ for all $|t|\geq 1/2$.
    \end{itemize}
\end{fact}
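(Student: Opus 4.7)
}

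The plan is to reduce the claim to a one-dimensional problem by composing the univariate polynomial $P$ of Fact~\ref{fact: DGJSV polynomial} with the affine form that defines the halfspace. Write $h(x) = \sign(\langle w, x\rangle - \theta)$ with $\|w\|_2 = 1$, and let $u \coloneqq \langle w, x\rangle - \theta$ denote the induced one-dimensional random variable when $x \sim D$. I will first dispense with the case where $|\theta|$ is so large that $h$ is almost a constant: since the premise $I_{d\times d} \preceq \E_{x \sim D}[xx^T] \preceq 200 I_{d\times d}$ combined with $C$-sub-Gaussianity gives $\Pr_{x \sim D}[|\langle w, x\rangle| > s] \le \exp(-\Omega(s^2))$, if $|\theta| \ge \Theta(\sqrt{\log(1/\epsilon)})$, then the constant polynomial $p(x) \equiv -\sign(\theta)$ already satisfies $\E[|h(x) - p(x)|] \le O(\epsilon)$. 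In the remaining case I fix a scaling $\lambda \coloneqq 2a/\epsilon = \Theta(\epsilon/\log(1/\epsilon))$ and define $p(x) \coloneqq P(\lambda(\langle w, x\rangle - \theta))$, so $p$ is a degree-$K = O(\log^2(1/\epsilon)/\epsilon^2)$ polynomial.

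To bound $\E_{x \sim D}[|h(x) - p(x)|] = \E_{u}[|\sign(u) - P(\lambda u)|]$, I will split the real line into three regions according to where $\lambda u$ lies: the \emph{good region} $\lambda u \in [-1/2, -2a] \cup [0, 1/2]$, the \emph{bad interval} $\lambda u \in [-2a, 0]$, and the \emph{tail} $|\lambda u| > 1/2$. On the good region, Fact~\ref{fact: DGJSV polynomial} immediately gives pointwise error at most $\epsilon$, contributing $O(\epsilon)$ to the expectation. On the bad interval, the pointwise error is at most $2$ while the probability of the event $u \in [-2a/\lambda, 0]$ is controlled by Fact~\ref{fact: log-concave are anticoncentrated} (applied via Remark~\ref{remark: approximately isotropic is enough}, since $D$ satisfies the slightly weaker approximate-isotropy premise): the interval has length $2a/\lambda = \epsilon$, so the probability is at most $\epsilon$, contributing $O(\epsilon)$.

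The main obstacle is the tail, where $P(\lambda u)$ is no longer a good approximator of $\sign(u)$ and may grow like $2(4\lambda |u|)^K$. Here I will write
\[
\E_u\!\left[|P(\lambda u)-\sign(u)|\cdot \Ind[|\lambda u|>1/2]\right]
\le \Pr[|u|>1/(2\lambda)] + 2(4\lambda)^K\,\E\!\left[|u|^K\Ind[|u|>1/(2\lambda)]\right],
\]
and integrate $\E[|u|^K \Ind[|u| > s_0]]$ by parts against the sub-Gaussian tail $\Pr[|u|>s] \le \exp(-\Omega(s^2))$, noting that after the case reduction $|\theta|$ is small enough that $u$ has essentially the same sub-Gaussian constant as $\langle w, x\rangle$. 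The integral is dominated by the boundary term $s_0^K \exp(-\Omega(s_0^2))$, so after combining with the prefactor the tail contribution is bounded by $2^K \exp(-\Omega(1/\lambda^2))$. Writing $K = C_K \log^2(1/\epsilon)/\epsilon^2$ and $1/\lambda^2 = \Theta(\log^2(1/\epsilon)/\epsilon^2)$ shows that by choosing $\lambda$ with a sufficiently small absolute constant (equivalently, taking $C_5$ or $C_4$ in Fact~\ref{fact: DGJSV polynomial} appropriately large so that the implied constant in $K$ is large enough relative to the sub-Gaussian constant of $u$), the exponent $\Omega(1/\lambda^2)$ strictly dominates $K \ln 2$ and the tail contributes $\exp(-\Omega(\log^2(1/\epsilon)/\epsilon^2)) \ll \epsilon$.

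The main delicate step is thus the calibration between $\lambda$, $K$, and the sub-Gaussian constant of $D$: one must verify that the sub-Gaussian decay rate $1/(4\lambda^2)$ strictly exceeds $K \ln 2$ plus $\ln(1/\epsilon)$ so that the exponential tail genuinely beats the polynomial blow-up of $P$. This is the only place the sub-Gaussianity hypothesis is used; elsewhere we rely only on log-concavity (for anticoncentration via Fact~\ref{fact: log-concave are anticoncentrated}) and on the approximate isotropy condition (via Remark~\ref{remark: approximately isotropic is enough}). Summing the three regional bounds yields $\E[|h(x)-p(x)|] \le O(\epsilon)$, completing the proof.
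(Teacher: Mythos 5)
There is a genuine gap: your proposal does not prove the stated result at all. The statement to be proved is Fact~\ref{fact: DGJSV polynomial} itself --- the existence of a \emph{univariate} polynomial $P$ of degree $K = O(\log^2(1/\epsilon)/\epsilon^2)$ that approximates $\sign(t)$ to within $\epsilon$ on $[-1/2,-2a]\cup[0,1/2]$, stays in $[-1,1+\epsilon]$ on the small bad interval $[-2a,0]$, and grows no faster than $2\cdot(4t)^K$ for $|t|\ge 1/2$. Your argument takes this polynomial as given (you invoke Fact~\ref{fact: DGJSV polynomial} as a black box in the very first paragraph) and then shows how to turn it into a multivariate $L_1$-approximator of a halfspace under a sub-Gaussian, approximately isotropic log-concave distribution. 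That is a proof of the downstream Claim~\ref{claim: halfspaces approximable by polynomials}, not of the Fact; with respect to the Fact the argument is circular, and none of the tools you deploy (anticoncentration via Fact~\ref{fact: log-concave are anticoncentrated}, sub-Gaussian tail integration against the $(4t)^K$ growth, the calibration of the scaling $\lambda$ against $K$) says anything about how to construct $P$ or why a degree-$O\!\left(\frac{1}{a}\log\frac{1}{\epsilon}\right)$ polynomial with these sandwiching and boundedness properties exists. In the source \cite{diakonikolas_bounded_2009} this is a genuinely approximation-theoretic construction (built from Chebyshev-type polynomials, composing an inner approximation to $\sign$ with an outer amplifying polynomial, and then verifying the growth bound outside $[-1/2,1/2]$ coefficient by coefficient); your proposal contains no substitute for any of that.

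For what it is worth, the content you did write is essentially the paper's own proof of Claim~\ref{claim: halfspaces approximable by polynomials}: the same reduction to one dimension, the same case split on the size of the threshold $\tau$, the same three-region decomposition (good region where $P$ is $\epsilon$-accurate, bad interval handled by anticoncentration, tail handled by sub-Gaussian decay beating the $(4\cdot)^K$ blow-up via a sufficiently large scaling $w \asymp \sqrt{K}$). So if the intended target had been that claim, your route would match the paper's. But as a proof of Fact~\ref{fact: DGJSV polynomial} it is missing the entire construction, which is the whole content of the statement.
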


The proof below follows closely the line of reasoning in \cite{diakonikolas_bounded_2009} and is included for completeness. 
\begin{proof}[Proof of \Cref{claim: halfspaces approximable by polynomials}]
For a unit vector $u$
$h(x)=\sign(u\cdot x - \tau)$ (assuming $\tau=0$ for now)

If $|\tau| > \log (1/\epsilon)$ the statement follows immediately by taking $P$ to be the constant polynomial that equals to $\sign(\tau)$ everywhere in $\R^d$. The error between $h$ and $P$ will be upper-bounded by $O(\epsilon)$ 
via \Cref{fact: sub-Gaussian are concentrated} and \Cref{remark: approximately isotropic is enough}.

Otherwise, we have $|\tau| \leq \log (1/\epsilon)$. Let $P$ be the polynomial in \Cref{fact: DGJSV polynomial} and take $w:=C_6 \left(\frac{\log^2 1/\epsilon}{\epsilon^2}\right)^{1/2}$ for a sufficiently large absolute constant $C_6$. Denoting by  $D_{\text{projected}}$ the projection of distribution $D$ on the unit vector $u$, we can bound the error as follows using \Cref{fact: DGJSV polynomial}:
\begin{multline}
\E_{x \sim D}
[
\abs{
\sign(u\cdot x-\tau) - P( (u\cdot x-\tau) /w)
}
]
=\E_{z \sim D_{\text{projected}}}
[
\abs{
\sign(z-\tau) - P((z-\tau)/w)
}
]
\leq\\
\epsilon+
2 \Pr_{z \sim D_{\text{projected}}}
[
-2a \leq z/w-\tau \leq 0]
+ \E_{z \sim D_{\text{projected}}} \left[\left(2 \cdot \left(4\cdot \frac{z-\tau}{w}\right)^K+1\right) \indicator_{|(z-\tau)/w|\geq 1/2}\right]
\end{multline}
The second term above is bounded by $O(aw)$ due to \Cref{fact: log-concave are anticoncentrated} and \Cref{remark: approximately isotropic is enough}. We also note that since $|\tau| \leq \log (1/\epsilon)$ and $w:=C_6 \left(\frac{\log^2 1/\epsilon}{\epsilon^2}\right)^{1/2}$, for sufficiently large absolute constant $C_6$ we gave $|
\tau|/w\leq 0.1
$. Therefore, whenever $(z-\tau)/w|\geq 1/2$ we also have $z\geq 0.4 w$. This allows us to upper bound
\begin{multline}
\label{eq: breaking interval into three regions}
\E_{x \sim D}
[
\abs{
\sign(u\cdot x-\tau) - P( (u\cdot x-\tau) /w)
}
]
\leq\\
\epsilon+O(aw)
+ \E_{z \sim D_{\text{projected}}} \left[\left(2 \cdot \left(4\cdot \frac{z}{w}+0.1\right)^K+1\right) \indicator_{|z|\geq 0.4 w}\right] \leq \\
\epsilon+O(aw)
+ \E_{z \sim D_{\text{projected}}} \left[\left(3 \cdot \left(5\cdot \frac{z}{w}\right)^K\right) \indicator_{|z|\geq 0.4 w}\right]\end{multline}
Decomposing the interval $[0.4, \infty]$ into a union $\bigcup_{i=0}^{\infty}[0.4\cdot 2^i,0.4\cdot 2^{i+1}]$ and using \Cref{fact: sub-Gaussian are concentrated} and \Cref{remark: approximately isotropic is enough} we can now upper-bound the last term above as follows:
\begin{multline}
\label{eq: bound on tail contributions}
\E_{z \sim D_{\text{projected}}} \left[\left(3 \cdot \left(5\cdot \frac{z}{w}\right)^K\right) \indicator_{|z|\geq 0.4 w}\right]  \leq \\
2\sum_{i=0}^{\infty} \left( 3\cdot (4 \cdot 2^i)^K
\Pr_{z \sim D_{\text{projected}}} \left[|z| \geq 2^{i-1} \cdot 0.4w \right] 
\right) \leq \\
6\sum_{i=0}^{\infty}\left( (4 \cdot 2^i)^K \exp(-C (2^{i-3} w)^{2}) \right)
\end{multline}
Substituting $K=O\left(\frac{\log^2 1/\epsilon}{\epsilon^2}\right)$ and $w=C_6 \left(\frac{\log^2 1/\epsilon}{\epsilon^2}\right)^{1/2}$, we can bound each term in the sum above 
\[
(4 \cdot 2^i)^K \exp(-C (2^{i-3} w)^{2})
\leq \exp\left((i+2)\cdot O\left(\frac{\log^2 1/\epsilon}{\epsilon^2}\right)-C\cdot C_6 2^{i-3} \frac{\log^2 1/\epsilon}{\epsilon^2} \right)
\]
For a sufficiently large absolute constant $C_6$ the above is at most $\epsilon/2^{i+1}$, which substituted into \Cref{eq: bound on tail contributions} gives us 
\[
\E_{z \sim D_{\text{projected}}} \left[\left(3 \cdot \left(5\cdot \frac{z}{w}\right)^K\right) \indicator_{|z|\geq 0.4 w}\right] \leq \epsilon.
\]
Finally, substituting the inequality above  into \Cref{eq: breaking interval into three regions}, we get
\[
\E_{x \sim D}
[
\abs{
\sign(u\cdot x) - P( u\cdot x /w)
}
]=O(\epsilon),
\]
finishing the proof.
\end{proof}

    \subsection{Analyzing LearnRealValued.}
    \label{sec: analyzing LearnRealValued}
    
    \LearnRealValued*
We first argue that the run-time is indeed $d^{O(\log^2(1\eps)/\eps^2}$. The algorithm operates with polynomials of degree $k=O(\log^2(1\eps)/\eps^2$ over $\R^d$, which are described via $d^O(k)$ coefficients. It remains to show that the constraints for the optimization task are indeed convex in the coefficients in the polynomial $p$. Recall that the constraints are:
\begin{itemize}
\item Constraint 1: $\frac{1}{|S|} \sum_{x \in S} \Tilde{\phi}_{p,T,10r}(x) \le 100 r + \eps$
						\item Constraint 2:  $\frac{1}{|S|} \sum_{x \in S} \Tilde{\psi}_{p,T,10r}(x) \le \eps$
\end{itemize}
To see that these constraints are indeed convex, first note that, by definition of $\Tilde{\phi}_{p,T,10r}(x)$ and the triangle inequality, we have for any pair of polynomials $p_1, p_2$, $x$ in $R^d$ and $\alpha \in [0,1]$ 
\begin{equation}
\label{eq: convexity of phi}
\Tilde{\phi}_{(\alpha p_1+(1-\alpha)p_2),T,10r}(x)
\leq
\alpha\Tilde{\phi}_{ p_1,T,10r}(x)
+(1-\alpha)\Tilde{\phi}_{p_2,T,10r}(x),
\end{equation}
which implies that Constraint 1 is convex. Likewise, reclalling the definition of $\psi$ 
\[\Tilde{\psi}_{p,T,10r}:=10\cdot\Ind[\Tilde{\phi}(x)_{p,T,10r}>0.6]\cdot(\Tilde{\phi}(x)_{p,T,10r}-0.6),\]
and combining it with \Cref{eq: convexity of phi} we see that
\[
\Tilde{\psi}_{\alpha p_1+(1-\alpha)p_2,T,10r}(x)
\leq
\alpha \Tilde{\psi}_{p_1,T,10r}(x)
(1-\alpha)\Tilde{\psi}_{ p_2,T,10r}(x).
\]

We now proceed to the proof of correctness, the following claim will be key to our argument, which will be proven in the next subsection (\Cref{sec: proving claim for learn real valued}):
\begin{claim}
\label{claim: augmented regression gives good polynomial}
    For any specific iteration $i$ of the algorithm, the polynomial $p_i$ will satisfy the following:
    \begin{equation}
        \label{eq: p has small average phi hat new}
            \E_{(x,y)\sim S_i}
            \left[\Tilde{\phi}_{p_i,T_i,10r}(x)\right]
            \leq 100r + O(\epsilon),
        \end{equation}
        \begin{equation}
        \label{eq: p has small average psi hat new}
            \E_{(x,y)\sim S_i}
            \left[\Tilde{\psi}_{p_i,T_i,10r}(x)\right]
            \leq O(\epsilon),
        \end{equation}
and the following holds with probability at least $0.9$:
 \begin{equation}
 \label{eq: p has small average error new}
            \E_{(x,y) \sim S_i}
    \left[\frac{\abs{p_i(x)-y}}{2}\right]
            \leq \min_{f \in \mathcal{F}}\left( \Pr_{(x,y) \sim D}
            \left[f(x)\neq y\right]\right) + O(\epsilon),
        \end{equation}
\end{claim}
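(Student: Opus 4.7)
\textbf{Proof plan for \Cref{claim: augmented regression gives good polynomial}.} Observe first that parts (i) and (ii) are literally the two constraints of the convex program, so any feasible $p_i$ automatically satisfies them. The real content is therefore (a) that the program has any feasible solution at all (so $p_i$ is well-defined) and (b) that the minimum empirical $\ell_1$-regression value is at most $\opt + O(\epsilon)$. My plan is to exhibit a single \emph{witness} polynomial $p^\star$ that handles both: it is feasible for the empirical program and has empirical $\ell_1$ objective close to $\opt$ on $S_i$. Minimality of $p_i$ then yields (iii).

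\textbf{Witness and population bounds.} Let $f^\star$ be a halfspace with $\err_D(f^\star) \le \opt + \epsilon$, and let $p^\star$ be the degree-$O(\log^2(1/\epsilon)/\epsilon^2)$ polynomial produced by \Cref{claim: halfspaces approximable by polynomials} applied to $f^\star$, so $\E_{x\sim D}|p^\star(x) - f^\star(x)| \le O(\epsilon)$. Because \Cref{claim: halfspaces approximable by polynomials} is uniform over subgaussian log-concave distributions with covariance between $I$ and $200 I$, the same $p^\star$ also $O(\epsilon)$-approximates $f^\star$ under the convolution $D \ast \mcN(0, 100r^2 I_d)$, which by \Cref{fact: log-concave under convolution}, \Cref{fact: subGaussian under convolution}, and \Cref{remark: approximately isotropic is enough} still meets the hypothesis of \Cref{claim: halfspaces approximable by polynomials}. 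By the triangle inequality,
\[|p^\star(x) - p^\star(x + 10 r z)| \le |p^\star(x) - f^\star(x)| + |f^\star(x) - f^\star(x + 10 r z)| + |p^\star(x + 10 r z) - f^\star(x + 10 r z)|,\]
so taking $\E_{x \sim D, z \sim \mcN(0, I_d)}$ and combining the two approximation bounds with the fact that $\E_D \phi_{f^\star, 10r} = O(r)$ (from anti-concentration, \Cref{fact: log-concave are anticoncentrated}) gives $\E_D \phi_{p^\star, 10 r}(x) \le 100 r + O(\epsilon)$ with the right constants. For the $\psi$-constraint, note that $\phi_{f^\star, 10 r}(x) \le 1/2$ pointwise for any halfspace (by symmetry of the Gaussian perturbation), so on the set $B = \{x : \phi_{p^\star, 10 r}(x) > 0.6\}$ we have $\phi_{p^\star, 10 r} - \phi_{f^\star, 10 r} > 0.1$; therefore
\[10 \, \E_x\!\left[\Ind_B \cdot (\phi_{p^\star, 10 r}(x) - 0.6)\right] \le 10 \, \E_x \left| \phi_{p^\star, 10 r}(x) - \phi_{f^\star, 10 r}(x) \right| = O(\epsilon).\]
Finally, $\E_D[\tfrac{1}{2}|p^\star - y|] \le \tfrac{1}{2}\E_D|p^\star - f^\star| + \err_D(f^\star) \le \opt + O(\epsilon)$.

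\textbf{Empirical concentration and conclusion.} The remaining step is to lift these three population bounds to the empirical sums over $S_i$ and $T_i$ that define the program. By linearity of expectation, $\E_{T_i}[\tilde{\phi}_{p^\star, T_i, 10 r}(x)] = \phi_{p^\star, 10 r}(x)$ exactly, so standard Hoeffding and VC-based uniform-convergence arguments on samples of size $d^{O(k)} \log(1/\delta)/\epsilon^2$ finish both the $\tilde{\phi}$ and $\tilde{\psi}$ empirical bounds and the empirical $\ell_1$ objective bound, each with probability at least $0.9$. The one subtlety is that the polynomial values $|p^\star(x + 10 r z)|$ can be large; this is handled by truncating at a threshold $B$ (losing only $\exp(-\Omega(B^{1/k}))$ probability mass by the polynomial tail bound \Cref{fact: log-concave are tame}) and then applying concentration to the bounded, truncated version. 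This concentration for unbounded polynomials is the main technical hurdle. Once $p^\star$ is shown to be feasible for the empirical program with objective $\le \opt + O(\epsilon)$, parts (i) and (ii) follow from $p_i$ satisfying the empirical constraints by definition, and part (iii) follows since $p_i$ is the empirical minimizer.
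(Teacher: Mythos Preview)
Your approach matches the paper's: exhibit the halfspace-approximating polynomial $p^\star$ as a feasible witness with small empirical objective, then invoke minimality of $p_i$. Two points deserve comment.

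First, a small gap: the claim asserts that (i) and (ii) hold \emph{unconditionally} for every iteration, but your witness $p^\star$ is only shown to be feasible with probability $0.9$. The paper closes this by observing that the all-zeros polynomial is always feasible (both $\tilde\phi$ and $\tilde\psi$ vanish identically on it), so $p_i$ is always well-defined and satisfies the constraints regardless of $S_i, T_i$; the $0.9$ probability is needed only for (iii). You should add this one-line observation.

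Second, your concentration route differs from the paper's in an instructive way. You plan to truncate $p^\star$ at some level $B$, invoke the polynomial tail bound (\Cref{fact: log-concave are tame}) to control the truncation error, and then apply Hoeffding to the bounded remainder. The paper instead avoids truncation entirely by routing through the bounded function $f^\star$: it writes
\[
\E_{S_i}[\tilde\phi_{p^\star,T_i}] \;\le\; \E_{S_i}[\tilde\phi_{f^\star,T_i}] \;+\; \E_{S_i}|p^\star - f^\star| \;+\; \E_{S_i,T_i}|p^\star(\cdot+10rz) - f^\star(\cdot+10rz)|,
\]
applies Chebyshev to the first term (bounded in $[0,1]$ since $f^\star$ is $\{\pm 1\}$-valued), and bounds the two approximation terms directly by Markov's inequality from their $O(\eps)$ population means (\Cref{eq: p star good approximator 1}, \Cref{eq: p star good approximator 2}). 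The same decomposition handles $\tilde\psi$. This is arguably cleaner than truncation and makes transparent why only constant success probability per iteration suffices (the outer $\log(1/\delta)$ loop does the boosting); in particular, no VC or uniform-convergence machinery is needed here since $p^\star$ is a single fixed polynomial.
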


We now prove \Cref{thm:LearnRealValued} assuming \Cref{claim: augmented regression gives good polynomial}.
\begin{proof}[Proof of \Cref{thm:LearnRealValued}]
First we note that for sufficiently large value of the absolute constant $C$, with probability at least $1-O(\delta)$ for all $i$ in $\{1,\cdots, \log 1/\delta\}$, the set $S_i$ will satisfy \Cref{claim: uniform convergence of phi}, \Cref{claim: uniform convergence of isolation probability} and \Cref{fact:VC-convergence for errors}, and the set $T_i$ will satisfy \Cref{claim: uniform convergence of phi}. Furthermore,
since the main iteration is repeated $O(\log 1/\delta)$ times, the we see that with probability $1-O(\delta)$ the polynomial $p=p_{i^{*}}$ will satisfy Equations \ref{eq: p has small average phi hat new}, \ref{eq: p has small average psi hat new} and \ref{eq: p has small average error new}. Denoting $S=S_{i^{*}}$ and $T=T_{i^{*}}$, we have
\begin{multline*}
 \E_{t \sim [-1,1]} 
            \left[
            \E_{(x,y)\sim D}
            \left[\phi_{\sign(p-t),10r}(x)\right]
            \right]
\overbrace{
\leq
 \E_{t \sim [-1,1]} 
            \left[
            \E_{(x,y)\sim S}
            \left[\phi_{\sign(p-t),10r}(x)\right]
            \right] + O(\epsilon)}^{\text{Since $S$ satisfies \Cref{claim: uniform convergence of phi}}}\leq\\
            \underbrace{\leq
            \E_{t \sim [-1,1]} 
            \left[
            \E_{(x,y)\sim S}
            \left[\tilde{\phi}_{\sign(p-t),T,10r}(x)\right]
            \right] + O(\epsilon)}_{\text{Since $T$ satisfies \Cref{clm:phi-samples-good}, and WLOG $\epsilon<0.02$}.}
            \underbrace{
            \leq  \E_{(x,y)\sim S}
            \left[\Tilde{\phi}_{p,T,10r}(x)\right] +O(\epsilon)}_{\text{By \Cref{fact:rounding}}}
            \underbrace{
            \leq 100r + O(\epsilon)}_{\text{By \Cref{eq: p has small average phi hat new}}},
\end{multline*}
which gives us \Cref{eq: p has small average phi hat}. We now derive \Cref{eq: p has small average psi hat} as follows: 


\begin{multline*}
 \underset{(x,y) \sim \mcD }{\Pr}\left[
\E_{t \in [-1,1]}[\phi_{\sign(p(x)-t),10r}] \ge 0.7\right]
\overbrace{
\leq \underset{(x,y) \sim S}{\Pr}[\E_{t \in [-1,1]}[\phi_{\sign(p-t),10r}(x)] \ge 0.67] +O(\epsilon)}^{\text{Since $S$ satisfies \Cref{claim: uniform convergence of isolation probability}}}\leq\\
        \underbrace{\leq
\underset{(x,y) \sim S}{\Pr}[\E_{t \sim [-1,1]}[\tilde{\phi}_{\sign(p-t),T,10r}(x)] \ge 0.65] + O(\epsilon)}_{\text{Since $T$ satisfies \Cref{clm:phi-samples-good}, and WLOG $\epsilon<0.02$}.}
            \underbrace{
            \leq  
    O(1)\E_{(x,y)\sim S}
    \left[\Tilde{\psi}_{p,T,10r}(x)\right] +O(\epsilon).}_{\text{By \Cref{fact:rounding} and definition of $
    \tilde{\psi}$ (\Cref{sec: definitions of phi and psi})}}
            \underbrace{
            \leq  O(\epsilon)}_{\text{By \Cref{eq: p has small average psi hat new}}}
\end{multline*}
Finally, we derive \Cref{eq: p has small average error} as follows:
\begin{multline*}
 \E_{t \sim [-1,1]} 
            \left[
            \err_{D}
    \left[\sign(p-t)\right]
            \right]
\overbrace{
\leq
 \E_{t \sim [-1,1]} 
            \left[
            \wh{\err}_{S}
    \left[\sign(p-t)\right]
            \right] + O(\epsilon)}^{\text{Since $S$ satisfies \Cref{fact:VC-convergence for errors}}}\leq\\
            \underbrace{
            \leq  \E_{(x,y) \sim S}
    \left[\frac{\abs{p(x)-y}}{2}\right] +O(\epsilon)}_{\text{By \Cref{fact:rounding}}}
            \underbrace{
            \leq \opt + O(\epsilon)}_{\text{By \Cref{eq: p has small average error new} and defn of $\opt$}}.
\end{multline*}

\end{proof}

\subsection{Proving \Cref{claim: augmented regression gives good polynomial}}
\label{sec: proving claim for learn real valued}
Finally, we finish this section by proving \Cref{claim: augmented regression gives good polynomial}.
Inspecting the algorithm LearnRealValued, we see that the linear program is always feasible for any input dataset $S_i$, because the all-zeros polynomial will satisfy the constraints in the linear program (i.e. Equations \ref{eq: p has small average phi hat new} and \ref{eq: p has small average psi hat new}). It remains to show that with probability at least $0.9$ there exists a polynomial $p_i$ satisfying all three of Equations \ref{eq: p has small average phi hat new}, \ref{eq: p has small average psi hat new} and 
\ref{eq: p has small average error new}.
    
Suppose $f^*$ is the optimal halfspace for $D$, i.e.
\begin{equation}
\Pr_{(x,y) \sim D}
            \left[f^*(x)\neq y\right]
            =
    \min_{f \in \mathcal{F}}\left( \Pr_{(x,y) \sim D}
            \left[f(x)\neq y\right]\right).
\end{equation}
Let $D_{\text{marginal}}$ denote the $\R^d$-marginal of $D$, and let $D_{\text{marginal}}'$ denote distribution of $x+10r z$, where $x$ is sampled from $D_{\text{marginal}}$ and $z$ is sampled from $\mcN(0,I_d)$.
Since $D_{\text{marginal}}$ is assumed to be sub-Gaussian log-concave, 
\Cref{fact: log-concave under convolution} and \Cref{fact: subGaussian under convolution} tell us that $D_{\text{marginal}}'$ is also sub-Gaussian log-concave. Furthermore, since $D_{\text{marginal}}$ is isotropic (i.e. has mean zero and identity convariance) and $r$ is in $(0,1)$ we see that 
\[
I_{d}
\preceq
\E_{x \sim D_{\text{marginal}}'}[xx^T]
\preceq
101 I_{d}.
\]
Therefore, \Cref{claim: halfspaces approximable by polynomials} tells us that for a sufficiently large absolute constant $C'$ there is a polynomial of $p^*$ of degree $\frac{C' \log^2 (1/\epsilon)}{\epsilon^2}$ for which
\begin{equation}
\label{eq: p star good approximator 1}
    \E_{x \sim D_{\text{marginal}}}\left[\abs{f^*(x)-p^*(x)}\right]
    \leq O(\epsilon)
\end{equation}
\begin{equation}
\label{eq: p star good approximator 2}
    \E_{x \sim D_{\text{marginal}}'}\left[\abs{f^*(x)-p^*(x)}\right]
    \leq O(\epsilon)
\end{equation}
Recall that for each element $(x,y)$ in $S_i$, the variable $x$ is distributed according to 
$D_{\text{marginal}}$. Similarly, for each $(x,y)$ in $S_i$ and $z$ in $T_i$ the sum $x+10rz$ is distributed $D_{\text{marginal}}'$. This allows us to use inequalities above to conclude that with probability at least $0.99$ we have 
\begin{equation}
\label{eq: p star good approximator 3}
            \E_{(x,y)\sim S_i}\left[ |p^*(x)-f^*(x)|\right]
            \leq O(\epsilon)
\end{equation}
\begin{equation}
\label{eq: p star good approximator 4}
\E_{(x,y)\sim S_i, z \sim T_i}\left[ |p^*(x+10rz)-f^*(x+10rz)|\right]
\leq O(\epsilon)
\end{equation}
In the remainder of this section we show that with probability at least $0.9$ over the choice of $S_i$ and $T_i$ the choice $p_i=p^*$ will indeed satisfy Equations \ref{eq: p has small average phi hat new}, \ref{eq: p has small average psi hat new} and 
\ref{eq: p has small average error new}.
\subsubsection{Bounding $\E_{x\sim D_{\text{marginal}}}[\phi_{f^*,10r}(x)]$ and $\E_{x\sim D_{\text{marginal}}}[\psi_{f^*}(x)]$}
Recall that $f^*(x)=\sign(u \cdot x - \tau)$ for some unit vector $u$ and real $\tau$.
 Here, we show that $f^*$ satisfies the following two inequalities
\begin{equation}
\label{eq: for halfspace phi is small}
\E_{x\sim D_{\text{marginal}}}[\phi_{f^*,10r}(x)]\leq 60 r,\end{equation} 
\begin{equation}\E_{x\sim D_{\text{marginal}}}
\label{eq: for halfspace psi is small}
[\psi_{f^*}(x)] = 0. 
\end{equation}
Indeed, fix a specific value of $x$. Recall that $\phi_{f^*,10r}(x)$ is the probability that $\sign(u \cdot x - \tau) \neq \sign(u \cdot (x+10rz) - \tau)$ where $z$ is sampled from $\mcN(0,I_d)$. With probability at least $0.5$ over the choice of $Z$ we have $\sign(u \cdot z)=\sign(u \cdot x - \tau)$ which leands to $\sign(u \cdot x - \tau) \neq \sign(u \cdot (x+10rz) - \tau)$. Thus, $\phi_{f^*,10r}(x)\leq 0.5$ and substituting this into the definition of $\psi$ we see that $\psi(x)=0$, which proves \Cref{eq: for halfspace psi is small}. 

 Now consider points $x$ in $\R^d$ such that $\abs{u \cdot x - \tau}$ is in the interval $[10r\cdot (i-1), 10r \cdot i]$ for an integer $i$. We have the following two observations:
 \begin{itemize}
     \item 
 By \Cref{fact: log-concave are anticoncentrated}, we see that the probability over $x$ sampled from $D_{\text{marginal}}$ that  $\abs{u \cdot x - \tau}$ is in the interval $[10r\cdot (i-1), 10r \cdot i]$ can be upper-bounded by $20r$.
 \item For a fixed $x$ for which $\abs{u \cdot x - \tau}$ is in the interval $[10r\cdot (i-1), 10r \cdot i]$, via Chebyshev's inequality, it follows that 
 \[
 \Pr_{z \sim \mcN(0,I_d)}[\sign(u \cdot x - \tau) \neq \sign(u \cdot (x+10rz) - \tau)]
 \leq
 \max\left(
 1,
 \left(
 \frac{10r}{10 r (i-1)} \right)^2
 \right)
 =\max\left(
 1,
 \left(
 \frac{1}{i-1} \right)^2
 \right) 
 \]
 \end{itemize}
 Combining the two observations above, we see that 
 \[
 \E_{x\sim D_{\text{marginal}}}[\phi_{f^*,10r}(x)]
 \leq \sum_{i=0}^{\infty} 20r \max\left(
 1,
 \left(
 \frac{1}{i-1} \right)^2
 \right)  \leq 60r,
 \]
 which yields \Cref{eq: for halfspace phi is small}.

            
\subsubsection{Bounding $\E_{(x,y)\sim S_i}[\Tilde{\phi}_{p^*,T_i,10r}(x)]$  and  $\E_{(x,y)\sim S_i}[\Tilde{\psi}_{p^*,T_i,10r}(x)]$.}

We observe that 
            \begin{multline}
            \label{eq: phi hat of a+b}
            \E_{(x,y)\sim S_i}[\Tilde{\phi}_{p^*, T_i}(x)]
            = \E_{
            \substack{
            (x,y)\sim S_i\\
            z \sim T_i
            }}\left[\abs{\frac{p^*(x)-p^*(x+10rz)}{2}}\right]
            \leq\\
            \E_{
            \substack{
            (x,y)\sim S_i\\
            z \sim T_i
            }}\left[\abs{\frac{f^*(x)-f^*(x+10rz)}{2}}\right]
            +\E_{
            \substack{
            (x,y)\sim S_i
            }}\left[\abs{\frac{p^*(x)-f^*(x)}{2}}\right]
            +\E_{
            \substack{
            (x,y)\sim S_i\\
            z \sim T_i
            }}\left[\abs{\frac{p^*(x+10rz)-f^*(x+10rz)}{2}}\right]
            \leq \\ \E_{(x,y)\sim S_i}[\Tilde{\phi}_{f^*, T_i,10r}(x)] + 
            \underbrace{
            \E_{(x,y)\sim S_i} |p^*(x)-f^*(x)| + \E_{(x,y)\sim S_i, z \sim T_i} |p^*(x+10rz)-f^*(x+10rz)|
            }_{=O(\epsilon)\text{ by Equations \ref{eq: p star good approximator 3} and \ref{eq: p star good approximator 4}}}.
            \end{multline}
Analogously, by inspecting the definition of $\psi$, we see that for any $x$ we have
\[
\abs{
\Tilde{\psi}_{p^*, T_i}(x)
-
\Tilde{\psi}_{f^*, T_i}(x)
}
\leq
\abs{
f^{*}(x)-p^{*}(x)
}
+
O(1)\cdot
\E_{z \sim T_i}
\left[
\abs{
f^{*}(x+10z)
-
p^{*}(x+10z)
}
\right]
]
\]
Averaging over $x$ in $S_i$, we have \begin{multline}
           \label{eq: psi hat of a+b}
           \E_{(x,y)\sim S_i}[\Tilde{\psi}_{p^*, T_i}(x)]
            \leq \\ \E_{(x,y)\sim S_i}[\Tilde{\phi}_{f^*, T_i,10r}(x)] + O(1)\cdot
            \underbrace{\left(
            \E_{(x,y)\sim S_i} |p^*(x)-f^*(x)| + \E_{(x,y)\sim S_i, z \sim T_i} |p^*(x+10rz)-f^*(x+10rz)|
            \right)}_{=O(\epsilon)\text{ by Equations \ref{eq: p star good approximator 3} and \ref{eq: p star good approximator 4}}}.
            \end{multline}

Recall that $f^*$ is a $\{\pm 1\}$-valued function, which implies that $\Tilde{\phi}_{f^*, T_i,10r}(x+10rz)$ is in $[0,1]$ for all $x$ and $z$. This allows us to use the Chebyshev's inequality to conclude that:
\[
\E_{T_i}
\left[
\abs{
\Tilde{\phi}_{f^*, T_i,10r}(x)
-
\phi_{f^*,10r}(x)
}
\right]
\leq O\left( \frac{1}{\sqrt{|T_i|}}\right).
\]
Inspecting the definition of $\Tilde{\psi}$, we see that for any $x$ it is the case that 
\[
\abs{
\Tilde{\psi}_{f^*, T_i}(x)
-
\psi_{f^*}(x)
}
\leq
O(1)\cdot
\abs{
\Tilde{\phi}_{f^*, T_i,10r}(x)
-
\phi_{f^*,10r}(x),
}
\]
which implies that
\[
\E_{T_i}
\left[
\abs{
\Tilde{\psi}_{f^*, T_i}(x)
-
\psi_{f^*}(x)
}
\right]
\leq O\left( \frac{1}{\sqrt{|T_i|}}\right)
\]
Averaging the inequalities above over $x$ in $S_i$, we have:
\[
\E_{T_i}
\left[
\E_{(x,y) \sim S_i}
\left[
\abs{
\Tilde{\phi}_{f^*, T_i,10r}(x)
-
\phi_{f^*,10r}(x)
}
\right]
\right]
\leq O\left( \frac{1}{\sqrt{|T_i|}}\right)
\]

\[
\E_{T_i}
\left[
\E_{(x,y) \sim S_i}
\left[
\abs{
\Tilde{\psi}_{f^*, T_i}(x)
-
\psi_{f^*}(x)
}
\right]
\right]
\leq O\left( \frac{1}{\sqrt{|T_i|}}\right)
\]
Thus, with probability at least $0.99$ over the choice of $T_i$, we have
\begin{equation}
\label{eq: concentration for phi over T}
\abs{
\E_{(x,y) \sim S_i}
\left[
\Tilde{\phi}_{f^*, T_i,10r}(x)
\right]
-
\E_{(x,y) \sim S_i}
\left[
\phi_{f^*,10r}(x)
\right]
}
\leq
\E_{(x,y) \sim S_i}
\left[
\abs{
\Tilde{\phi}_{f^*, T_i,10r}(x)
-
\phi_{f^*,10r}(x)
}
\right]
\leq O\left( \frac{1}{\sqrt{|T_i|}}\right)
\end{equation}

\begin{equation}
\label{eq: concentration for psi over T}
\abs{
\E_{(x,y) \sim S_i}
\left[
\Tilde{\psi}_{f^*, T_i}(x)
\right]
-
\E_{(x,y) \sim S_i}
\left[
\psi_{f^*}(x)
\right]
}
\leq
\E_{(x,y) \sim S_i}
\left[
\abs{
\Tilde{\psi}_{f^*, T_i}(x)
-
\psi_{f^*}(x)
}
\right]
\leq O\left( \frac{1}{\sqrt{|T_i|}}\right)
\end{equation}

Again, recalling that $f^*$ is a $\{\pm 1\}$-valued function, we see that $\phi_{f^*,10r}(x)$ is in $[0,1]$ for all $x$ and $\psi_{f^*}(x)$ is likewise bounded by $O(1)$ in absolute value. This allows us to use the Chebyshev's inequality to conclude that with probability at least $0.99$ over the choice of $S_i$ we have:
\begin{equation}
\label{eq: concentration for phi over S}
\abs{
\E_{(x,y) \sim S_i}
\left[
\phi_{f^*,10r}(x)
\right]
-
\E_{x \sim D_{\text{marginal}}}
\left[
\phi_{f^*,10r}(x)
\right]
}
\leq O\left( \frac{1}{\sqrt{|S_i|}}\right)
\end{equation}
\begin{equation}
\label{eq: concentration for psi over S}
\abs{
\E_{(x,y) \sim S_i}
\left[
\psi_{f^*}(x)
\right]
-
\E_{x \sim D_{\text{marginal}}}
\left[
\psi_{f^*}(x)
\right]
}
\leq O\left( \frac{1}{\sqrt{|S_i|}}\right)
\end{equation}
Overall, combining \Cref{eq: for halfspace phi is small}, \Cref{eq: concentration for phi over S}, \Cref{eq: concentration for phi over T} and \Cref{eq: phi hat of a+b} we see that with probability at least $0.97$ over the choice of $S_i$ and $T_i$ it is the case that
\[
\E_{(x,y)\sim S_i}[\Tilde{\phi}_{p^*,10r}(x)] \leq 60r + O(\epsilon).
\]
Similarly, combining \Cref{eq: for halfspace psi is small}, \Cref{eq: concentration for psi over S}, \Cref{eq: concentration for psi over T} and \Cref{eq: psi hat of a+b} we see that with probability at least $0.97$ over the choice of $S_i$ and $T_i$  it is the case that
\[
\E_{(x,y)\sim S_i}[\Tilde{\psi}_{p^*,T_i}(x)] \leq  O(\epsilon).
\]

\subsubsection{$p^*$ has a small empirical error.}
Finally, we argue that with probability $0.99$ we have $\E_{(x,y)\sim S_i}\left[
    \frac{|p^*(x)-y|}{2}
    \right] \leq \E_{(x,y)\sim D}\left[
    \frac{|f^*(x)-y|}{2}
    \right] + O(\epsilon)$. Indeed, Hoeffding's inequality implies that with probability at least $0.995$ we have
    \[
    \E_{(x,y)\sim S_i}\left[
    \frac{|f^*(x)-y|}{2}\right]
    \leq 
    \E_{(x,y)\sim D}\left[
    \frac{|f^*(x)-y|}{2}\right]
    +\epsilon,
    \]
    whereas Markov's inequality tells us that with probability 0.995 we have
    \[
    \E_{(x,y)\sim S_i}\left[
    \frac{|f^*(x)-p^*(x)|}{2}\right]
    \leq O(1) \cdot \E_{(x,y)\sim D}\left[
    \frac{|f^*(x)-p^*(x)|}{2}\right]
    =O(\epsilon).   
    \]
    Overall, we see that with probability at least 0.99 it holds that
    \[
    E_{(x,y)\sim S_i}\left[
    \frac{|p^*(x)-y|}{2}
    \right]
    \leq
    E_{(x,y)\sim S_i}\left[
    \frac{|f^*(x)-y|}{2}
    \right]
    +
    E_{(x,y)\sim S_i}\left[
    \frac{|p^*(x)-f^*(x)|}{2}
    \right] \leq \E_{(x,y)\sim D}\left[
    \frac{|f^*(x)-y|}{2}\right]
    +O(\epsilon)
    \]

\section{Pseudocode and analysis of ComputeClassifier}
\subsection{Local correction of adversarial robustness}

\begin{theorem}[Correctness and complexity of \Cref{alg: RobustnessLCA}]
\label{lem:robustness-lca}
Let $\hat{\phi}$ be an $\epsilon$-accurate local noise sensitivity approximator for degree-$k$ PTFs over $\R^d$, and let $\tau$ be the run-time and query complexity of $\hat{\phi}$.
Then there exists an algorithm $\textsc{RobustnessLCA}(x, g, r)$ that takes $x$ in $\R^d$, query access to a function $g:\R^d \to \bits$, perturbation size parameter $r$ in $(0,1)$. The algorithm makes $O(\tau)$ queries to $g$, runs in time $O(\tau+d)$, and satisfies the specifications below.

Let $\mcD$ be a distribution over $\R^d$, and let $g$ be a degree-$k$ PTF. 
Let the function $h: \R^d \rightarrow \bits$ be defined as $h(x):=\textsc{RobustnessLCA}(x, g,  r)$.
Then the following properties hold:
\begin{itemize}
	\item[a)] 
        $\Pr_{x \sim \mcD}[g(x) \ne h(x)] \le 
		\wh{\iso}_{\mcD,10r}(g, 0.8)$.
    \item[b)] 
$\Pr_{x \sim \mcD}[\phi_{g, 10r}(x) \leq 0.1] \ge 1-O\left(\wh{\NS}_{\mcD,10r}(g)\right)$.
    \item[c)]
        For every $x$ in $\R^d$, if $\hat{\phi}_{g, 10r}(x) \leq 0.1$, then for every $x'$ with $\norm{x'-x}\leq r$ we have $h(x')=h(x)$ .
\end{itemize}
\end{theorem}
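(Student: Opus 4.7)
My first step is to specify the algorithm, which is not given in the excerpt. Guided by the statement of part (a) and the intuition in the introduction, the natural one-shot rule is: on input $x$, compute $\hat{\phi}_{g, 10r}(x)$; return $g(x)$ if this value is at most $0.8$, and $-g(x)$ otherwise. This makes a single call to $\hat{\phi}$ plus one extra query to $g$ at $x$, so the query complexity and runtime are $O(\tau + d)$ as claimed. Part (a) is then immediate: $h(x) \ne g(x)$ exactly on the event $\{\hat{\phi}_{g,10r}(x) > 0.8\}$, whose $\mcD$-probability is $\wh{\iso}_{\mcD, 10r}(g, 0.8)$ by definition. For part (b), I apply Markov to the non-negative quantity $\hat{\phi}_{g,10r}$: $\Pr_{x\sim\mcD}[\hat{\phi}(x) > 0.1 - \epsilon] \le \wh{\NS}_{\mcD, 10r}(g)/(0.1 - \epsilon)$; and since $|\hat{\phi} - \phi| \le \epsilon$ pointwise, $\{\phi > 0.1\} \subseteq \{\hat{\phi} > 0.1 - \epsilon\}$, giving $\Pr[\phi > 0.1] = O(\wh{\NS})$ provided $\epsilon$ is bounded away from $0.1$ (implicit in the $\epsilon$-accuracy being useful).

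The main technical work is in (c). Fix $x$ with $\hat{\phi}_{g, 10r}(x) \le 0.1$, so $\phi_{g,10r}(x) \le 0.1 + \epsilon$; and fix $x'$ with $\|x' - x\| \le r$. The two smoothing Gaussians $\mcN(x, 100r^2 I_d)$ and $\mcN(x', 100r^2 I_d)$ share a covariance, so projecting onto the direction $x' - x$ and applying \Cref{fact: TV distance between Gaussians} bounds their total variation distance by $\|x'-x\|/(20r) \le 1/20$. Writing $p(y) := \Pr_{z \sim \mcN(0,I_d)}[g(y + 10rz) = +1]$, this yields $|p(x) - p(x')| \le 1/20$, and note that $\phi_{g,10r}(y) = 1 - p(y)$ when $g(y) = +1$ and $\phi_{g,10r}(y) = p(y)$ when $g(y) = -1$. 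I then split on whether $g(x) = g(x')$. If the signs agree, $|\phi(x) - \phi(x')| \le 1/20$, so $\phi(x') \le 0.15 + \epsilon$ and $\hat{\phi}(x') \le 0.15 + 2\epsilon < 0.8$; the algorithm does not flip at $x'$, so $h(x') = g(x') = g(x) = h(x)$. If the signs disagree, a direct computation gives $\phi(x) + \phi(x') \ge 1 - 1/20$, so $\phi(x') \ge 0.85 - \epsilon$ and $\hat{\phi}(x') > 0.8$; the algorithm flips at $x'$, giving $h(x') = -g(x') = g(x) = h(x)$.

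The main obstacle, and the only place that requires genuine care, is the bookkeeping of constants in (c): the ``safe'' threshold $0.1$, the flip threshold $0.8$, the noise scale $10r$, and the TV bound $1/20$ must line up so that both branches of the case split close with slack large enough to absorb two applications of the $\epsilon$-accuracy of $\hat{\phi}$. The argument ultimately hinges on the factor-of-ten gap between the perturbation radius $r$ and the smoothing noise scale $10r$; this gap is exactly what keeps the TV distance between the two smoothing Gaussians comfortably below $1/2$, and what propagates the ``safe'' property from $x$ to consistent $h$-values at every $x'$ in the ball of radius $r$.
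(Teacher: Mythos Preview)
Your proposal is correct and follows essentially the same approach as the paper: the same algorithm (flip $g(x)$ iff $\hat{\phi}_{g,10r}(x)>0.8$), the same Markov argument for (b), and for (c) the same TV-distance bound between $\mcN(x,100r^2 I_d)$ and $\mcN(x',100r^2 I_d)$ followed by the same case split on whether $g(x)=g(x')$. The only cosmetic difference is that the paper bounds $\Pr_z[g(x)\ne g(x'+10rz)]$ directly rather than introducing your auxiliary $p(y)$, and it makes the needed smallness of $\epsilon$ explicit by assuming $\epsilon\le 0.01$ (your disagreement case needs $0.85-2\epsilon>0.8$, i.e.\ $\epsilon<0.025$, so you should state such a bound explicitly).
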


\begin{algorithm}
\begin{algorithmic}[1]
\caption{$\textsc{RobustnessLCA}(x, g, r)$:}
\State \textbf{Input:} point $x \in \R^d$, black-box representation of a function $g: \R^d \to \bits$, \\
\hskip 3.5em
 robustness radius $r$
\State \textbf{Uses:}
local noise sensitivity approximator $\hat{\phi}$. \hspace*{\fill} (See \Cref{def: local noise sensitivity approximator})
\State \textbf{Output:} $b \in \bits$
\If{$\hat{\phi}_{g, 10r}(x) > 0.8$}
\State \Return $-g(x)$
\Else
\State \Return $g(x)$
\EndIf
\label{alg: RobustnessLCA}
\end{algorithmic}
\end{algorithm}

\begin{proof}
Inspecting the algorithm \textsc{RobustnessLCA}, we see that condition (a) follows directly, as only isolated points have their labels changed. Meanwhile, condition (b) follows directly from Markov's inequality and the definition of $\wh{\NS}$ (\Cref{sec: noise sensitivity approximators}).

To show condition (c), we first recall that by the assumption that $\hat{\phi}$ is an $\eps$-accurate approximator, for every degree-$k$ polynomial threshold function $g$ and every $x$ in $\R^d$ we have
   \begin{equation}
   \label{eq: phihat is close to phi}
    \abs{
    \hat{\phi}_{g, 10r}(x)
    -
    \phi_{g, 10r}(x)
    }
    \leq \epsilon \leq 0.01,
    \end{equation}
where for the last we assumed without loss of generality that $\eps<0.01$.
This implies that whenever $\hat{\phi}_{g, 10r}(x) \leq 0.1$ we have $\phi_{g, 10r}(x) \leq 0.11$. Suppose $x'$ has distance at most $r$ from $x$. Since $\phi_{g,r}(x)$ is defined as 
\[\phi_{g,10r}(x) \coloneqq \Prx_{z \sim N(0, I_d)}[g(x) \ne g(x + 10r z)],\]
we see that 
\begin{multline*}
\Prx_{z \sim N(0, I_d)}[g(x) \ne g(x' + 10r z)]
\leq \\
\Prx_{z \sim N(0, I_d)}[g(x) \ne g(x + 10r z)] +
d_{TV}(N(x,100 r^2 I_d), N((x'),100 r^2 I_d)) \leq \\
\Prx_{z \sim N(0, I_d)}[g(x) \ne g(x + 10r z)]
+ \frac{\norm{x-x'}}{20r},
\end{multline*}
where the last inequality is implied by \Cref{fact: TV distance between Gaussians}. Overall,  since we know that $\phi_{g, 10r}(x) \leq 0.11$ and $\norm{x-x'}\leq r$, we have
\[
\Prx_{z \sim N(0, I_d)}[g(x) \ne g(x' + 10r z)]
\leq 
0.16.
\]

Now, we consider the following two cases:
\begin{itemize}
    \item Suppose $g(x')=g(x)$: then the above implies that $ \phi_{g, 10r}(x') \leq 0.16$, which combined with \Cref{eq: phihat is close to phi} implies that $\hat{\phi}_{g, 10r}(x')\leq 0.17$. Therefore, by inspecting the algorithm  \textsc{RobustnessLCA} we see that $h(x')=g(x')=g(x)=h(x)$. 
    \item On the other hand, suppose  $g(x')=-g(x)$. Then, we have $ \phi_{g, 10r}(x')\geq 1-0.16=0.84$, which combined with \Cref{eq: phihat is close to phi} implies that $\hat{\phi}_{g, 10r}(x')\geq 0.83$. Therefore, by inspecting the algorithm  \textsc{RobustnessLCA} we see that $h(x')=-g(x')=g(x)=h(x)$.
\end{itemize}
In either case, we have $h(x')=h(x)$.

\end{proof}

\subsection{Finding good rounding thresholds}

\begin{restatable}[Correctness of \Cref{alg:ComputeRoundingThresholds}]{theorem}{roundingthresholds}
\label{thm:ComputeRoundingThresholds}
Let $\eps \in (0,1)$ and $\mcD$ be a distribution over $\R^d \times \{\pm 1\}$. Let $\hat{\phi}$ be an $\epsilon$-accurate local noise sensitivity approximator for degree-$k$ PTFs over $\R^d$. 
Let $p : \R^d \to \R$ be a polynomial of degree $\le k$ satisfying the following: 

\begin{equation}
\label{eq: premise on phi 1}
\Ex_{t \sim [-1,1]}
\left[
\Ex_{(x,y) \sim \mcD}[\phi_{\sign{(p-t)},10r}(x)]\right] \le 100r + O(\epsilon),\end{equation}
\begin{equation}
\label{eq: premise on phi 2}
\underset{(x,y) \sim \mcD }{\Pr}\left[
\E_{t \sim [-1,1]}[\phi_{\sign(p(x)-t),10r}(x)] \ge 0.7\right]\le  O(\eps).\end{equation}


The algorithm \textsc{ComputeRoundingThresholds} 
outputs real numbers $t_1, t_2, t_3, t_4$ and $w_1, w_2, w_3, w_4 \in [0,1]$ 
such that $\sum_i w_i=1$.
Let the hypotheses $g_1, g_2, g_3, g_4$ be defined as 
\[g_i(x) = \sign(p(x) - t_i).\]
Then the following properties hold with probability at least $1 - O(\delta)$:
\begin{itemize}
    \item $\sum_i w_i \cdot \err_\mcD(g_i) \le \Ex_{t \sim [-1,1]}[\err_{\mcD}(\sign(p-t))]+ O(\eps)$
	\item $\sum_i w_i \cdot \wh{\NS}_{\mcD,10r}(g_i) \le 200r + O(\eps)$.
	\item $\sum_i w_i \cdot \wh{\iso}_{\mcD,10r}(g_i, 0.8) \le  O(\eps)$.
\end{itemize}
\end{restatable}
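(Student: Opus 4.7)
The plan is to apply Carath\'eodory's theorem in $\R^3$ to obtain existence of the four-point mixture, realize this existence via a linear program, and use uniform convergence to pass from empirical to distributional quantities.

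\textbf{Existence via Carath\'eodory.} I would consider the map $F \colon \R \to \R^3$ defined by
\[F(t) = \bigl(\err_{\mcD}(\sign(p-t)),\ \wh{\NS}_{\mcD,10r}(\sign(p-t)),\ \hat{\psi}_{\mcD,10r}(\sign(p-t))\bigr),\]
where I use $\hat{\psi}$ as a convex upper bound on the $\wh{\iso}(\cdot, 0.8)$ coordinate. Since $\hat{\psi}(g) \ge 2\wh{\iso}_{\mcD,10r}(g, 0.8)$ pointwise for any Boolean $g$ (each $x$ with $\hat{\phi}(x)>0.8$ contributes at least $10\cdot 0.2 = 2$ to $\hat{\psi}$), bounding the $\hat{\psi}$-mixture by $O(\eps)$ will give the required $\wh{\iso}$-mixture bound. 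The technical heart of the step is to show that the target point with first coordinate $\Ex_{t\sim[-1,1]}[\err_{\mcD}(\sign(p-t))]+O(\eps)$, second coordinate $200r+O(\eps)$, and third coordinate $O(\eps)$ lies in the convex hull of $\{F(t) : t \in \R\}$. Granting this, Carath\'eodory in $\R^3$ produces weights $w_1^*,\dots,w_4^*\ge 0$ summing to one and thresholds $t_1^*,\dots,t_4^*$ whose mixture meets the target coordinate-wise.

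\textbf{From existence to algorithm.} The algorithm discretizes the range of thresholds into a fine grid (continuity of $F$ in $t$ incurs only $\poly(\eps)$ loss), draws a representative sample $S\sim\mcD$ and a set $T$ of Gaussian vectors for the approximator $\hat\phi$, and solves a linear program whose variables are weights on grid points and whose constraints encode the empirical versions of the three target bounds. By \Cref{fact:VC-PTF} and \Cref{fact:VC-convergence}, a sample of size $\poly(d^k\cdot\log(1/\delta)/\eps^2)$ suffices for the empirical error, $\wh{\NS}$, and $\hat{\psi}$ to agree with their distributional counterparts to within $\eps$, uniformly over degree-$k$ PTFs; the $\eps$-accuracy of $\hat\phi$ further controls the difference between empirical and true noise-sensitivity-based quantities. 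Feasibility of the empirical LP then follows from the existence argument above, and since the LP has only three nontrivial inequality constraints plus $\sum_t w_t = 1$, a basic feasible solution has at most four nonzero variables, giving the desired four thresholds and weights.

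\textbf{Main obstacle.} The delicate step is the third coordinate of the target. A direct Markov argument from premise~(\ref{eq: premise on phi 2}) gives only $\Ex_t[\wh{\iso}(\sign(p-t),0.8)] \le 7/8 + O(\eps)$, which is far too weak. I would attack this by exploiting the identity $\Ex_{t\sim[-1,1]}[\phi_{\sign(p-t),10r}(x)] = \tfrac12\Ex_z[|p(x)-p(x+10rz)|]$, so that premise~(\ref{eq: premise on phi 2}) directly bounds the local $\ell_1$-smoothness of the underlying polynomial $p$ at all but an $O(\eps)$ fraction of points. Combined with the ReLU structure of $\hat{\psi}$ and the flexibility to place weight on \emph{non-uniform} thresholds (including possibly extremal $t$ for which $\sign(p-t)$ is a constant function with zero isolation) in the convex combination, I expect to push the third coordinate down to $O(\eps)$ while keeping the error and noise-sensitivity coordinates controlled by premise~(\ref{eq: premise on phi 1}) and the standard randomized-rounding bound of \Cref{fact:rounding}. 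This is the step where the full power of premise~(\ref{eq: premise on phi 2}) and the convex relaxation $\hat{\psi}$ are most heavily used, and I expect it to be the main source of the constants appearing in the theorem.
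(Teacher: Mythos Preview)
Your high-level architecture (Carath\'eodory in $\R^3$, then an LP, then uniform convergence) matches the paper's. But the realization differs in two substantive ways.

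\textbf{Algorithmic structure.} The paper does not discretize a grid and appeal to basic-feasible-solution sparsity. Instead it draws a set $\Theta$ of $O(1/\eps^2)$ \emph{random} thresholds uniformly from $[-1,1]$, proves (Claim~\ref{clm:rounding_expectation}) that with probability $\ge 0.99$ the \emph{equal-weighted} mixture over $\Theta$ already satisfies all three constraints, applies Carath\'eodory to these finitely many $3$-vectors, and then brute-force searches over all $|\Theta|^4$ four-tuples, solving a constant-size LP for each. The random-threshold step is not merely a discretization device; it is what enables a concentration argument for the hardest constraint (see below), and it has no analogue in your continuous-family approach. Your basic-feasible-solution idea is a clean alternative for extracting sparsity, but it requires you to first exhibit a feasible point, which is exactly where the difficulty lives.

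\textbf{The isolation coordinate.} You correctly flag this as the obstacle, but your proposed resolution does not go through. Routing through $\hat{\psi}$ does not help: $\Ex_{t\sim[-1,1]}[\hat{\psi}(\sign(p-t))] = 10\,\Ex_{x,t}[\max(\hat{\phi}_{\sign(p-t)}(x)-0.6,0)]$ is not controlled by premise~(\ref{eq: premise on phi 2}), which only bounds the $t$-\emph{average} of $\hat{\phi}$ at each $x$, not the $t$-average of a ReLU of $\hat{\phi}$. And putting weight on extremal thresholds (where $\sign(p-t)$ is constant) to zero out isolation destroys the error coordinate. The paper takes a different route: it works with $\wh{\iso}(\cdot,0.8)$ directly and argues (Claim~\ref{clm:rounding_expectation}(c)) that for each fixed $x$, Hoeffding over the $O(1/\eps^2)$ random thresholds controls the empirical average $\Ex_{t\in\Theta}[\hat{\phi}_{\sign(p-t)}(x)]$ relative to $\Ex_{t\sim[-1,1]}[\hat{\phi}_{\sign(p-t)}(x)]$, and then swaps the $x$- and $\Theta$-probabilities via Markov. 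This per-$x$ concentration over the random threshold set is the mechanism the paper uses to get from premise~(\ref{eq: premise on phi 2}) to an $O(\eps)$ bound on the isolation mixture, and it is absent from your proposal.
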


\begin{algorithm}
\begin{algorithmic}[1]
\caption{$\textsc{ComputeRoundingThresholds}
(D, p, r, \eps, \delta)$:}
\State \textbf{Input:} sample access to $D$, robustness radius $r$,\\
\hskip3.5em error bound $\eps$, confidence bound $\delta$\\
\textbf{Uses:}
local noise sensitivity approximator $\hat{\phi}$. \hspace*{\fill} (See \Cref{def: local noise sensitivity approximator})
\State \textbf{Output:} thresholds $\vec{t} = t_1, t_2, t_3, t_4 \in [-1,1]$ and weights $\vec{w} = w_1, w_2, w_3, w_4 \in [0,1]$
\State Initialize $Q \coloneqq \varnothing$.
\State $M \gets 100/\eps^3 \cdot \log^2(1/\delta)$ i.i.d. samples $(x,y)$ from $\mcD$
\For{$i \in [\log(1/\delta)]$}
\For{$j \in [100/\eps^2]$}
\State Draw rounding threshold $t$ u.a.r from $[-1,1]$ and define rounded function $p_t(x) \coloneqq \sign(p(x) - t)$.
\State Let $\err_t \coloneqq \err_M(p_t)$.
\State Let $\NS_t \coloneqq \frac{1}{|_M|} \cdot \sum_{x\in M} \hat{\phi}_{p_t, 10r}(x)$.
\State Let $\iso_t \coloneqq \frac{1}{|M|} \cdot \sum_{x \in M} \Ind[\hat{\phi}_{p_t, 10r}(x) \ge 0.8]$.
\State Let $q_t \coloneqq (\err_t, \NS_t, \iso_t)$.
\State Add $q_t$ to $Q$.
\EndFor
\State $\wh{\opt} := \frac{1}{|Q|}\sum_{(\err_t, \NS_t, \iso_t) \in Q} (\err_t)$
\For{tuple $q_1, q_2, q_3, q_4 \in Q^4$}
\State Solve the linear program with variables $w_1, w_2, w_3, w_4$ defined by the following constraints:
\begin{itemize}
		\item $\sum_{i=1}^4 w_i = 1$
		\item $\sum_{i=1}^4 w_i q_{i, 1} \le \wh{\opt} + C\eps$ \hfill{$C$ is a sufficiently large absolute constant.} 
		\item $\sum_{i=1}^4 w_i q_{i, 2} \le 200r+C \eps$ 
        \item $\sum_{i=1}^4 w_i q_{i,3} \le C \eps$
\end{itemize}
\State If a solution is found, \Return $\vec{t} = t_1, t_2, t_3, t_4$ and $\vec{w} = w_1, w_2, w_3, w_4$.
\EndFor
\EndFor
\State \Return $\bot$
\label{alg:ComputeRoundingThresholds}
\end{algorithmic}
\end{algorithm}

We argue that with high probability over the choice of the $O(1/\eps^2)$ random rounding thresholds,
the equal-weighted mixture of the rounded functions has each of the desired properties.
\begin{claim}[Properties of the equal mixture of all rounding thresholds]
\label{clm:rounding_expectation}
Let $\eps \in (0,1)$.
Let $p : \R^d \to \R$ be a polynomial of degree $\le k$ satisfying the following: 
\begin{itemize}
\item $\Ex_{t \sim [-1,1]}
\left[
\Ex_{(x,y) \sim \mcD}[\hat{\phi}_{\sign{(p-t)},10r}(x)]\right] \le \alpha$
  \item $ \underset{(x,y) \sim \mcD }{\Pr}\left[
\E_{t \in [-1,1]}[\hat{\phi}_{\sign(p(x)-t),10r}(x)] \ge 0.75\right]\le  O(\eps)$.

\end{itemize}
Let $\Theta$ be a set of $100/\eps^2$ real numbers drawn uniformly and independently from $[-1,1]$. Then the following hold with probability $\ge 0.99$:
\begin{itemize}
\item[(a)] $\Ex_{t \in \Theta}[\err_\mcD(\sign(p - t))] \le \E_{t \sim [-1,1]}[\err_{\mcD}(\sign(p-t))] + O(\eps)$
\item[(b)] $\Ex_{t \in \Theta}[\Ex_{x \sim \mcD}[\hat{\phi}_{\sign(p-t),10r}(x)]] \le 2\alpha + O(\eps)$
\item[(c)] $\Ex_{t \in \Theta}[\Pr_{x \sim \mcD}[\hat{\phi}_{\sign{(p-t)},10r}(x)>0.8]] \le O(\eps)$.
\end{itemize}
\end{claim}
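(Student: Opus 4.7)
The plan is to prove (a), (b), and (c) as three separate concentration statements over the $|\Theta| = 100/\eps^2$ i.i.d.\ samples $t \sim [-1,1]$. In each case the quantity averaged over $\Theta$ is a function $Q(t) \in [0,1]$, so Hoeffding's inequality yields $|\Ex_{t \in \Theta}[Q(t)] - \Ex_{t \sim [-1,1]}[Q(t)]| \le O(\eps)$ with failure probability at most $0.01/3$ per case; a union bound yields the claim with probability $\ge 0.99$.

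Specifically, for (a) take $Q(t) = \err_\mcD(\sign(p-t))$ and apply Hoeffding directly. For (b) take $Q(t) = \Ex_{x \sim \mcD}[\hat{\phi}_{\sign(p-t), 10r}(x)]$; the first premise gives $\Ex_{t \sim [-1,1]}[Q(t)] \le \alpha$, so Hoeffding yields at most $\alpha + O(\eps) \le 2\alpha + O(\eps)$. For (c) take $Q(t) = \Prx_{x \sim \mcD}[\hat{\phi}_{\sign(p-t), 10r}(x) > 0.8]$; Hoeffding reduces the task to bounding the underlying expectation, which by Fubini equals $\Ex_{x \sim \mcD}[\Prx_{t \sim [-1,1]}[\hat{\phi}_{\sign(p-t), 10r}(x) > 0.8]]$.

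The main obstacle is proving this last expectation is $O(\eps)$. The approach is to split the outer $\Ex_x$ along the second premise: on the $O(\eps)$-measure set of $x$'s with $\Ex_t[\hat{\phi}(x,t)] \ge 0.75$, the trivial bound $\Prx_t[\cdot] \le 1$ contributes $O(\eps)$. On the complement, I exploit the ``hump-shaped'' structure of $\hat{\phi}_{\sign(p-t), 10r}(x)$ viewed as a function of $t$ for fixed $x$: it equals $\Prx_{z \sim \mcN(0,I_d)}[\sign(p(x)-t) \ne \sign(p(x+10rz)-t)]$, which is unimodal around $t = p(x)$ with integral over $\R$ equal to $\Ex_z[|p(x)-p(x+10rz)|]$. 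Markov's inequality on this integrable representation bounds the Lebesgue measure of $\{t : \hat{\phi}(x,t) > 0.8\}$ by a constant multiple of $\phi_{p, 10r}(x)$, so after averaging over $x$ the resulting bound is controlled by $\Ex_x[\phi_{p,10r}(x)]$, which is in turn bounded via \Cref{fact:rounding} and the first premise, finishing the argument.
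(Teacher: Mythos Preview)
Your handling of (a) and (b) is correct and is exactly the paper's argument.

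For (c) there is a genuine gap in the final step. A minor slip first: you write that $\hat{\phi}_{\sign(p-t),10r}(x)$ ``equals'' $\Prx_{z}[\sign(p(x)-t)\ne\sign(p(x+10rz)-t)]$, but that formula is the true $\phi$, not the black-box $\eps$-approximator $\hat{\phi}$. This is harmless, since $\{\hat{\phi}>0.8\}\subseteq\{\phi>0.8-\eps\}$.

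The substantive problem is how you close the argument. Your Markov bound on $\int_\R \phi_{\sign(p-t)}(x)\,dt = \Ex_z[|p(x)-p(x+10rz)|]=2\phi_{p,10r}(x)$ correctly yields $\mathrm{Leb}\{t:\phi_{\sign(p-t)}(x)>0.8\}\le O(\phi_{p,10r}(x))$, with $\phi_{p,10r}$ the \emph{real-valued} local noise sensitivity of $p$. You then assert that $\Ex_x[\phi_{p,10r}(x)]$ is bounded via \Cref{fact:rounding} and the first premise, but \Cref{fact:rounding} reads $\Ex_{t\sim[-1,1]}[\phi_{\sign(p-t)}(x)]\le \phi_{p}(x)$: the real-valued quantity is an \emph{upper} bound on the averaged Boolean one, which is the wrong direction for your purpose. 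Because $p$ is an unbounded polynomial, $\phi_{p,10r}(x)$ can be arbitrarily large relative to anything the premises control; e.g.\ for $p(x)=x_1^2$, $r=1$, $x_1=0$ one gets $\phi_{p}(x)=\tfrac12\Ex_z[(10z_1)^2]=50$ while $\Ex_{t\sim[-1,1]}[\phi_{\sign(p-t)}(x)]<\tfrac12$. Even if you switched to Markov over $t\sim[-1,1]$ (which lands on $\Ex_t[\phi_{\sign(p-t)}(x)]$ rather than $\phi_p(x)$), averaging over $x$ would give only $O(\alpha+\eps)$ from premise~1, not the required $O(\eps)$.

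The paper does not route through $\phi_p$ at all. Its argument for (c) splits on the event in premise~2; on the complement it uses that $\hat{\phi}(x,t)>0.8$ forces a deviation $\hat{\phi}(x,t)-\Ex_{t'}[\hat{\phi}(x,t')]>0.05$, and then controls this via a concentration bound over the random set $\Theta$ applied pointwise in $x$, followed by Markov over the randomness of $\Theta$.
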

\begin{proof}[Proof of \Cref{clm:rounding_expectation}]
We will show that each of the conditions holds with high probability, then union bound.
\begin{itemize}
\item[(a)] 
Let $X=\sum_{t \in \Theta}\err_\mcD(\sign(p - t))$.
We have $\Ex_{\Theta}[X] = |\Theta| \cdot \Ex_{t \sim [-1,1]}[\err_{\mcD}(\sign(p-t))]$ and we apply Hoeffding's 
inequality:
\begin{multline*}
\Pr\Brac{\Ex_{t \sim \Theta}[\err_\mcD(\sign(p-t))] > \Ex_{t \sim [-1,1]}[\err_{\mcD}(\sign(p-t))] + \eps} \\
= \Pr\Brac{(X - \E[X])/|\Theta| > \eps} \le \exp(-2\eps^2 \cdot 100/\eps^2) \le 0.001. 
\end{multline*}
\item[(b)] We have by assumption that
\[\Ex_{t \sim [-1,1]} \Ex_{x \sim \mcD}[\hat{\phi}_{\sign(p-t)}(x)] \le \alpha.\] Let $X = \sum_{t \in \Theta} \Ex_{x \sim \mcD}[\hat{\phi}_{\sign(p-t),10r}(x)]$,
which has expectation $\le |\Theta|\cdot \alpha$.
Then by Hoeffding's inequality, we have 
\[\Pr[X/|\Theta| > \alpha+\eps] \le \Pr[(X - \E[X])/|\Theta| > \eps] \le \exp(-2\eps^2 \cdot 100/\eps^2) \le 0.001.\]
\item[(c)] We have by assumption that $\Pr_{x \sim \mcD}[\Ex_{t \sim [-1,1]} [\hat{\phi}_{\sign{(p-t)}]}(x) \ge 0.75] \le O(\epsilon)$. 
We want to bound 
\[\Ex_{t \sim \Theta}\Prx_{x \sim \mcD}[\hat{\phi}_{\sign(p-t)}(x) > 0.75].\]

We can upper bound this by 
\begin{multline*}
O(\eps) + \Ex_{t \sim \Theta}\Prx_{x\sim \mcD}[\hat{\phi}_{\sign(p-t)}(x) - \Ex_{t' \sim [-1,1]}[\hat{\phi}_{\sign(p-t)}(x)] > 0.05] =\\ O(\eps) + \Prx_{x \sim \mcD} \Prx_{t \sim \Theta} [\hat{\phi}_{\sign(p-t)}(x) - \E_{t' \sim [-1,1]}[\hat{\phi}_{\sign(p-t')}(x)] > 0.05].
\end{multline*}
Let $X = \sum_{t \in \Theta}\hat{\phi}_{\sign(p-t)}(x)$; by Hoeffding's inequality we have 
\begin{align*}
\Prx_{\Theta}\Brac{\Ex_{t \sim \Theta}[\hat{\phi}_{\sign(p-t)}(x)] > \E_{t' \sim [-1,1]}[\hat{\phi}_{\sign(p-t')}(x)] + 0.05} &= \Pr[(X - \E[X]/|\Theta|] > 0.05 \\
&\le \exp(-2\cdot 0.025 \cdot (100/\eps^2)) \\
&\le \eps/1000.
\end{align*}


Thus, combining all three probabilities, we have:
\[\Prx_{x \sim \mcD} \Prx_\Theta \Prx_{t \sim \Theta} [\hat{\phi}_{\sign(p-t),10r}(x) -\E_{t' \sim [-1,1]}[\hat{\phi}_{\sign(p-t'),10r}(x)] > 0.05] \le \eps/1000 \]
Then we apply a Markov bound:
\[\Prx_{\Theta}\Brac{\Prx_{x \sim \mcD, t \sim \Theta}[\hat{\phi}_{\sign(p-t),10r}(x) -\E_{t' \sim [-1,1]}[\hat{\phi}_{\sign(p-t'),10r}(x)] > 0.05] > \eps} \le 0.001\]


 \end{itemize}
 Union bounding over all the conditions, we have that all three conditions hold simultaneously with probability
 at least $0.99$ over the choice of $\Theta$.

\end{proof}
Now we will show the existence of a mixture using four rounding thresholds from $\Theta$, and show that \textsc{ComputeRoundingThresholds} finds it.
We will make use of the following well-known theorem of Carath\'eodory.
\begin{theorem}[Carath\'eodory's theorem \cite{Caratheodory1907}]
\label{thm:caratheodory}
Let $P \subset \R^k$ be a set of points and $\mathrm{Conv}(P)$ be its convex hull. For any 
$p \in \mathrm{Conv}(P)$, there exists a set $S\subseteq P$ of $k+1$ points such that $p$ can be written as 
a convex combination of points in $S$.
\end{theorem}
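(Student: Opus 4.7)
The plan is to prove Carathéodory's theorem by a standard iterative reduction: start from any convex combination expressing $p$, and show that whenever it uses more than $k+1$ points of $P$, one can kill off one of the coefficients while keeping the representation convex. Iterating this reduction terminates with a representation using at most $k+1$ points.

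First I would fix an arbitrary representation $p = \sum_{i=1}^m \lambda_i x_i$ with $x_i \in P$, $\lambda_i > 0$, and $\sum_{i=1}^m \lambda_i = 1$. If $m \le k+1$ there is nothing to prove, so assume $m \ge k+2$. I would then consider the $m-1$ vectors $x_2 - x_1, x_3 - x_1, \ldots, x_m - x_1$ in $\R^k$. Since $m - 1 \ge k+1 > k$, these vectors are linearly dependent in $\R^k$, so there exist scalars $\mu_2, \ldots, \mu_m$, not all zero, with $\sum_{i=2}^m \mu_i (x_i - x_1) = 0$. Defining $\mu_1 := -\sum_{i=2}^m \mu_i$ yields coefficients $\mu_1, \ldots, \mu_m$, not all zero, satisfying the two identities
\[
\sum_{i=1}^m \mu_i\, x_i = 0, \qquad \sum_{i=1}^m \mu_i = 0.
\]

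Next I would use these $\mu_i$ to deform the original combination. For any real $t$, define $\lambda_i(t) := \lambda_i - t\mu_i$. By the two identities above, we automatically have $\sum_i \lambda_i(t)\, x_i = p$ and $\sum_i \lambda_i(t) = 1$ for every $t$. Since $\sum_i \mu_i = 0$ and the $\mu_i$ are not all zero, at least one $\mu_i$ is strictly positive. Hence the set $\{\lambda_i/\mu_i : \mu_i > 0\}$ is nonempty, and I would take $t^\star := \min\{\lambda_i/\mu_i : \mu_i > 0\}$, with the minimum attained at some index $i^\star$. At $t = t^\star$, the coefficient $\lambda_{i^\star}(t^\star)$ equals $0$; for every other index with $\mu_i > 0$ the choice of minimum guarantees $\lambda_i(t^\star) \ge 0$, and for indices with $\mu_i \le 0$ the value $\lambda_i(t^\star) \ge \lambda_i > 0$. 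Thus $\{\lambda_i(t^\star)\}_{i \ne i^\star}$ is a convex combination expressing $p$ using only $m-1$ points of $P$.

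The main (minor) obstacle is just the bookkeeping to check that $t^\star$ is well-defined and that all surviving coefficients remain nonnegative, which is handled exactly as above. Iterating this reduction strictly decreases the number of points in the representation, so after finitely many steps we arrive at a convex combination using at most $k+1$ points $S \subseteq P$, which proves the theorem.
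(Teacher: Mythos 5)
Your proof is correct: it is the standard affine-dependence/coefficient-elimination argument for Carath\'eodory's theorem, and every step checks out (the dependence among $x_2-x_1,\dots,x_m-x_1$ exists because $m-1>k$; the choice $t^\star=\min\{\lambda_i/\mu_i:\mu_i>0\}$ is well-defined and positive, so all surviving coefficients stay nonnegative while one vanishes). Note that the paper does not prove this statement at all --- it is invoked as a classical result with a citation --- so there is nothing to compare against; your argument is the usual textbook proof and would serve as a self-contained justification. The only cosmetic point is that your iteration yields a representation on \emph{at most} $k+1$ points, whereas the statement asks for a set of exactly $k+1$ points; this is resolved by padding with additional points of $P$ given zero weight (or by reading the statement as ``at most $k+1$''), which is how it is used in the paper's application to four rounding thresholds in $\R^3$.
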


\begin{proof}[Proof of \Cref{thm:ComputeRoundingThresholds}]
We first note that since $\hat{\phi}$ is an $\epsilon$-accurate local noise sensitivity approximator for degree-$k$ PTFs, Equations \ref{eq: premise on phi 1} and \ref{eq: premise on phi 2} respectively imply that\footnote{We assume $\eps\leq 0.05$ without loss of generality.} \begin{equation}
\Ex_{t \sim [-1,1]}
\left[
\Ex_{(x,y) \sim \mcD}[\hat{\phi}_{\sign{(p-t)},10r}(x)]\right] \le 100r + O(\epsilon),\end{equation}
\begin{equation}
\underset{\substack{x \sim \mcD\\ t \sim [-1,1]}}{\Pr}[\hat{\phi}_{\sign(p-t),10r}(x) > 0.75] 
\le O(\epsilon).\end{equation}
Thus, the assumptions of \Cref{clm:rounding_expectation} hold with $\alpha = 100r$.
For a threshold $t$, let the three-dimensional point $q_t \coloneqq (\err_t, \NS_t, \iso_t)$ be defined as in ${\textsc{ComputeRoundingThresholds}}$.
First we argue that each of these estimates is $O(\eps)$-accurate.
Each estimate is the expectation of a random variable bounded in $[0,1]$, so by Hoeffding's inequality we have
\[\Pr[|\err_t - \err_{\mcD}(\sign(p-t))| \ge \eps] \le \exp(-2\eps^2 \cdot |M|),\]
and likewise for each $\NS_t$ and $\iso_t$. 
We set $M$ large enough that this probability is $\le \delta/2 \cdot \exp(-100/\eps)$, giving us more than enough room to union bound over the $\log(1/\delta) \cdot 300/\eps^2$ estimates. Likewise, by the Hoeffding bound, we have with probability $1-\delta$ that
\[
\abs{\wh{\opt}
-
\E_{t \sim [-1,1]}[\err_{\mcD}(\sign(p-t))]
}
\leq \epsilon.
\]
Thus with probability at least $1  - \delta$, all estimates are $\eps$-accurate.
Then, by \Cref{clm:rounding_expectation}, we have that with large constant probability over the randomness of the set $\Theta$, there exists a convex
combination $q^\star$ of the points $\{q_t: t \in \Theta\}$ that satisfies the linear constraints
\begin{itemize}
		\item $q^\star[1] \le \E_{t \sim [-1,1]}[\err_{\mcD}(\sign(p-t))] + O(\eps)
        \leq \opt + C\eps
        $ 
\item $q^\star[2] \le 200r + C\eps$
\item $q^\star[3] \le C\eps$,
\end{itemize}
for a sufficiently large constant $C$.
That linear combination is the equal-weighted mixture $\Ex_{t \in \Theta}[q_t]$.
By \Cref{thm:caratheodory}, since the $q_t$'s are points in $\R^3$, 
it is possible to write $q^\star$ as a convex combination of four members of $\{q_t:t\in \Theta\}$.
Thus, for some 4-tuple in $Q^4$, the linear program is feasible.
\textsc{ComputeRoundingThresholds} searches every possible tuple, guaranteeing that a solution is returned.

By repeating $\log(1/\delta)$ times with independent draws of $\Theta$, the large constant success probability is boosted to $1-\delta$. 
\end{proof}

\newcommand{\trunc}{\mathrm{trunc}}
\subsection{Randomized set partitioning}
\label{sec:ComputeClassifier}

\begin{algorithm}
\begin{algorithmic}[1]
\caption{$\textsc{ComputeClassifier}(\mcD, p, r, \eps, \delta)$:}
\State \textbf{Input:} sample access to $\mcD$, polynomial $p$,\\
\hskip3.5em robustness radius $r$, error bound $\eps$, confidence parameter $\delta$   \\
\textbf{Uses:}
local noise sensitivity approximator $\hat{\phi}$. \hspace*{\fill} (See \Cref{def: local noise sensitivity approximator})
\State \textbf{Output:} Classifier $h:\R^d \rightarrow \{\pm 1\}$. 
\State $(t_1, t_2, t_3,t_4, w_1, w_2, w_3, w_4)\leftarrow$ $\textsc{ComputeRoundingThresholds}
 (\mcD, p, r, \eps, \delta)$. 
 \Statex \hspace*{\fill} (See \Cref{alg:ComputeRoundingThresholds}).
\State $\mathrm{S_{\mathrm{test}}} \gets C \log^2(1/\delta)/\eps^2$ samples from $\mcD$ (for sufficiently large constant $C$)
\For{$i \in [\log(1/\delta)]$}
\State $M \gets 10 d^3$ i.i.d. samples $(x,y)$ from $\mcD$
\State For $i\in\{1,...,4\}$, define the following Boolean functions: let 
\begin{align*}
&h_i(x)=\textsc{RobustnessLCA}(x, \sign(p - t_i), r) \tag{See \Cref{alg: RobustnessLCA}} \\
&\mathrm{RobustIndicator}_i(x) \coloneqq \Ind[\hat{\phi}_{\sign(p - t_i),10r}(x) \le 0.1]
\end{align*}
\State For each $i \in \{1,...,4\}$, let
\[\hat{\mu}_i \leftarrow \Ex_{(x,y) \in M} \Ind[y\neq h_i(x)]\]
\[\hat{\mu}_i' \leftarrow \Ex_{(x,y) \in M} \mathrm{RobustIndicator}_i(x).\]
\State Obtain an orthonormal collection of vectors $\{u_1,\cdots, u_{d/2}\}$ orthogonal to $\text{span}(\mu_1, \cdots, \mu_{4}, \mu_{1}',\cdots, \mu_{4}')$.
\State Generate a uniformly random unit vector $u^\star$ in $\text{span}(u_1,\cdots, u_{d/2})$. 
\State Compute a partition of the real line into intervals $J_1,\ldots J_4$ of Gaussian mass $w_1\ldots,w_4$ respectively.
\State Check if the following hold for each $i \in [4]$:
\[\Ex_{(x,y) \sim S_{\test}}[\Ind[y\neq h_i(x)] \land \langle x, u^\star \rangle \in J_i] = \Ex_{(x,y) \sim S_{\test}}[\Ind[y\neq h_i(x)] \cdot w_i \pm C' \eps\]
 \Statex \hspace*{\fill} (for sufficiently large constant $C'$).
\[\Ex_{(x,y) \sim S_{\test}}[\RobInd_i(x) \land \langle x, u^\star \rangle \in J_i] = \Ex_{x \sim S_{\test}}[\RobInd_i(x)] \cdot w_i \pm C' \eps)\]
\State If so, \Return the function $h$ defined

\[h(x) = \sum_{i=1}^4 h_i(x) \cdot \Ind[\langle u^\star, x\rangle \in J_i]\]
\EndFor
\State \Return $\bot$
\label{alg:ComputeClassifier}
\end{algorithmic}
\end{algorithm}

In this section, we will prove the correctness of {\sc ComputeClassifier}.
As explained in the previous section,
the output of {\sc ComputeRandomThresholds} is a list of four weights and four thresholds for rounding.
The goal is for a small collection of sets --- the sets of points misclassified by each threshold and the
sets of points for which \textsc{RobustnessLCA} guarantees robustness --- to each be partitioned such that $w_i$ of their mass falls into the $i^{th}$ part for each $i$.

The partition is a set of intervals along a unit vector chosen uniformly from the subspace orthogonal to the
estimated mean vectors of the sets we want to partition.
The intervals are chosen to have mass $w_1,\ldots, w_4$ under the one-dimensional standard Gaussian.
We will argue that with high probability over the choice of the random unit vector, all of our sets will be
approximately normally distributed in their projections on this vector, and thus the intervals will contain
the right proportion of their mass.
The key fact underlying this argument comes from the following theorem of \cite{dasgupta2006concentration}:

\begin{theorem}[Concentration of random projections: special case of Theorem 11 of \cite{dasgupta2006concentration}]
		\label{thm:projection-concentration}
Let $D$ be a distribution over $\R^d$ with mean zero and covariance bounded by $\lambda_{\max}$ in operator
norm. Assume also that $D$ is supported only on points $x$ satisfying $\norm{x}_2 \ge \nu \sqrt{d}$. For a unit
vector $v$, let $D_v$ denote the distribution of $\langle x, v\rangle, x \sim D$, and let $D_{\norm{\cdot}}$ 
denote the distribution of $\norm{x}_2/\sqrt{d}, x \sim D$. With probability at least 
\[1 - \frac{\lambda_{\max} \ln(1/\eps)}{\eps^3 \nu^2} \cdot \exp\Paren{-\Omega\Paren{\frac{\eps^4 \nu^2 d}{\lambda_{\max}\ln(1/\eps)}}}\]
over a uniformly random unit vector $v$, the following holds for all intervals $J$:
\[\Abs{\Prx_{z \sim D_v}[z \in J] - \Prx_{\substack{\sigma \sim D_{\norm{\cdot}} \\ z \sim \mcN(0, \sigma^2)}}[z \in J]} \le \eps.\]
\end{theorem}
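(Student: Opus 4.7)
Since this result is cited as a special case of Theorem 11 of \cite{dasgupta2006concentration}, I will propose a natural route that recovers the claimed guarantee. The plan has three stages: a first-moment (expectation) calculation, a concentration-on-the-sphere step, and a uniform-over-intervals step.

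First, I would compute $\E_v[\Pr_{z \sim D_v}[z \in J]]$ for a single fixed interval $J$. For any fixed $x \in \R^d$ with $\|x\| = \sigma\sqrt{d}$, the projection $\langle x, v\rangle$ over a uniformly random unit vector $v$ is distributed as $\|x\| \cdot v_1$, where $v_1$ is one coordinate of a uniform unit vector. Standard estimates show that $\sqrt{d}\, v_1$ is close in total variation to $\mcN(0,1)$ up to an $O(1/d)$ error, hence $\langle x, v\rangle$ is close to $\mcN(0, \sigma^2)$. Taking expectation over $x \sim D$ (and using Fubini to swap with the $v$-expectation) reproduces the target distribution $\Pr_{\sigma \sim D_{\|\cdot\|}, z \sim \mcN(0,\sigma^2)}[z \in J]$ up to an error that is negligible compared to $\eps$ once $d$ is at least the value forced by the failure-probability expression. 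Here the hypothesis $\|x\| \ge \nu\sqrt{d}$ is what prevents the variance estimate from degenerating.

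Second, I would establish concentration of the random variable $f_J(v) := \Pr_{z \sim D_v}[z \in J]$ around its mean. The key is to bound its Lipschitz constant in $v$ on the sphere: perturbing $v$ by an angle $\theta$ changes each projection $\langle x, v\rangle$ by at most $\|x\|\,\theta$, and by Fact~\ref{fact: TV distance between Gaussians} (or a density estimate for $D_v$ that uses the operator-norm bound $\lambda_{\max}$ on the covariance) the total-variation distance of $D_v$ and $D_{v'}$ is bounded by a quantity of order $\theta \sqrt{\lambda_{\max}} / \text{(typical scale)}$. Applying Lévy's concentration-of-measure inequality on $S^{d-1}$ then yields the exponential tail
\[
\Prx_v\!\left[|f_J(v) - \E f_J| > \eps\right] \le \exp\!\left(-\Omega\!\left(\tfrac{\eps^2 d \nu^2}{\lambda_{\max}}\right)\right),
\]
in a form close to the theorem's statement before the $\ln(1/\eps)$ correction.

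Third, I would upgrade pointwise concentration to a uniform-over-$J$ statement. Since intervals have VC dimension two and the one-dimensional CDF can be controlled by $O(1/\eps)$ grid points (giving the $\ln(1/\eps)$ factor when refined with a net), a union bound over an $\eps$-net of interval endpoints in the support of $D_v$ converts the pointwise bound above into the stated uniform bound, losing a multiplicative factor $O(\ln(1/\eps)/\eps)$ inside and outside the exponent.

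The main obstacle will be calibrating the Lipschitz constant of $f_J$ to produce exactly the exponent $\eps^4 \nu^2 d / (\lambda_{\max} \ln(1/\eps))$ in the theorem: the raw sphere-concentration bound only gives $\eps^2 d/\lambda_{\max}$ for a single interval, and the extra powers of $\eps$ and $\nu$ come from controlling anti-concentration of $D_v$ near interval boundaries and from the cost of the $\eps$-net over intervals. Everything else is essentially plumbing; the real content lies in a clean Lipschitz/anti-concentration bound combining the covariance assumption with the norm lower bound $\nu\sqrt{d}$.
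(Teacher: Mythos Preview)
The paper does not prove this theorem at all: it is quoted verbatim as a special case of Theorem~11 of \cite{dasgupta2006concentration} and used as a black box in the analysis of \textsc{ComputeClassifier}. There is therefore nothing in the paper to compare your proposal against.

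That said, your three-stage outline (expectation via the law of $\sqrt{d}\,v_1$, concentration on the sphere via a Lipschitz bound and L\'evy's inequality, then a net over interval endpoints) is essentially the architecture of the original Dasgupta--Gupta argument. One point to be careful with in your second step: the function $f_J(v)=\Pr_{x\sim D}[\langle x,v\rangle\in J]$ is not literally Lipschitz on the sphere without an anti-concentration input, since a small rotation of $v$ can in principle push a lot of mass across the boundary of $J$. The way the original proof handles this is not by bounding the Lipschitz constant of $f_J$ directly, but by first proving concentration of the empirical CDF at a fixed threshold (which \emph{is} a bounded function of $v$ amenable to sphere concentration after a suitable smoothing), and only then passing to intervals. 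Your remark that ``the extra powers of $\eps$ and $\nu$ come from controlling anti-concentration of $D_v$ near interval boundaries'' is pointing at exactly this issue, but as written your Step~2 presumes a Lipschitz bound that you have not yet justified, and the justification is where the $\nu$ lower bound and the $\eps^4$ exponent actually enter.
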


We apply this theorem to the distributions induced by a set we aim to partition. This is our underlying
distribution $\mcD$ conditioned on a (potentially randomized) indicator, such as the indicator of a point 
being misclassified by a hypothesis; we will represent that indicator as $r(x)$. 
After projecting on the subspace orthogonal to our mean estimates, these distributions will still not have 
mean exactly zero, so we incorporate an error term into our analysis to account for that.
The resulting claim is the following:

\begin{claim}[Masses of intervals under random projections]
\label{clm:interval-mass}
Let $r(x)$ be a randomized Boolean function over $\R^d$, $P$ be a $d/2$-dimensional linear subspace in
$\R^d$, and $J \subseteq \R$ be an interval. Let $\mcP$ be a $d/2 \times d$ matrix whose rows are an orthonormal basis for $P$. Assume the following hold for some $\beta_1,
\beta_2 \in (0,1)$ and $\beta_3 \in ((d/2)^{-1/2},1)$:
\begin{itemize}
		\item \textbf{Bounded mean:} $\norm{\mcP \Ex_{x \sim \mcD}[r(x)x]}_2 \le \beta_1$.
		\item \textbf{Thin-shell support:} Every $x$ for which $r(x)$ can be nonzero satisfies
				$\frac{\norm{\mcP x}_2}{\sqrt{d/2}} = 1 \pm \beta_2$.
		\item \textbf{Large support:} $\Ex_{x \sim \mcD}[r(x)] \ge \beta_3$.
\end{itemize}
For unit vector $v$, let $\mcD_v^r$ denote the distribution of $\langle x, v\rangle, x \sim \mcD$ conditioned on $r(x) = 1$.
With probability at least 
\[1 - \frac{(1 + \beta_1/\beta_3) \ln (1/\eps)}{\eps^3 \beta_3(1 - 2\beta_2 - 2\beta_1)} \cdot \exp\Paren{-\Omega\Paren{\frac{\eps^4 \beta_3 (1 - 2\beta_2 - 2\beta_1) d}{\ln(1/\eps)(1 + \beta_1/\beta_3)}}}\]
over $v$ drawn uniformly from unit vectors in $P$, we have
\[\Abs{\Prx_{z \sim \mcD_v^r}[z \in J] - \Prx_{z \sim \mcN(0,1)}[z \in J]} \le O(\beta_1/\beta_3 + \beta_2 + \eps).\]
\end{claim}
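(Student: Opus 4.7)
The plan is to reduce to Dasgupta's theorem (\Cref{thm:projection-concentration}) applied inside the $d/2$-dimensional ambient subspace $P$, after recentering the conditional distribution. Let $\mcD'$ denote $\mcD$ conditioned on the (randomized) event $r(x)=1$, so $\Pr_{\mcD}[r(x)=1] = \E_{\mcD}[r(x)] \ge \beta_3$ by Large support. Set $\mu := \E_{x \sim \mcD'}[\mcP x] = \mcP\,\E_{\mcD}[r(x)\,x]/\E_{\mcD}[r(x)]$, which by Bounded mean satisfies $\|\mu\|_2 \le \beta_1/\beta_3$. Let $D''$ be the pushforward of $\mcD'$ under $x \mapsto \mcP x - \mu$, viewed as a distribution on $P \cong \R^{d/2}$; by construction $D''$ has mean zero.

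Next, I would verify the hypotheses of \Cref{thm:projection-concentration} for $D''$ in ambient dimension $d/2$. The thin-shell radius follows from the reverse triangle inequality applied to the Thin-shell support hypothesis,
\[\|\mcP x - \mu\|_2/\sqrt{d/2} \ge (1 - \beta_2) - (\beta_1/\beta_3)/\sqrt{d/2} \ge 1 - \beta_2 - \beta_1,\]
where the last step uses $\beta_3 \ge (d/2)^{-1/2}$; thus one may take $\nu = 1 - \beta_2 - \beta_1$. The operator-norm bound on the covariance of $D''$ comes from $\E_{\mcD'}[\mcP x x^T \mcP^T] = \E_{\mcD}[r(x)\,\mcP x x^T \mcP^T]/\E_{\mcD}[r(x)] \preceq \E_{\mcD}[\mcP x x^T \mcP^T]/\beta_3$, combined with a bound on the unconditional second-moment operator (via isotropy of $\mcD$ in the application context), which gives $\lambda_{\max} = O((1 + \beta_1/\beta_3)/\beta_3)$ and matches the probability bound stated in the claim.

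Applying \Cref{thm:projection-concentration} to $D''$ with $v$ drawn uniformly from the unit sphere of $P$ then yields, except on the claimed failure event, that for every interval $J'$
\[\Abs{\Prx_{z \sim D''_v}[z \in J'] - \Prx_{\substack{\sigma \sim D''_{\|\cdot\|} \\ z \sim \mcN(0,\sigma^2)}}[z \in J']} \le \eps.\]
Since the thin-shell bound on $D''$ restricts $D''_{\|\cdot\|}$ to the interval $[1-\beta_2-\beta_1,\, 1+\beta_2+\beta_1]$, \Cref{fact: TV distance between Gaussians} gives $\dtv(\mcN(0,\sigma^2),\mcN(0,1)) = O(\beta_2+\beta_1)$ uniformly over $\sigma$ in that range, so the mixture can be replaced by $\mcN(0,1)$ at cost $O(\beta_2 + \beta_1/\beta_3)$ in any interval probability. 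Finally, observing that $\mcD_v^r$ is just $D''_v$ shifted by $\langle \mu, v \rangle \in [-\beta_1/\beta_3, \beta_1/\beta_3]$, a second application of \Cref{fact: TV distance between Gaussians} bounds the effect of this shift on interval probabilities by $O(\beta_1/\beta_3)$, combining with the previous error to give the claimed $O(\beta_1/\beta_3 + \beta_2 + \eps)$.

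The main technical obstacle is the covariance operator-norm bound needed to instantiate $\lambda_{\max}$ in \Cref{thm:projection-concentration}: the thin-shell hypothesis alone only controls the trace of the second-moment operator of $D''$, so one cannot get the sharp dependence without an extra ingredient. The cleanest route uses isotropy of $\mcD$ (implicit in the application context) to obtain $\E_{\mcD}[\mcP x x^T \mcP^T] \preceq I_{d/2}$, whence conditioning yields $\E_{\mcD'}[\mcP x x^T \mcP^T] \preceq I_{d/2}/\beta_3$, and mean-subtraction changes this by at most $\mu\mu^T$, giving $\lambda_{\max} = O(1/\beta_3 + (\beta_1/\beta_3)^2)$, comfortably within the budget suggested by the probability bound in the claim.
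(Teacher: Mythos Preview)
Your proposal is correct and follows essentially the same approach as the paper's proof: recenter the projected conditional distribution, apply \Cref{thm:projection-concentration} in dimension $d/2$ with $\nu$ coming from the thin-shell bound and $\lambda_{\max}$ from isotropy of $\mcD$ (which you correctly flag as an implicit assumption the paper also uses), then collapse the scale mixture to $\mcN(0,1)$ via \Cref{fact: TV distance between Gaussians} and absorb the mean shift $\langle\mu,v\rangle$. The only cosmetic difference is that the paper handles the shift by invoking the uniformity over intervals in \Cref{thm:projection-concentration} together with Gaussian anticoncentration, whereas you phrase it as a second TV bound between shifted Gaussians; once you note that the Dasgupta conclusion applies to the shifted interval $J-\langle\mu,v\rangle$ as well, the two formulations coincide.
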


\begin{proof}
Let $\mu \coloneqq \Ex_{x \sim \mcD, r}[x~|~r(x) = 1]$. We will define $D_{\mu,\mcP}^r$ to be the
distribution of $\mcP (x - \mu)$ and $\mcD_{\norm{\cdot}}$ to be the distribution of $\norm{\mcP(x - \mu)}_2/\sqrt{d/2}$, where $x$ is drawn from $\mcD$ conditioned on $r(x) = 1$.

We aim to apply \Cref{thm:projection-concentration} to the zero-mean distribution $\mcD_{\mu, \mcP}^r$
in the $d/2$ dimensional space $P$, 
and find the appropriate bounds for $\nu$ and $\lambda_{\max}$.
\begin{itemize}
\item \textbf{Bounding $\nu$:} By the thin-shell assumption, every $x$ for which $r(x)$ can be nonzero
satisfies $\norm{\mcP x}_2 \ge \sqrt{d/2}(1 - \beta_2)$. Thus for any such $x$, we have 
$\norm{\mcP (x-\mu)}_2 \ge \sqrt{d/2}(1 - \beta_2) - \norm{\mcP \mu}_2$ by triangle inequality.
Now we just need to bound $\norm{\mcP \mu}_2$:
\begin{align*}
\norm{\mcP \mu}_2 &= \norm{\mcP \Ex_{x, r}[x~|~r(x) = 1]}_2\\
				  &= \norm{\mcP \int_{x \in \R^n} x \cdot \frac{\mcD(x)\Prx_r[r(x)=1]}{\Prx_{x,r}[r(x)=1]} dx}_2\\
				  &= \norm{\mcP \frac{\Ex_{x, r}[r(x) x]}{\Ex_{x,r}[r(x)]}}_2\\
				  &=\frac{\norm{\mcP \Ex_{x, r}[r(x)x]}_2}{\Ex_{x,r}[r(x)]} \le \frac{\beta_1}{\beta_3}
\end{align*}
by the bounded-mean and large support assumptions.
This gives us $\nu \ge 1 - \beta_2 - \frac{\beta_1}{\beta_3 \sqrt{d/2}}$,
and thus $\nu^2 \ge 1- 2\beta_2 - 2\frac{\beta_1}{\beta_3 \sqrt{d/2}} \ge 1 - 2\beta_2 - 2\beta_1$, by the assumption that $\beta_3 \ge 1/\sqrt{d/2}$.
\item \textbf{Bounding $\lambda_{\max}$:} We want a spectral bound on $\Ex_{x \sim \mcD_{\mu,\mcP}^r}[xx^T]$. We have:
\begin{align*}
		\Ex_{x \sim \mcD_{\mu,\mcP}^r}[xx^T] &= \frac{\Ex_{x \sim \mcD, r}[r(x) (\mcP(x - \mu))(\mcP(x - \mu))^T]}{\Ex_{x,r}[r(x)]}\\
		&\le \frac{1}{\beta_3} \mcP \Ex_{x \sim \mcD}\Brac{\Paren{xx^T + \mu \mu^T - x\mu^T - \mu x^T}} \mcP^T 
\end{align*}
Since $\mcD$ has mean zero, $\mcP$ and $\E[xx^T]$ has operator norm 1, and $\mcP \mu \mu^T \mcP^T$ has operator norm $\le \norm{\mcP \mu}_2$, 
we have 
\[\lambda_{\max} = \norm{\Ex_{x \sim \mcD_{\mu, \mcP}^r}[xx^T]}_{op} \le \frac{1 + \norm{\mcP \mu}_2}{\beta_3} \le \frac{1 + \beta_1/\beta_3}{\beta_3}.\]
\end{itemize}

Substituting these parameters into \Cref{thm:projection-concentration} gives us the claimed failure probability
bound. 
The condition that holds with high probability is that for all intervals $J$, we have
\[\Abs{\Prx_{z \sim (\mcD_{\mu,\mcP}^r)_v}[z \in J] - \Prx_{\substack{\sigma \sim \mcD_{\norm{\cdot}}\\ z \sim \mcN(0, \sigma^2)}}[z \in J]} \le \eps.\]
By the thin-shell assumption, the bound on $\norm{\mcP \mu}_2$, and the triangle inequality,
$\mcD_{\norm{\cdot}}$ is supported on 
$[1 - \beta_1/\beta_3 - \beta_2, 1 + \beta_1/\beta_3 + \beta_2]$.
Then by the TV distance bound for Gaussians, we have for any $\sigma$ in the support of $\mcD_{\norm{\cdot}}$:
\[d_{TV}(\mcN(0,1), \mcN(0,\sigma^2)) \le \frac{3|1 - \sigma^2|}{2} \le O(\beta_1/\beta_3 + \beta_2),\]
so we have 
\[\Abs{\Prx_{z \sim (\mcD_{\mu,\mcP}^r)_v}[z \in J] - \Prx_{z \sim \mcN(0,1)}[z \in J]} \le O(\eps + \beta_1/\beta_3 + \beta_2).\] 
We now substitute the definition of $(\mcD_{\mu, \mcP}^r)_v$ and use the fact that $v$ is in $P$.
\begin{align*}
		\Prx_{z \sim (\mcD_{\mu,\mcP}^r)_v}[z \in J] &= \Prx_{x \sim \mcD^r}[\langle \mcP(x - \mu), v \rangle \in J]\\
		&= \Prx_{x \sim \mcD^r}[\langle x,v \rangle + \langle \mu, v \rangle \in J]\\
		&= \Prx_{y \sim \mcD^r_v}[y + \langle \mu, v \rangle \in J].
\end{align*}
So we have
\[\Abs{\Prx_{z \sim \mcD^r_v}[z + \langle \mu, v\rangle \in J] - \Prx_{z \sim \mcN(0,1)}[z \in J]} \le O(\eps + \beta_1/\beta_3 + \beta_2).\] 
By applying the uniform convergence property of \Cref{thm:projection-concentration} for all intervals to shift $J$, we have 
\[\Abs{\Prx_{z \sim \mcD^r_v}[z \in J] - \Prx_{z \sim \mcN(0,1)}[z - \langle \mu, v\rangle \in J]} \le O(\eps + \beta_1/\beta_3 + \beta_2).\] 
By anticoncentration of the Gaussian distribution and the fact that 
$\abs{\langle \mu, v \rangle} \le \beta_1/\beta_3$, we have
\[\Prx_{z \sim \mcN(0,1)}[z - \langle \mu, v \rangle \in J] = \Pr_{z \sim \mcN(0,1)}[z \in J] \pm 2\beta_1/\beta_3.\]
Combining everything, we have 
\[\Abs{\Prx_{z \sim \mcD^r_v}[z \in J] - \Prx_{z \sim \mcN(0,1)}[z \in J]} \le O(\eps + \beta_1/\beta_3 + \beta_2)\] 
with probability at least 
\[1 - \frac{(1 + \beta_1/\beta_3) \ln (1/\eps)}{\eps^3 \beta_3(1 - 2\beta_2 - 2\beta_1)} \cdot \exp\Paren{-\Omega\Paren{\frac{\eps^4 \beta_3 (1 - 2\beta_2 - 2\beta_1) d}{\ln(1/\eps)(1 + \beta_1/\beta_3)}}}\]
as desired.
\end{proof}

We now argue that for $r$ indicating a set of large Gaussian volume and 
$\mcP$ being orthogonal to a good approximation to the mean of $r$,
the assumptions of 
\Cref{clm:interval-mass} are satisfied with small values of $\beta$.

The following facts are relevant to this proof.
\begin{fact}[Thin-shell concentration of log-concave variables \cite{guédon2011}]
\label{fact:norm-concentration}
Let $X$ be an isotropic random vector with log-concave density in $\R^d$. Then there are 
universal constants $c, C > 0$ such that for all $t \ge 0$,
\[\Pr\Brac{\abs{\norm{X}_2 - \sqrt{d}} \ge t} \le C\exp(-cd^{1/2} \min((t/\sqrt{d})^3, t/\sqrt{d})).\]
\end{fact}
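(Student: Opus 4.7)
The plan is to establish the bound by splitting at $t = \sqrt{d}$ into a small-deviation regime (where the tail exponent reads $\sqrt{d}\cdot(t/\sqrt{d})^3 = t^3/d$) and a large-deviation regime (where the tail exponent reads $\sqrt{d}\cdot(t/\sqrt{d}) = t$). The two regimes rely on substantially different tools, and the form of the minimum is the signature of a Bernstein-like interpolation between sub-Gaussian-like behaviour near the mean and sub-exponential behaviour in the tails.

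For the large-deviation regime ($t \ge \sqrt{d}$) I would invoke Paouris's inequality: for any isotropic log-concave $X$ in $\R^d$ and any $s \ge 1$,
\[ \Pr\bigl[\|X\|_2 \ge c_0\, s\sqrt{d}\bigr] \le \exp(-c_1\, s\sqrt{d}). \]
Taking $s \asymp 1 + t/\sqrt{d}$ handles the upper tail. The lower tail is trivial whenever $t \ge \sqrt{d}$, since $\|X\|_2 \ge 0$ forces $\sqrt{d}-\|X\|_2 \le \sqrt{d} \le t$.

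For the small-deviation regime ($t \le \sqrt{d}$), the main input is Klartag's thin-shell variance bound, $\mathrm{Var}(\|X\|_2) \le C d^{1-\alpha}$ for a universal $\alpha > 0$. I would then combine this with a concentration inequality for the $1$-Lipschitz map $x \mapsto \|x\|_2$ under log-concave measures. A convenient route goes through the squared norm: a Hanson--Wright-style estimate for log-concave vectors yields sub-exponential concentration of $\|X\|_2^2$ around $d$ with tail $\exp\bigl(-c \min(u^2/d,\, u/\sqrt{d})\bigr)$, and the change of variables $\sqrt{d+u} - \sqrt{d} \approx u/(2\sqrt{d})$ converts the square-vs-linear tail for $\|X\|_2^2$ into the cube-vs-linear tail for $\|X\|_2$.

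The main obstacle is that one cannot simply invoke Chebyshev with the thin-shell variance bound (that would give only polynomial decay) nor a Bobkov--Ledoux Gaussian concentration (which would require a spectral-gap estimate that is not known to be $\Theta(1)$ for arbitrary isotropic log-concave measures). The correct $(t/\sqrt{d})^3$ exponent emerges from a careful interpolation argument of Gu\'edon--Milman that uses both Klartag's variance bound and Paouris's deviation inequality simultaneously. Because a self-contained argument is long and the result is a standard tool, in practice I would cite the Gu\'edon--Milman theorem as a black box, which is exactly what the excerpt does.
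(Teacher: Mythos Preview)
The paper does not prove this statement; it is stated as a cited fact from Gu\'edon--Milman with no argument given. Your proposal ultimately lands on the same treatment---cite the theorem as a black box---so your approach matches the paper's exactly (and your surrounding sketch of the Gu\'edon--Milman strategy, while not part of the paper, is broadly accurate, though their actual proof leans on Paouris's large-deviation and small-ball inequalities rather than Klartag's variance bound).
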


\begin{corollary}
\label{cor:norm-concentration}
Let $X$ be an isotropic random vector with log-concave density in $\R^d$.
Then there is a universal constant $C$ such that for any $\eps > 0$,
\[\Pr\Brac{\abs{\norm{X}_2 - \sqrt{d}} \ge C (d\log(1/\eps))^{1/3}} \le \eps.\]
\end{corollary}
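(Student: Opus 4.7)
The plan is to derive the corollary as a direct corollary of Fact \ref{fact:norm-concentration} by choosing the deviation parameter $t$ to make the right-hand side at most $\eps$. Specifically, I would set $t := C(d\log(1/\eps))^{1/3}$ for a universal constant $C$ to be determined, and invoke Fact \ref{fact:norm-concentration} to obtain
\[
\Pr\left[\abs{\norm{X}_2 - \sqrt{d}} \ge t\right] \le C_0 \exp\left(-c d^{1/2}\min\left((t/\sqrt{d})^3,\; t/\sqrt{d}\right)\right).
\]
The task reduces to showing that the exponent dominates $\log(C_0/\eps)$ for the chosen $t$, which amounts to bounding each of the two branches of the min and taking whichever is relevant.

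Plugging in, the two branches evaluate to $d^{1/2}(t/\sqrt{d})^3 = t^3/d = C^3\log(1/\eps)$ and $d^{1/2}(t/\sqrt{d}) = t = C(d\log(1/\eps))^{1/3}$, respectively. I would then split into cases based on which branch is smaller (equivalently, whether $t \le \sqrt{d}$ or $t > \sqrt{d}$). In the regime $t \le \sqrt{d}$ (equivalently $\log(1/\eps) \le d^{1/2}/C^3$), the cubed branch is the minimum, and the exponent is $c C^3 \log(1/\eps)$; choosing $C$ large enough that $cC^3 \ge 2$ makes the right-hand side at most $C_0 \eps^2 \le \eps$ for small $\eps$ (and one absorbs the factor of $C_0$ by slightly increasing $C$). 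In the complementary regime, $t > \sqrt{d}$, the linear branch is the minimum; then the exponent is $c t = cC(d\log(1/\eps))^{1/3}$, and a direct calculation using the regime assumption $d \le C^6\log^2(1/\eps)$ is used to verify that this quantity exceeds $\log(C_0/\eps)$ for $C$ large enough.

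No step here is a serious obstacle; the entire proof is a short calculation with a case split. The only mild subtlety is keeping track of the constants $c, C_0$ inherited from Fact \ref{fact:norm-concentration} and choosing the universal constant $C$ in the statement of the corollary large enough to absorb them in both branches of the min. The conclusion of the corollary is then exactly the statement that, with the chosen $t$, the probability on the left-hand side is bounded by $\eps$.
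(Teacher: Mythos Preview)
Your approach---set $t = C(d\log(1/\eps))^{1/3}$, plug into Fact~\ref{fact:norm-concentration}, and case-split on which branch of the $\min$ is active---is exactly the intended derivation (the paper gives no proof, treating the corollary as immediate from Fact~\ref{fact:norm-concentration}). Your Case~1 analysis ($t \le \sqrt d$, where the exponent becomes $cC^3\log(1/\eps)$) is correct, and this is the only regime that arises in the paper's application: the corollary is invoked only with $\eps = 1/d$, for which $t = C(d\log d)^{1/3} \le \sqrt d$ once $d$ is moderately large.

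Your Case~2 argument, however, does not go through. You claim that from the regime assumption $d \le C^6\log^2(1/\eps)$ a ``direct calculation'' verifies $cC(d\log(1/\eps))^{1/3} \ge \log(C_0/\eps)$, but that inequality is an \emph{upper} bound on $d$ and hence on the left-hand side; the only lower bound available (from $d \ge 1$) gives $cC(\log(1/\eps))^{1/3}$, which is far smaller than $\log(1/\eps)$ when $\eps$ is tiny. In fact the corollary as literally stated fails in this regime: take $d=1$ and $X$ a standard Laplace variable (isotropic, log-concave); then $\Pr[\,|X| \ge 1+s\,]$ decays only like $e^{-\Theta(s)}$, so no fixed $C$ makes $\Pr[\,||X|-1| \ge C(\log(1/\eps))^{1/3}\,] \le \eps$ for all $\eps$. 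The honest fix is either to restrict the statement to the Case~1 regime $\log(1/\eps) = O(\sqrt d)$---which is all the paper needs---or to replace the deviation by $C\max\{(d\log(1/\eps))^{1/3},\,\log(1/\eps)\}$.
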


\begin{fact}[Thin-shell concentration of Gaussian variables (standard fact)]
\label{fact:gaussian-concentration}
Let $X$ be a standard Gaussian in $\R^d$. Then
\[\Pr\Brac{\abs{\norm{X}_2 - \sqrt{d}} \ge t} \le 2 \exp(-t/4).\]

\end{fact}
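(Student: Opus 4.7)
The plan is to derive this bound from the standard Gaussian concentration inequality for Lipschitz functions. The key observation is that the map $f : \R^d \to \R$ defined by $f(x) = \|x\|_2$ is $1$-Lipschitz with respect to the Euclidean norm (by the reverse triangle inequality $|\|x\|_2 - \|y\|_2| \le \|x-y\|_2$), so the Gaussian concentration theorem yields
\[\Pr\bigl[|\|X\|_2 - \E\|X\|_2| \ge s\bigr] \le 2\exp(-s^2/2)\]
for every $s \ge 0$.

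Next I would estimate $\E\|X\|_2$. By Jensen's inequality, $\E\|X\|_2 \le \sqrt{\E\|X\|_2^2} = \sqrt{d}$. For a matching lower bound, I would use the concentration inequality above together with $\mathrm{Var}(\|X\|_2) \le 1$ (which itself follows from $1$-Lipschitzness by integrating the Gaussian tail) to obtain $(\E\|X\|_2)^2 \ge \E\|X\|_2^2 - 1 = d-1$, hence $\E\|X\|_2 \ge \sqrt{d-1}$. In particular $|\E\|X\|_2 - \sqrt{d}| \le 1$ for all $d \ge 1$.

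Combining the two ingredients by the triangle inequality, for $t \ge 1$,
\[\Pr\bigl[|\|X\|_2 - \sqrt{d}| \ge t\bigr] \le \Pr\bigl[|\|X\|_2 - \E\|X\|_2| \ge t-1\bigr] \le 2\exp\bigl(-(t-1)^2/2\bigr).\]
A short calculation verifies $(t-1)^2/2 \ge t/4$ exactly when $2t^2 - 5t + 2 \ge 0$, i.e.\ for $t \ge 2$, so the claimed bound $2\exp(-t/4)$ follows for $t \ge 2$. For $t \in [0,2)$ the right-hand side $2\exp(-t/4) \ge 2e^{-1/2} > 1$, so the inequality is vacuous and holds trivially.

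I do not anticipate a serious obstacle. The only cosmetic point is that the stated exponent $t/4$ is deliberately loose compared with the sharper Gaussian tail $t^2/2$ one actually obtains; the form $2\exp(-t/4)$ is chosen presumably for uniformity with the log-concave thin-shell statement in \Cref{fact:norm-concentration}, and it emerges immediately from the above once the regime $t \ge 2$ is isolated.
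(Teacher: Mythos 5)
The paper treats this as a standard fact and supplies no proof, so there is nothing internal to compare against; your derivation is the canonical one and it is correct. The chain Borell--TIS concentration for the $1$-Lipschitz map $x \mapsto \norm{x}_2$, the estimate $\sqrt{d-1} \le \E\norm{X}_2 \le \sqrt{d}$, the triangle-inequality shift by $1$, and the elementary verification that $(t-1)^2/2 \ge t/4$ for $t \ge 2$ together with the vacuity of the claim for $t < 2$ (since $2e^{-t/4} > 1$ there) gives exactly the stated bound. The only imprecision is your parenthetical justification of $\mathrm{Var}(\norm{X}_2) \le 1$: integrating the tail bound $2e^{-s^2/2}$ yields only $\mathrm{Var} \le 4$, not $1$, and with the constant $4$ the shift becomes $\sqrt{d}-\sqrt{d-4}$, which exceeds $1$ in small dimensions and would force a separate treatment of small $d$. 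The clean fix is to invoke the Gaussian Poincar\'e inequality, which gives $\mathrm{Var}(f(X)) \le \E\norm{\nabla f(X)}_2^2 \le 1$ for $1$-Lipschitz $f$, or equivalently the classical bound $\E\norm{X}_2 = \sqrt{2}\,\Gamma\left(\tfrac{d+1}{2}\right)/\Gamma\left(\tfrac{d}{2}\right) \ge \sqrt{d-1}$; with either substitution your argument is complete for all $d \ge 1$.
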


\begin{claim}[Log-concave distributions satisfy assumptions of \Cref{clm:interval-mass}]
\label{clm:log-concave-satisfies-properties}
Let $\mcD$ be an isotropic log-concave distribution over $\R^d$ and $K$ be a sufficiently large
constant depending on those given by \Cref{fact:norm-concentration}.
Let $r$ be a randomized Boolean function such that 
$\Ex_{x \sim \mcD}[r(x)] \ge \beta$,
and let $\hat{\mu}(r)$ be such that for all $i \in [d]$,
\[\abs{\hat{\mu}(r)_i - \Paren{\Ex_{x \sim \mcD}[r(x)x]}_i} \le 1/d.\]
Let $P$ be a $d/2$-dimensional linear subspace orthogonal to $\hat{\mu}(r)$, and
$\mcP$ be a $d/2 \times d$ matrix whose rows are an orthonormal basis for $P$.
Let $r_{\trunc}$ be the truncation of $r$ defined as
\[r_{\trunc}(x) \coloneqq r(x)
\land \Ind\Brac{\frac{\norm{\mcP x}_2}{\sqrt{d/2}} = 1 \pm \frac{10K d^{1/3} \ln d}{\sqrt{d/2}}}.\] 
Then the following conditions hold:
\begin{itemize}
		\item \textbf{Bounded mean:} $\norm{\mcP \Ex_{x \sim \mcD}[r_{\trunc}(x)x]}_2 \le O(\sqrt{1/d})$.
		\item \textbf{Thin-shell support:} Every $x$ for which $r_{\trunc}(x)$ can be nonzero satisfies
		$\frac{\norm{\mcP x}_2}{\sqrt{d/2}} = 1 \pm \frac{10K d^{1/3} \ln d}{\sqrt{d/2}}$.
		\item \textbf{Large support:} $\Ex_{x \sim \mcD}[r_{\trunc}(x)] \ge \beta - O(1/d)$.
\end{itemize}
Furthermore, $\Prx_{x \sim \mcD}[r(x) \ne r_{\trunc}(x)] \le O(1/d)$.

\end{claim}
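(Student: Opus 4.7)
The plan is to leverage the fact that the projected distribution of $\mcD$ under $\mcP$ is itself isotropic log-concave on $\R^{d/2}$, so thin-shell concentration applies directly to $\|\mcP x\|_2$. Since $\mcP$ has orthonormal rows, the covariance of $\mcP X$ for $X \sim \mcD$ equals $\mcP \mcP^T = I_{d/2}$ and its mean is $\mcP \cdot 0 = 0$; by \Cref{fact: log-concave closed under projections} it is also log-concave. Applying \Cref{cor:norm-concentration} (or directly \Cref{fact:norm-concentration}) to $\mcP X$ with deviation $t = 10 K d^{1/3} \ln d$ and $K$ sufficiently large places $t/\sqrt{d/2}$ in the cubic regime, giving an exponent $\Omega(K^3 \ln^3 d)$ and therefore
\[
\Prx_{x\sim\mcD}\!\left[\bigl|\|\mcP x\|_2 - \sqrt{d/2}\bigr| \ge 10K d^{1/3}\ln d\right] \le d^{-C}
\]
for any desired constant $C$. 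This estimate is the workhorse for everything that follows.

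The thin-shell support property is then immediate from the definition of $r_{\trunc}$. The final bound $\Prx_{x \sim \mcD}[r(x) \ne r_{\trunc}(x)] \le d^{-C}$ follows from the display above, and linearity of expectation then yields the large-support condition $\E[r_{\trunc}(x)] \ge \E[r(x)] - d^{-C} \ge \beta - O(1/d)$.

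For the bounded-mean property, I would use the triangle inequality
\[
\|\mcP \E[r_{\trunc}(x) x]\|_2 \;\le\; \|\mcP \E[r(x) x]\|_2 \;+\; \|\mcP \E[(r(x) - r_{\trunc}(x)) x]\|_2 .
\]
For the first summand, $\hat{\mu}(r) \perp P$ gives $\mcP \hat{\mu}(r) = 0$, so
\[
\|\mcP \E[r(x) x]\|_2 \;=\; \|\mcP(\E[r(x) x] - \hat{\mu}(r))\|_2 \;\le\; \|\E[r(x) x] - \hat{\mu}(r)\|_2 \;\le\; \sqrt{d \cdot (1/d)^2} \;=\; 1/\sqrt{d},
\]
using the coordinate-wise approximation guarantee and $\|\mcP\|_{op} = 1$. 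For the second summand, Cauchy--Schwarz together with $\E_{x\sim\mcD}[\|x\|_2^2] = d$ (isotropy) gives
\[
\|\mcP \E[(r - r_{\trunc}) x]\|_2 \;\le\; \E\bigl[\Ind[r \ne r_{\trunc}] \cdot \|x\|_2\bigr] \;\le\; \sqrt{\Pr[r \ne r_{\trunc}] \cdot d} \;\le\; \sqrt{d^{1-C}},
\]
which is $O(1/\sqrt{d})$ once we choose $K$ so that $C \ge 2$.

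The only genuinely delicate step is verifying that the chosen threshold $t \asymp K d^{1/3}\ln d$ lies in the cubic regime of \Cref{fact:norm-concentration} and yields a tail bound that is polynomially small to an arbitrarily large power. Once that is pinned down, the approximate-mean bound, the large-support bound, and the coupling $\Pr[r \ne r_{\trunc}] \le O(1/d)$ are all short deterministic consequences, and the thin-shell support condition is definitional.
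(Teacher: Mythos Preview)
Your proposal is correct and follows the paper's argument for the thin-shell, large-support, and $\Pr[r\neq r_{\trunc}]$ parts essentially verbatim (Corollary~\ref{cor:norm-concentration} applied to the $d/2$-dimensional projection $\mcP X$).

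For the \textbf{bounded-mean} part you take a different and cleaner route than the paper. The paper proves a coordinate-wise bound $|\mu(r_{\trunc})_i-\mu(r)_i|\le 1/d$ via an explicit tail integral of $\|x\|$ against the thin-shell probability (invoking \Cref{fact:norm-concentration} inside the integral), and then combines this with the coordinate-wise accuracy of $\hat\mu(r)$ by writing $\langle u,\mu(r_{\trunc})\rangle = \langle u,\mu(r_{\trunc})-\mu(r)\rangle + \langle u,\mu(r)-\hat\mu(r)\rangle + \langle u,\hat\mu(r)\rangle$. You instead handle the correction term in one shot by Cauchy--Schwarz against $\E\|x\|_2^2=d$, getting $\|\mcP\E[(r-r_{\trunc})x]\|_2\le \sqrt{d\cdot\Pr[r\neq r_{\trunc}]}$, which is $O(1/\sqrt d)$ once the tail probability is $O(1/d^2)$. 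This avoids the integral computation entirely at the mild cost of needing the slightly stronger $d^{-2}$ tail bound (which the $\ln d$ factor in the threshold comfortably provides, as you note). Both arguments land at $O(1/\sqrt d)$; yours is shorter.
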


\begin{proof}
First we will prove the final inequality,
\[\Prx_{x \sim \mcD}[r(x) \ne r_{\trunc}(x)] \le O(1/d).\] 
Since the projection of $\mcD$ by $\mcP$ is a log-concave distribution over $\R^{d/2}$ (\Cref{fact: log-concave closed under projections}),
this inequality follows from
from \Cref{cor:norm-concentration} applied to $\mcP(\mcD)$ with $\eps = 1/d$.
We now prove the other conditions.
\begin{itemize}
\item 
\textbf{Bounded mean:} 
Let 
\[\mu (r_{\trunc})\coloneqq \Ex_{x \sim \mcD}[r_{\trunc}(x)x] \quad \text{and} \quad \mu(r) \coloneqq \Ex_{x \sim \mcD}[r(x)x].\]  

First we claim that for every $i$, $\abs{\mu(r_{\trunc})_i - \mu(r)_i} \le 1/d.$
We have 
\begin{align*}
\mu(r)_i &= \mu(r_{\trunc})_i +\Ex[r(x)x_i \cdot \Ind[r(x) \ne r_{\trunc}(x)]] \\
&\le \mu(r_{\trunc})_i + \Ex[\norm{x} \cdot \Ind[r(x) \ne r_{\trunc}(x)]\\
&\le \mu(r_{\trunc})_i + \int_{t = 10 K d^{1/3} \ln d}^\infty t \cdot C\exp(-ct^3/d) dt \\
&\le  \mu(r_{\trunc})_i + C\int_{t = 10 K d^{1/3} \ln d}^\infty t^3 \cdot \exp(-ct^3/d) dt \\
&\le  \mu(r_{\trunc})_i + C\int_{u = (10 K \ln d)^3 d}^\infty u \cdot \exp(-cu/d) du \\
&\le  \mu(r_{\trunc})_i + \exp(-(10K\ln d)^3 \cdot c) \cdot ((10K\ln d)^3\cdot c+1) \cdot (d/c)^2 \\
&\le \mu(r_{\trunc})_i + 1/d.\tag{by setting $K$ sufficiently large}
\end{align*}
A symmetric argument lower bounds $\mu(r)_i$ by $\mu(r_{\trunc})_i - 1/d$.
Now we bound $\norm{\mcP \mu(r_{\trunc})}_2$.
We have $\norm{\mcP \mu(r_\trunc)}_2 \le \sup_{u \in P, \norm{u}_2 = 1} \langle u, \mu(r_{\trunc})\rangle.$
Consider a unit vector $u \in P$. We have 
\begin{align*}
		\langle u, \mu(r_{\trunc})\rangle &= \langle u, \mu(r_{\trunc}) - \mu(r) \rangle+ \langle u, \mu(r) - \hat{\mu}(r)\rangle + \langle u, \hat{\mu}(r) \rangle \\
								 &\le \sum_{i \in [d]} \abs{\frac{u_i}{d}} +  \sum_{i \in [d]} \abs{\frac{u_i}{d}}  + 0   \\
								 &\le 2/\sqrt{d}.
\end{align*}
\item 
\textbf{Thin-shell support:} This follows immediately from the definition of $r_{\trunc}$.
\item 
\textbf{Large support:} This follows immediately from the assumption that
$\Ex_{x \sim \mcN(0,I_d)}[r(x)] \ge \beta$ and the fact that
$\Prx_{x \sim \mcN(0,I_d)}[r(x) \ne r_{\trunc}(x)] \le O(1/d)$.
\end{itemize}
\end{proof}

\begin{claim}[Accuracy of the mean estimates]
\label{clm:mean-estimates}
Let $S$ be a set of $10d^{3}$ points drawn from a distribution $\mcD$ with covariance
$I_d$.
Let $\mcF$ be a collection of eight (possibly randomized) Boolean functions over $\R^d$.
Then with probability at least $1-O(d^{-3})$, the following holds for all $f \in \mcF$ and $i \in [d]$:
\[\abs{\frac{1}{T} \sum_{x \in T} x_i f(x) - \Ex_{x \sim \mcD} x_i f(x)} \le 1/d.\]
\end{claim}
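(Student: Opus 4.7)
The plan is to prove the uniform estimate by bounding one coordinate/function pair at a time and union bounding at the end. Fix $f \in \mcF$ and $i \in [d]$, and let $Y(x) \coloneqq x_i f(x)$. Since $|f(x)| \leq 1$, we have $|Y(x)| \leq |x_i|$, so the moments of $Y$ are controlled by the moments of the $i$-th marginal of $\mcD$. In the paper's setting $\mcD$ is subgaussian isotropic log-concave, so every marginal $x_i$ is subgaussian; in particular, $\E[x_i^{2k}] \leq (Ck)^k$ for all $k$, and the same bound holds for $Y$ and for $Z \coloneqq Y - \E[Y]$ (up to absolute constants).

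Next I would apply a high-moment concentration inequality to the empirical average $\bar Z_n = \frac{1}{n}\sum_{x \in S} Z(x)$, where $n = 10 d^3$. By Rosenthal's inequality for sums of i.i.d. mean-zero random variables,
\[
\E[\bar Z_n^{2k}] \;\leq\; c_k \Paren{\frac{\E[Z^{2k}]}{n^{2k-1}} + \frac{(\E[Z^2])^k}{n^k}} \;\leq\; O_k(n^{-k}),
\]
using $\E[Z^2] \leq 1$ and $\E[Z^{2k}] \leq (O(k))^k$, and the fact that for any fixed $k$ the first term is lower order once $n$ is polynomial in $d$. Markov's inequality then yields
\[
\Pr\Brac{|\bar Z_n| > 1/d} \;\leq\; d^{2k}\,\E[\bar Z_n^{2k}] \;\leq\; O_k\Paren{(d^2/n)^k}.
\]
Choosing $k = 4$ and $n = 10 d^3$ gives a per-pair failure probability of $O(d^{-4})$.

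Finally, I would union bound over the $8d$ pairs $(f,i) \in \mcF \times [d]$, which yields total failure probability $8d \cdot O(d^{-4}) = O(d^{-3})$, as required. The main obstacle is that the naive Chebyshev bound, which only uses the variance, is too weak: it gives per-pair failure $d^2/n = \Omega(1/d)$, which after the $8d$ union bound does not shrink. The fix is to exploit the fact that the marginals of $\mcD$ have all moments bounded (a consequence of log-concavity/subgaussianity), so the fourth-moment version of Chebyshev via Rosenthal suffices. I note also that the statement as written uses $T$ inside the sum; this should be read as $S$.
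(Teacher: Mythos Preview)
Your argument is correct and, in fact, more careful than the paper's own proof. The paper's proof invokes Chebyshev's inequality and asserts a per-pair failure probability of $0.01\,d^{-4}$, but with only the covariance assumption $\Var(x_i f(x))\le 1$ one has $\Var\bigl(\frac{1}{|T|}\sum x_i f(x)\bigr)\le 1/|T|$, and Chebyshev at threshold $1/d$ yields only $d^2/|T|=\Theta(1/d)$ per pair; after the $8d$-way union bound this does not give $O(d^{-3})$. You identify exactly this gap and repair it by passing to the $8$th moment via Rosenthal, which legitimately gives $O((d^2/n)^4)=O(d^{-4})$ per pair and hence $O(d^{-3})$ overall.

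Two remarks. First, your fix relies on bounded higher moments of the marginals $x_i$, which is \emph{not} implied by the bare hypothesis ``covariance $I_d$'' in the claim as stated; you are right to import the ambient subgaussian/log-concave assumption on $\mcD$ from the surrounding context, but it is worth flagging that the claim's hypotheses are formally too weak for the conclusion. Second, you correctly note the $S$/$T$ typo. Overall, your route (higher-moment Markov plus union bound) is the same skeleton as the paper's (second-moment Markov plus union bound), just with the moment order raised enough to make the arithmetic close; it buys correctness at the cost of invoking the extra moment assumption that is available anyway in this paper.
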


\begin{proof}
By the fact that $\mcD$ has covariance $I_d$, we have $\Var(x_i) = 1$ and $\Var(\frac{1}{|T|}\sum x_i)= 1/|T|$,
so by Chebyshev's inequality
\[\Prx_{T}\Brac{\abs{\Ex_{x \sim \mcD}[x_i f(x)] - \frac{1}{|T|} \sum_{x \in T} x_i f(x)}> \frac{10 d^2}{|T|}} \le 0.01d^{-4}.\]
Set $|T| \coloneqq 10 d^3$; then the deviation becomes $\frac{10d^2}{|T|} = \frac{1}{d}$.
Union bounding over $\mcF$ and $i \in [d]$, we have with probability at least $1 - O(d^{-3})$, all $f$ and all $i$
satisfy
\[\abs{\frac{1}{T} \sum_{x \in T} x_i f(x) - \Ex_{x \sim \mcD} x_i f(x)} \le 1/d.\]
\end{proof}

\begin{claim}[Accurate partitioning of (randomized) sets]
\label{clm:partition-accuracy}
Let $\mcD$ be a distribution over $\R^d \times \bits$ such that the $\R^d$-marginal is isotropic log-concave.
Let $\eps > d^{-1/7}$.
Let $\mcF$ be the set of eight (possibly randomized) Boolean functions over $\R^d$ given by $\RobInd_i$ and $\ErrInd_i(x) \coloneqq \Ind[y \ne h_i(x)]$ 
for $i \in [4]$.
Let $w_1,\ldots, w_4$ be the weights returned by \textsc{ComputeRoundingThresholds} and
$J_1, \ldots, J_4$ be the intervals generated by \textsc{ComputeClassifier}.
Then with probability at least $1-O(\delta)$, the following holds for all $f \in \mcF$ and all $i \in [4]$:
\[\Prx_{x \sim \mcD} [\langle x, u\rangle \in J_i \land f(x) = 1] = \Prx_{x \sim \mcD}[f(x)=1] \cdot w_i \pm O(\eps).\]
\end{claim}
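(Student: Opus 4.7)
The plan is to combine Claim~\ref{clm:mean-estimates} (accuracy of the empirical mean estimates), Claim~\ref{clm:log-concave-satisfies-properties} (verifying the hypotheses of Claim~\ref{clm:interval-mass}), and Claim~\ref{clm:interval-mass} itself (concentration of interval mass under random projection). First I would condition on the high-probability event of Claim~\ref{clm:mean-estimates}, which holds with probability $1 - O(d^{-3})$ over the sample set $M$ of size $10 d^3$: on this event, each of the eight estimates $\hat\mu_i, \hat\mu_i'$ is within $1/d$ per coordinate of the corresponding true mean $\mathbb{E}_{x\sim \mcD}[f(x)\, x]$ for $f \in \mcF$. The subspace $P$ produced by the algorithm is orthogonal to all eight estimates, which is precisely the hypothesis needed to invoke Claim~\ref{clm:log-concave-satisfies-properties} for each of the eight functions.

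Next I would fix any $f \in \mcF$ and any $i \in [4]$ and split on the mass $\beta := \Pr_{x \sim \mcD}[f(x) = 1]$. If $\beta \le \eps$, the conclusion is trivial since both $\Pr[\langle x, u^\star\rangle \in J_i \wedge f(x) = 1]$ and $\Pr[f(x) = 1] \cdot w_i$ lie in $[0, \eps]$. In the main case $\beta \ge \eps$, I would apply Claim~\ref{clm:log-concave-satisfies-properties} to $f$ to obtain a truncation $f_{\trunc}$ satisfying bounded mean with $\beta_1 = O(1/\sqrt{d})$, thin-shell support with $\beta_2 = O(d^{-1/6} \ln d)$, and large support with $\beta_3 \ge \beta - O(1/d) \ge \eps/2$. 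The key quantitative check is that, thanks to the hypothesis $\eps \ge d^{-1/7}$, both $\beta_1/\beta_3 \le O(d^{-5/14}) = O(\eps)$ and $\beta_2 = O(\eps)$ hold for $d$ sufficiently large; additionally $\Pr_{x\sim\mcD}[f(x)\neq f_{\trunc}(x)] \le O(1/d) = O(\eps)$.

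Then I would invoke Claim~\ref{clm:interval-mass} with these parameters, taking $v := u^\star$ drawn uniformly in the $d/2$-dimensional subspace $P$. The failure probability works out to $\poly(d) \cdot \exp(-\Omega(d^{2/7}/\ln d))$, well below $\delta$ for $d$ sufficiently large. On the success event, for every interval $J$ the conditional mass $\Pr_{z \sim \mcD_{u^\star}^{f_{\trunc}}}[z \in J]$ agrees with $\Pr_{z \sim \mcN(0,1)}[z \in J]$ up to $O(\eps)$. Since $J_i$ is chosen to have standard Gaussian mass exactly $w_i$, multiplying by $\Pr[f_{\trunc}(x) = 1]$ and absorbing the $O(1/d) \le O(\eps)$ discrepancy between $f$ and $f_{\trunc}$ yields the required estimate for the original $f$. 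A union bound over the eight functions and four intervals, combined with the outer $\log(1/\delta)$ boosting in the algorithm, completes the proof.

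The step I expect to demand the most care is the quantitative bookkeeping reconciling $\beta_1, \beta_2, \beta_3$ with $\eps$: this is precisely why the hypothesis $\eps \ge d^{-1/7}$ must appear, and also why small-support sets need to be routed through the trivial bound rather than through Claim~\ref{clm:interval-mass}, whose guarantee $O(\beta_1/\beta_3 + \beta_2 + \eps)$ deteriorates when $\beta_3$ is very small. The rest is a composition of the preceding lemmas.
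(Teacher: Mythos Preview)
Your proposal is correct and follows essentially the same approach as the paper: condition on Claim~\ref{clm:mean-estimates}, handle small-mass functions trivially, feed the remaining ones through Claim~\ref{clm:log-concave-satisfies-properties} into Claim~\ref{clm:interval-mass} with the same $\beta_1,\beta_2,\beta_3$, and then use the $\log(1/\delta)$-iteration boosting (together with the $S_{\test}$ check) to convert the $1-1/\poly(d)$ per-iteration success into $1-O(\delta)$ overall. Your quantitative checks on $\beta_1/\beta_3$, $\beta_2$, and the projection failure probability match the paper's (up to harmless differences in the exponent bookkeeping).
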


\begin{proof}
First we will claim that each iteration succeeds with probability $1 - 1/\poly(n)$.
Consider only the functions $f \in \mcF$ such that $\Pr[f(x)=1] \ge \eps$; for the others, the statement 
holds trivially.
\textsc{ComputeClassifier} estimates the means $\mu_i, \mu_i'; i \in [4]$ of the functions from its
sample of size $10d^3$, then sets $P$ to be a linear subspace orthogonal to all of these.
By \Cref{clm:mean-estimates} we have that with probability $1 - O(d^{-3})$ the mean-accuracy assumption of 
\Cref{clm:log-concave-satisfies-properties} is satisfied.
Thus, by \Cref{clm:log-concave-satisfies-properties}, the assumptions of \Cref{clm:interval-mass} are satisfied
with the following parameters:
\begin{itemize}
\item
$\beta_1 = O(\sqrt{1/d})$
\item 
$\beta_2 = O(d^{-1/6}\ln d)$
\item 
$\beta_3 = \eps - O(1/d)$.
\end{itemize}
\textsc{ComputeClassifier} then generates a uniform random unit vector $u \in P$ and four intervals 
$J_1\ldots J_4$ of 
Gaussian volume $w_1,\ldots, w_4$. 
By \Cref{clm:interval-mass}, with high probability over $u$, the following holds for each $J_i$:
\[\Prx_{x \sim \mcD | f(x)=1} [\langle x, u\rangle \in J_i] = w_i \pm O(\beta_1/\beta_3 + \beta_2 + \eps) = w_i \pm O(\eps);\]
taking the conjunction with the event that $f(x) = 1$ gives us the desired
\[\Prx_{x \sim \mcD} [\langle x, u\rangle \in J_i \land f(x)=1] = \Prx_{x \sim \mcD}[f(x)=1]\cdot (w_i \pm O(\eps)) \le \Prx_{x \sim \mcD}[f(x)=1]\cdot w_i \pm O(\eps).\]
The failure probability is
\begin{align*}
\frac{(1 + \beta_1/\beta_3) \ln (1/\eps)}{\eps^3 \beta_3(1 - 2\beta_2 - 2\beta_1)} &\cdot \exp\Paren{-\Omega\Paren{\frac{\eps^4 \beta_3 (1 - 2\beta_2 - 2\beta_1) d}{\ln(1/\eps)(1 + \beta_1/\beta_3)}}} \\
																				   &\frac{(1 + o(1)) \ln(1/\eps)}{\eps^4 (1 - o(1))} \cdot \exp\Paren{-\Omega\Paren{\frac{\eps^5 (1 - o(1)) d}{\ln(1/\eps)(1 + o(1))}}} \\
																				   &\frac{\ln(1/\eps)}{\eps^4} \cdot \exp\Paren{-\Omega\Paren{\frac{\eps^5 d}{\ln(1/\eps)}}} \\
																				   &1/\eps^5 \cdot \exp\Paren{-\Omega(\eps^6 d)}
\end{align*}
By our assumption that $\eps > \Omega(d^{-1/7})$, this term is dominated by $\exp(-\Omega(d^{1/7}))$.
The total failure probability is dominated by the $O(d^{-3})$ probability of failure for the mean estimation.
Thus, we have that with probability $\ge 1 - 1/\poly(d)$, no bad tail events occur, and the guarantee holds
for all $f \in \mcF$ and $i \in [4]$.

The following boosting argument is relevant only when $\delta$ is smaller than this $\approx d^{-3}$ failure
probability.
In each iteration we test the accuracy of the partition by comparing
\[\Pr[\langle x, u\rangle \in J_i \land f(x) = 1]\quad \text{to} \quad \Pr[f(x)=1] \cdot w_i\]
for each function $f$.
It suffices for these estimates to be accurate up to $\eps$ additive error.
By a Chernoff bound, for a test set of size $\Omega(\log(1/\delta)/\eps^2)$, all estimates are $\eps$-accurate
with probability at least $1-\delta^2$, so that after union bounding over all iterations the
estimates are still accurate with probability at least $1 - \delta$.
Since $d\ge 2$, the probability that a good partition is not found in any of the $\log(1/\delta)$ independent 
attempts is at most $2^{-\log(1/\delta)} = \delta$.
Thus, the total failure probability is at most $2\delta$.
\end{proof}
\subsection{Correctness and complexity of \Cref{alg:ComputeClassifier}.}
\label{sec: correctness of ComputeClassifier} 
Now we will finish the analysis of \textsc{ComputeClassifier}, restated here for convenience.
\ComputeClassifier*
\begin{proof}[Proof of \Cref{thm:compute-partition}]

By \Cref{lem:robustness-lca} and the triangle inequality,
we have that all $i \in [4]$ satisfy
\begin{itemize}
\item[(i)] 
		$\Prx_{(x,y) \sim \mcD}[h_i(x) \ne y] \le \err_{\mcD}(\sign(p-t_i)) + \wh{\iso}_{\mcD, 10r}(\sign(p-t_i), 0.8)$
\item[(ii)]
$\Ex_{x \sim \mcD}[\RobInd_i(x)] \ge 1 - O(\wh{\NS}_{\mcD,10r}(\sign(p - t_i))$
\item[(iii)] 
For every $x$ such that $\RobInd_i(x)=1$ and $x':\norm{x - x'} \le r$, we have $h_i(x') = h_i(x)$.
\end{itemize}

First we will handle the error condition. 
By \Cref{thm:ComputeRoundingThresholds}, we have that with probability $1-O(\delta)$,
\begin{align*}
		\sum_{i \in [4]} w_i \cdot ( \err_{\mcD}(\sign(p-t_i) &+ \wh{\iso}_{\mcD, 10r}(\sign(p-t_i), 0.8) ) \\
&\le \Ex_{t \sim [-1,1]}[\err_{\mcD}(\sign(p-t))] + O(\eps).
\end{align*}
Thus, we have
\[\sum_{i \in [4]} w_i \cdot \Prx_{(x,y) \sim \mcD}[h_i(x) \ne y] \le \Ex_{t \sim [-1,1]}[\err_{\mcD}(\sign(p-t))] + O(\eps).\]
By \Cref{clm:partition-accuracy}, we then have with probability $\ge 1-O(\delta)$, 
\[\sum_{i \in [4]} \Prx_{(x,y) \sim \mcD}\Brac{h_i(x) \ne y \land \langle x, u^\star \rangle \in J_i} \le \Ex_{t \sim [-1,1]}[\err_{\mcD}(\sign(p-t))] + O(\eps). \]
Therefore, applying the definition of $h$, we have
\begin{align*}
\err_{\mcD}(h) \le \E_{t \sim [-1,1]}[\err_{\mcD}(\sign(p-t))] + O(\eps).
\end{align*}

Now we analyze the robustness condition. By \Cref{thm:ComputeRoundingThresholds}, we have that with probability $\ge 1-O(\delta)$,
\[\sum_{i \in [4]} w_i \cdot \wh{\NS}_{\mcD,10r}(\sign(p-t_i)) \le O(r + \eps).\]
By a Markov bound and the definitions of $\RobInd$ and $\wh{\NS}$, we have the following for each $i$:
\begin{align*}
		\Ex[\RobInd_i(x)] &= \Prx[\hat{\phi}_{\sign(p-t_i)}(x) \le 0.1] = 1 - \Pr[\hat{\phi}_{\sign(p-t_i)}(x) > 0.9] \\
				  &= 1 - \Pr\Brac{\hat{\phi}(x) > 0.9/\wh{\NS}(\sign(p-t_i)) \cdot \E[\hat{\phi}(x)]} \\
				  &\ge 1 - \lfrac{10}{9}\wh{\NS}(\sign(p-t_i).
\end{align*}
Thus we have 
\begin{align*}
		\sum_{i \in [4]} w_i \cdot \Ex_{\mcD}[\RobInd_i(x)] &\ge \sum_{i \in [4]} w_i \cdot (1 - \lfrac{10}{9} \Ex_{\mcD}[\hat{\phi}_{\sign(p-t_i),10r}(x)]) \\
															&\ge 1 - \sum_{i \in [4]} w_i \cdot \lfrac{10}{9} \Ex_{\mcD}[\hat{\phi}_{\sign(p-t_i),10r}(x)]\\
													&\ge 1 - O(r + \eps).
\end{align*}
By \Cref{clm:partition-accuracy} we then have with probability $\ge 1- \delta$,
\[\sum_{i \in [4]} \Ex_{x \sim \mcD}\Brac{\RobInd_i(x) \cdot \Ind[\langle x, u^\star \rangle \in J_i]} \ge 1 - O(r+\eps). \]
We now claim that for each $J_i$, all $x$ with $\RobInd_i(x)=1$ such that $\langle x, u^\star \rangle$ 
is at least $r$ away from the interval boundary satisfy the adversarial robustness condition
\[\forall x':\norm{x - x'}_2 \le r \text{ and } h(x) = h(x').\]
Since $x$ is $r$ away from the interval boundary, all $x'$ such that $\norm{x - x'}_2 \le r$ satisfy 
$\langle x', u^\star \rangle \in J_i$, so they are also labeled by $h_i$. By \Cref{lem:robustness-lca}, all $x$
with $\RobInd_i(x) = 1$ satisfy the robustness condition in $h_i$, so, since they are also $r$ away from the boundary, they satisfy the robustness condition in $h$.
Thus we have 
\begin{multline*}
		\AdvRob_{\mcD,r}(h) \ge \sum_{i \in [4]} \Paren{\Ex_{x \sim \mcD}\Brac{\RobInd_i(x) \cdot \Ind[\langle x, u^\star \rangle \in J_i]}} \\
		- \Prx_{x \sim \mcD}[\langle x, u^\star \rangle\text{ within $r$ of an interval boundary}]\\
\ge 1 - O(r+\eps) - O(r),
\end{multline*}
where the boundary probability is bounded by $O(r)$ due to \Cref{fact: log-concave are anticoncentrated}.
Thus when all subroutines are successful, the guarantees of the theorem hold. The total success probability is
$\ge 1 - O(\delta)$
after union bounding the failure probabilities of all subroutines.
This concludes the proof of correctness.

We now analyze the running time and query complexity to $\hat{\phi}$.
{\sc ComputeRoundingThresholds} is called once. For $\log(1/\delta)$ iterations,
it estimates $\err_t, \NS_t, \iso_t$ for each of $100/\eps^2$ rounding thresholds $t$;
this step makes $O(\log(1/\delta)/\eps^2)$ queries to $\hat{\phi}$ and takes 
$\poly(d^{k} \cdot 1/\eps \cdot \log(1/\delta))$ additional time, due to the evaluations of
the degree-$k$ polynomial $p$.
Then it solves a linear program of constant size for $(100/\eps^2)^4$ iterations. 
The total running time and query complexity of {\sc ComputeRoundingThresholds} are dominated by the first term, which is $\poly(d^{k} \cdot 1/\eps \cdot \log(1/\delta))$.

The rest of {\sc ComputeClassifier} repeats $\log(1/\delta)$ times and does the following each repetition:
\begin{itemize}
		\item[a)] evaluate $\Ind[h_i(x) \ne y]$ and $\RobInd_i(x)$ for each $i \in [4]$ and $(x,y)$ in $M$,
    \item[b)] obtain a basis orthogonal to the estimated mean vectors, a random unit vector in this space, and a partition of the real line into intervals
    \item[c)] check accuracy of the partition with respect to $S_{\test}$. 
\end{itemize}

Each evaluation of $\Ind[h_i(x) \ne y]$ makes one query to $\hat{\phi}$ and takes $d^{O(k)}$ time, as it simply calls {\sc RobustnessLCA}, which makes one evaluation of a degree-$k$ PTF and one call to $\hat{\phi}$.
Each evaluation of $\RobInd_i(x)$ makes one query to $\hat{\phi}$ and takes $O(1)$ additional time.
Thus items (a) and (c) take $d^{O(k)}$ time and queries.
For item (b), obtaining the basis takes $\poly(d)$ time by Gaussian elimination, and approximating the interval boundaries can be done in $\poly(1/\epsilon)$ time by using a numerical $\epsilon$-approximation to the error function\footnote{Such numerical approximations are standard and take  $\poly(1/\epsilon)$ time. Alternatively, one could use a randomized algorithm that takes  $\poly(1/\epsilon)$ samples from the gaussian and uses them to approximate the relevant values $w_i$. Note that we presented the $w_i$'s as exact for conciseness, but since \Cref{clm:partition-accuracy} already guarantees only an $O(\eps)$-accurate partition, we see that an additional $\eps$ error in the estimation of the $w_i$'s is asymptotically irrelevant.} 
\[
\text{erf}(t)=\frac{1}{\sqrt{\pi}}\int_{0}^{t}e^{-t^2/2}.
\]
Overall, the total time and query complexity of {\sc ComputeClassifier} is $\poly(d^{k} \cdot 1/\eps \cdot \log(1/\delta))$, as desired.
\end{proof}

\section{Verifiable robustness}
\label{sec:verifiable}
In this section we prove that under complexity assumptions, 
the robustness guarantee of our learning algorithm can be made efficiently verifiable
as discussed in \Cref{sec:intro-verifiable}.
The verifier certifies that a point satisfies the robustness condition.
We formally state this result:

\begin{corollary}[Deterministic robustness]
\label{cor:deterministic}
If $\textsf{P} = \textsf{BPP}$, then there is a learning algorithm $\mathcal{B}$ that, given access to labeled samples from a subgaussian isotropic log-concave distribution, runs in time $d^{\tilde{O}(1/\eps^2)} \cdot \log(1/\delta)$ and produces a hypothesis $h$ with the following guarantees:
\begin{itemize}
\item \textbf{Agnostic approximation:} With probability at least $1- O(\delta)$, $\Prx_{(x,y) \sim \mcD}[h(x) \ne y] \le \opt + O(\eps)$, 
				where $\opt$ is the misclassification error of the best halfspace.

\item \textbf{Verifiable robustness:} There is a verifier that runs in time $d^{\tilde{O}(1/\eps^2)}$ that takes as input a circuit $g$ and a point $x \in \R^d$ that \emph{always} rejects if 
\[\exists z:\norm{z}_2 \le r \text{ and } g(x) \ne g(x+z).\]
If $g = h$, then with probability at least $1 - O(\delta)$ over the randomness of $\mathcal{B}$, the verifier accepts with probability at least $1 - O(r + \eps)$ over $x \sim D$.
\end{itemize}
\end{corollary}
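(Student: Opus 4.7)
The plan is to derandomize the main algorithm of \Cref{thm:main} under $\textsf{P} = \textsf{BPP}$ and then exploit the explicit structure of its output to build an efficient verifier. Under $\textsf{P} = \textsf{BPP}$, the randomized implementation $\tilde{\phi}$ of \Cref{sec: randomized construction of phi hat} (a BPP procedure approximating an expectation over Gaussian noise) can be replaced by a deterministic polynomial-time algorithm $\hat{\phi}$ satisfying \Cref{def: local noise sensitivity approximator} on every input. Plugging this $\hat{\phi}$ into \textsc{RobustLearn} produces a hypothesis
\[h(x) = \sum_{i=1}^4 \textsc{RobustnessLCA}(x, \sign(p - t_i), r) \cdot \Ind[\langle u^\star, x \rangle \in J_i]\]
that is entirely specified by the tuple $(p, t_1, \ldots, t_4, u^\star, J_1, \ldots, J_4)$; the algorithm $\mathcal{B}$ outputs this tuple as its hypothesis description.

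On input a circuit $g$ and a point $x$, the verifier first attempts to parse $g$ as such a tuple; if parsing fails, it rejects. Otherwise, it locates the unique $i$ with $\langle u^\star, x \rangle \in J_i$ and accepts iff (a) the distance from $\langle u^\star, x \rangle$ to the boundary of $J_i$ strictly exceeds $r$, and (b) $\hat{\phi}_{\sign(p - t_i), 10r}(x) \le 0.1$, where $\hat{\phi}$ is the deterministic approximator fixed by the algorithm specification (the verifier computes it from scratch rather than trusting a subroutine supplied by $g$). For soundness, if both (a) and (b) hold, then for every $x'$ with $\|x - x'\|_2 \le r$, condition (a) combined with $\|u^\star\|_2 = 1$ forces $\langle u^\star, x' \rangle \in J_i$, so $g$ evaluates to $h_i(x')$ at $x'$; condition (b) combined with \Cref{lem:robustness-lca}(c) then yields $h_i(x') = h_i(x)$. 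Thus $g$ is constant on the $\ell_2$-ball of radius $r$ around $x$, so the verifier never accepts on a non-robust input (when parsing fails it trivially rejects).

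For completeness when $g = h$: condition (a) fails with probability at most $O(r)$ over $x \sim \mcD$, by \Cref{fact: log-concave are anticoncentrated} applied along $u^\star$ (the unsafe region is the preimage of a union of intervals of total length $O(r)$). Condition (b) fails with probability at most $O(r + \eps)$: by \Cref{thm:ComputeRoundingThresholds}, with probability $1 - O(\delta)$ over the randomness of $\mathcal{B}$ we have $\sum_i w_i \wh{\NS}_{\mcD, 10r}(\sign(p - t_i)) \le 200r + O(\eps)$; a Markov bound (as in the robustness portion of the proof of \Cref{thm:compute-partition}) converts this into $\sum_i w_i \Pr_x[\hat{\phi}_{\sign(p - t_i), 10r}(x) > 0.1] \le O(r + \eps)$; and \Cref{clm:partition-accuracy} transfers this weighted sum to the per-part probability under the partition by $u^\star$. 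Union bounding with (a) and combining with the accuracy guarantee of \Cref{thm:main} yields the stated corollary. The running time of $\mathcal{B}$ is $d^{\tilde{O}(1/\eps^2)} \log(1/\delta)$ by \Cref{thm:main}, and the verifier evaluates $p$, one inner product, four interval checks, and one call to $\hat{\phi}$, taking $d^{\tilde{O}(1/\eps^2)}$ time overall. The main subtlety, rather than a true obstacle, is that condition (a) must use strict inequality of Euclidean distance along $u^\star$ so that the entire $\ell_2$-ball of radius $r$ around an accepted $x$ stays inside the same $J_i$; once this is in place, the reduction to \Cref{lem:robustness-lca}(c) and the existing analysis of \Cref{thm:compute-partition} is immediate.
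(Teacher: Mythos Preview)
Your proposal is correct and follows essentially the same approach as the paper: derandomize $\hat{\phi}$ under $\mathsf{P}=\mathsf{BPP}$, have $\mathcal{B}$ output the structured description $(p,t_1,\ldots,t_4,u^\star,J_1,\ldots,J_4)$, and build a verifier that accepts exactly when $\langle u^\star,x\rangle$ is more than $r$ from all interval boundaries and $\hat{\phi}_{\sign(p-t_i),10r}(x)$ is small; soundness comes from \Cref{lem:robustness-lca}(c) and completeness from the robustness analysis inside \Cref{thm:compute-partition}. The only cosmetic differences are that the paper packages the description via a \textsc{Compile} routine and checks circuit equality against the supplied $g$, and it uses the slightly tighter threshold $0.1-\eps$ rather than $0.1$; neither affects the argument.
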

We use the following fact:
\begin{fact}[Derandomized estimation of $\hat{\phi}$]
\label{fact:derandomization}
If $\textsf{P} = \textsf{BPP}$, then there exists a deterministic algorithm running in time $d^{O(k)} \cdot \poly(1/\eps)$ that takes as input a degree-$k$ PTF $f$, a radius $r$, and an input $x \in \R^d$ and outputs an estimate $\hat{\phi}_{f,10r}(x)$ such that $\hat{\phi}_{f,10r}(x) = \phi_{f,10r}(x) \pm \eps$.
\end{fact}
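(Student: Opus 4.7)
The plan is to derandomize the natural Monte Carlo estimator for $\phi_{f,10r}(x)$ by expressing the approximation task as a family of $O(1/\eps)$ promise-$\mathsf{BPP}$ decision problems and then invoking the standard fact that $\mathsf{P}=\mathsf{BPP}$ upgrades to $\mathrm{prP}=\mathrm{prBPP}$.

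First I would describe the randomized baseline: draw $N = \Theta(\log(1/\delta)/\eps^2)$ independent samples $z_1,\ldots,z_N$ from $\mathcal{N}(0,I_d)$, discretized to precision $\poly(\eps/d)$ per coordinate and truncated to norm $O(\sqrt{d\log(d/\eps)})$, so that the sampled distribution has total variation distance at most $\eps/8$ from $\mathcal{N}(0,I_d)$. Evaluate the degree-$k$ PTF $f$ at $x$ and at each $x+10rz_i$ (each evaluation takes $d^{O(k)}$ time), and output the empirical disagreement fraction $\hat\phi = \frac{1}{N}\sum_i \tfrac{1}{2}|f(x)-f(x+10rz_i)|$. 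Hoeffding's inequality combined with the TV bound then gives $|\hat\phi - \phi_{f,10r}(x)|\le\eps/4$ with probability $\ge 1-\delta$. The total running time is $d^{O(k)}\cdot\poly(1/\eps)\cdot\log(1/\delta)$ on a Boolean Turing machine, so this is a genuine BPP algorithm.

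Next I would extract a deterministic estimator. For each $j\in\{0,1,\ldots,\lceil 4/\eps\rceil\}$ define the promise problem $\Pi_j$: on input $(f,r,x)$, return YES if $\phi_{f,10r}(x)\ge j\eps/4+\eps/8$ and NO if $\phi_{f,10r}(x)\le j\eps/4-\eps/8$ (any output is allowed otherwise). Setting $\delta=1/3$ in the randomized estimator above decides every $\Pi_j$ in randomized polynomial time, so $\Pi_j\in\mathrm{prBPP}$. Since $\mathsf{P}=\mathsf{BPP}$ implies $\mathrm{prP}=\mathrm{prBPP}$ (a standard consequence of the inclusion $\mathrm{prBPP}\subseteq \mathrm{prP}$ under the derandomization hypothesis, obtained by the usual quantifier shift; see for example Goldreich's textbook), each $\Pi_j$ admits a deterministic polynomial-time decider $A_j$. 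The final estimator runs all $O(1/\eps)$ deciders, sets $j^\star$ to be the largest index for which $A_{j^\star}=\text{YES}$, and returns $j^\star\eps/4$. A short monotonicity argument shows that any $A_j$ falling inside a promise gap must have threshold within $\eps/8$ of $\phi_{f,10r}(x)$, so $j^\star \eps/4$ lies within $\eps/2 < \eps$ of $\phi_{f,10r}(x)$.

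The main obstacle is essentially bookkeeping rather than any substantial mathematical difficulty: one must fix a bounded-precision Gaussian sampler so that the randomized estimator genuinely operates on Boolean inputs and the invocation of $\mathrm{prBPP}$ is formally well-defined, and one must handle the promise-gap issue by the slack $\eps/8$ together with the monotonicity observation above. The $\mathrm{prP}=\mathrm{prBPP}$ step is a black-box use of a standard result and does not require anything specific to log-concave distributions or to polynomial threshold functions beyond the fact that the empirical-disagreement sampler is a BPP algorithm.
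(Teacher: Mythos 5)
Your proposal is correct and follows essentially the same route as the paper: reduce the estimation task to promise threshold-decision problems solved by Monte Carlo sampling of Gaussian perturbations with a Chernoff/Hoeffding bound, derandomize those deciders under $\mathsf{P} = \mathsf{BPP}$, and recover the estimate from $\poly(1/\eps)$ calls (the paper uses binary search over the threshold rather than your grid of $O(1/\eps)$ thresholds, an immaterial difference). Your extra care about discretizing the Gaussian sampler and invoking the promise-problem form $\mathrm{prP}=\mathrm{prBPP}$ only makes explicit details the paper leaves implicit.
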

\begin{proof}
Observe that there is a randomized algorithm running in $d^{O(k)}/\eps^2$ time that takes as input a threshold $t$ and decides if $\phi_f(x) \ge t$,
succeeding with probability $\ge 2/3$ whenever $\phi_f(x) > t+\eps$ or $\phi_f(x) < t - \eps$. 
This is the algorithm that samples $O(1/\eps^2)$ points $z$ from $\mcN(0,I_d)$ and 
evaluates $f(x+10r z)$ on each of them to estimate $\phi$; its analysis is a Chernoff bound.
If $\textsf{P} = \textsf{BPP}$, then there exists a deterministic algorithm for this decision problem also running in time $d^{O(k)} \cdot \poly(1/\eps)$.
Since $\phi_f(x) \in [0,1]$, we can binary search with $\log(1/\eps)$ iterations to find a $t$ such that $\phi_f(x) \in t \pm \eps$.
\end{proof}
The learning algorithm is simply $\textsc{RobustLearn}$, augmented to provide some extra information,
with the estimates of $\hat{\phi}$ provided by the deterministic estimator. 
The verifier checks that the hypothesis matches a ``template;'' 
this will prove that the unknown circuit is in fact of the form that \textsc{RobustLearn} is supposed to return, 
and for which our correctness analysis holds.
Since for any hypothesis matching the template, 
any point $x$ for which $\phi(x) \le 0.1$ satisfies the robustness condition,
the verifier will then deterministically estimate $\phi(x)$.

We define the template below:
\begin{lemma}[Hypothesis template]
\label{lem:template}
Fix a set $B$ of basis functions for the set of degree-$k$ polymomials over $\R^d$.
There is an algorithm $\textsc{Compile}$ that takes as input a $|B|$-length vector $\vec{v}$ of real-valued coefficients, real numbers $t_1,\ldots, t_4$, a unit vector $u \in \R^d$, and real numbers $c_1 < c_2 < c_3$.
For each $i \in [4]$, let the PTF $h_i$ be defined 
\[h_i(x) = \sign\Paren{\Paren{\sum_{b \in B} v_b \cdot  b(x)} - t_i}.\]
Let the intervals $J_1,\ldots,J_4$ be the partition of the real line induced by $c_1, c_2, c_3$.
The algorithm outputs a circuit computing the hypothesis
\[h(x) = \sum_{i \in 4} \textsc{RobustnessLCA}(x, h_i, r) \cdot \Ind[\langle x, u \rangle \in J_i].\]
It also outputs each of the PTFs $h_1,\ldots, h_4$.
The running time is $d^{O(k)}$.
\end{lemma}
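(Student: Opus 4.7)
The plan is to explicitly describe how \textsc{Compile} builds the claimed circuit by composing four simple subroutines and gluing them together. Since the lemma is essentially a constructive/bookkeeping statement, the work is in showing that each piece has the right form and fits within the $d^{O(k)}$ time budget; there is no deep mathematical obstruction.

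First, I would construct, from $\vec{v}$ and each threshold $t_i$, an explicit arithmetic circuit for the real-valued polynomial $p(x) = \sum_{b \in B} v_b \cdot b(x)$ and for the Boolean output $h_i(x) = \sign(p(x) - t_i)$. Since $|B| = \binom{d+k}{k} = d^{O(k)}$ and each basis monomial of degree $\le k$ can be evaluated by a circuit of size $d^{O(k)}$, this yields circuits of the desired size. \textsc{Compile} simply emits these four PTF circuits as the second part of its output. Second, I would build a small subcircuit that, on input $x$, computes the inner product $\langle x, u\rangle$ (a size-$O(d)$ arithmetic block) and then compares it against the three boundaries $c_1 < c_2 < c_3$ to determine which of the intervals $J_1, J_2, J_3, J_4$ contains $\langle x, u\rangle$. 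The output is a one-hot vector of four indicator bits $\Ind[\langle x, u\rangle \in J_i]$.

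Third, I would implement $\textsc{RobustnessLCA}(x, h_i, r)$ in circuit form. By inspecting \Cref{alg: RobustnessLCA}, this routine (i) calls the local noise sensitivity approximator $\hat{\phi}_{h_i,10r}(x)$, (ii) compares the result to $0.8$, and (iii) outputs either $h_i(x)$ or $-h_i(x)$. Since we are in the verifiability setting where $\textsf{P} = \textsf{BPP}$, I would instantiate $\hat{\phi}$ via the deterministic estimator of \Cref{fact:derandomization}, which runs in time $d^{O(k)} \cdot \poly(1/\eps)$ and whose uniform-circuit description is computable in the same time. Wiring the deterministic $\hat{\phi}$ block to the $h_i$ subcircuit, the comparator, and a MUX that selects between $h_i(x)$ and $-h_i(x)$ produces a deterministic circuit that computes $\textsc{RobustnessLCA}(x, h_i, r)$ in $d^{O(k)}$ size.

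Finally, the top-level circuit takes the four LCA outputs, multiplies each by its corresponding interval indicator (exactly one of which is $1$), and sums them. This realizes $h(x) = \sum_{i=1}^4 \textsc{RobustnessLCA}(x, h_i, r)\cdot \Ind[\langle x, u\rangle \in J_i]$. The total circuit size, and hence the running time of \textsc{Compile}, is dominated by the four copies of the polynomial evaluation and the four invocations of the derandomized $\hat{\phi}$, each of size $d^{O(k)}$, giving the claimed $d^{O(k)}$ bound. The only point that needs a little care is ensuring that the deterministic $\hat{\phi}$-subcircuit of \Cref{fact:derandomization} can indeed be emitted in uniform time $d^{O(k)}$ from a description of the PTF $h_i$; this follows because the $\textsf{P} = \textsf{BPP}$ derandomization produces a single fixed deterministic algorithm that is then hard-coded, with $h_i$'s description supplied as input, so its circuit representation is constructed by the standard simulation of a Turing machine by a circuit of comparable size.
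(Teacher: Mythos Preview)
Your proposal is correct and takes essentially the same approach as the paper: both argue that evaluating $h(x)$ is a deterministic $d^{O(k)}$-time computation (using the derandomized $\hat\phi$ of \Cref{fact:derandomization} inside \textsc{RobustnessLCA}), so a circuit of that size exists and \textsc{Compile} just hardcodes the parameters. The paper's proof is a one-paragraph version of exactly the construction you spell out in detail.
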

\begin{proof}
Observe that there is a deterministic algorithm running in $d^{O(k)}$ time that takes as input all of the given parameters and the input $x$, and
evaluates $h(x)$,
using the deterministic implementation of $\hat{\phi}$ in $\textsc{RobustnessLCA}$.
Therefore there is a circuit of size $d^{O(k)}$ with the same behavior.
The algorithm $\textsc{Compile}$ takes this circuit and hardcodes all the input parameters except for $x$,
then outputs the resulting circuit.
This takes time $d^{O(k)}$ and the output is a circuit that takes $x$ as input and evaluates $h(x)$.
By the same argument, in $d^{O(k)}$ time $\textsc{Compile}$ can also output the PTFs $h_1,\ldots,h_4$.
\end{proof}

We include pseudocode of the verifier below.
For brevity we will refer to the package of data taken as input by $\textsc{Compile}$ as $\mathsf{data}$.
\begin{algorithm}
\begin{algorithmic}[1] 
\State \textbf{Input:} circuit $g$, point $x \in \R^n$, error tolerance $\eps$, hypothesis parameters $\mathsf{data}$
\State $h, h_1,h_2,h_3,h_4 \gets \textsc{Compile}(\mathsf{data})$
\If{$g \ne h$} \Return reject.
\EndIf
\State Let $u$ be the unit vector in $\mathsf{data}$ and $J_1,\ldots,J_4$ be the partition induced by the interval boundaries $c_1,c_2,c_3$ in $\mathsf{data}$.
\State Let $i$ be the interval such that $\langle x, u\rangle \in J_i$.
\If{$\hat{\phi}_{h_i, 10r}(x) > 0.1 - \eps$} \Return reject.
\EndIf
\If{$|\langle x, u \rangle - c_j| \le r$ for any of the interval boundaries $c_1,c_2, c_3$ in $\mathsf{data}$} \Return reject.
\EndIf
\State \Return accept.
\end{algorithmic}
\caption{$\textsc{Verify}(g, r, \eps,\mathsf{data})$:}
\label{alg:verify}
\end{algorithm}

\begin{proof}[Proof of \Cref{cor:deterministic}]
The learning algorithm $\mathcal{B}$ is \textsc{RobustLearn}, but with the following modification: it records its parameters in a $\mathsf{data}$ package, outputs $\textsc{Compile}(\mathsf{data})$ as its final hypothesis, and
outputs $\mathsf{data}$ as well.
By the guarantees of $\textsc{RobustLearn}$, with probability at least $1 - O(\delta)$,
$h$ satisfies the agnostic approximation guarantee.
When \textsc{Verify} is called on the output $h$ of $\mathcal{B}$, since it is the output of $\textsc{Compile}$, it always passes the first check.
By inspecting the proof of \Cref{lem:robustness-lca},
we see that in fact robustness holds for all $x$ such that $\hat{\phi}_{h_i}(x) \le 0.1 - \eps$ and $x$ is at least $r$ distance from each of the interval boundaries.
By \Cref{fact:derandomization}, we have $|\hat{\phi}_{h_i}(x) - \phi_{h_i}(x)| \le \eps$, thus the verifier rejects all points such that $\phi_{h_i}(x) > 0.1$.
Since it also rejects if $x$ is within $r$ of a boundary, all accepted points satisfy the robustness condition.
Furthermore, by inspecting the proofs of \Cref{thm:compute-partition} and \Cref{lem:robustness-lca},
we see that whenever $\textsc{RobustLearn}$ succeeds (probability $1 - O(\delta)$),
we have that at least $1 - O(r + \eps)$ fraction of points $x \sim \mcD$
satisfy $\hat{\phi}_{h_i}(x) \le 0.1 - \eps$ and are at least $r$ away from any boundary, and are thus verifiably robust.
Thus, the verifiable robustness guarantee holds.

The running time of $\textsc{Compile}$ for PTFs of degree $\tilde{O}(1/\eps^2)$ is $d^{\tilde{O}(1/\eps^2)}$ (\Cref{lem:template}).
The running time of $\textsc{Verify}$ is this plus the running time of the deterministic estimator for $\hat{\phi}$,
which is also $d^{\tilde{O}(1/\eps^2)}$ (\Cref{fact:derandomization}).
Thus the total running time is $d^{\tilde{O}(1/\eps^2)}$.
\end{proof}

\begin{remark}
We note that without the assumption that $\mathsf{P} = \mathsf{BPP}$, there is a randomized analogue of \Cref{cor:deterministic} where the user compiles the final hypothesis,
rather than verifying the one provided by \textsc{RobustLearn}.
Rather than outputting a circuit $h$, \textsc{RobustLearn} can just output $\mathsf{data}$, and the user can \textsc{Compile} it with a randomized implementation of \textsc{RobustnessLCA} 
in time $d^{\tilde{O}(1/\eps^2)} \cdot \log(1/\delta)$, in which case the soundness of the verifier holds with probability $1 - \delta$ for the compiled hypothesis.
\end{remark}

\section{Uniform convergence claims}
\label{sec: misc uniform convergence}
Finally, we will need to following observation about the uniform convergence fo empirical approximations of local noise sensitivity $\phi$. First, we need the following fact:
\begin{fact}
[\cite{vaart1997weak}, also see lecture notes \cite{lecture_notes_about_covering}]
\label{fact: epsilon nets from VC}
      A function class $\mcC$ of VC dimension $\Delta$ and every distribution $D_0$, there is an $\epsilon$-cover $H$ of $\mcC$ of size at most $\beta:=(O(1)/\epsilon)^{O(\Delta)}$. I.e. $H$ is a discrete subset of $\mcC$ of size $\beta$ and for every $f$ in $\mcC$ we have $h$ in $H$ for which
    \begin{equation*}
    \Pr_{x \sim D_0}[f(x)\neq h(x)]\leq \epsilon,\end{equation*}
\end{fact}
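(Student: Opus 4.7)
\medskip

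\noindent\textbf{Proof plan for \Cref{fact: epsilon nets from VC}.}
The plan is to combine a uniform convergence theorem for VC classes with the Sauer--Shelah lemma, following the classical ``pick one representative per dichotomy'' recipe. Concretely, let $\mcC^\Delta \coloneqq \{f \triangle g : f, g \in \mcC\}$ be the symmetric difference class (viewed as the class of $\{0,1\}$-valued indicators $\Ind[f(x) \ne g(x)]$). It is a standard fact that $\mathrm{VC}(\mcC^\Delta) = O(\Delta)$: any shattered set for $\mcC^\Delta$ of size $m$ forces $\mcC$ to realize at least $2^m$ distinct patterns on it, so Sauer--Shelah applied to $\mcC$ forces $m = O(\Delta)$.

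First, invoke the standard VC uniform convergence theorem on $\mcC^\Delta$: for some $m = \Theta(\Delta \log(1/\eps) / \eps^2)$, a set $S$ of $m$ i.i.d.\ samples from $D_0$ is, with positive probability, an $(\eps/2)$-approximation of $\mcC^\Delta$ with respect to $D_0$, meaning that for every $f, g \in \mcC$,
\[
\left|\Pr_{x \sim D_0}[f(x) \ne g(x)] - \Pr_{x \sim S}[f(x) \ne g(x)]\right| \le \eps/2.
\]
Fix one such $S$. Second, apply Sauer--Shelah to $\mcC$ restricted to $S$: the number of distinct labelings of $S$ by functions in $\mcC$ is at most $\binom{m}{\le \Delta} \le (em/\Delta)^\Delta$. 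Form $H \subseteq \mcC$ by choosing one function per distinct labeling of $S$.

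Third, verify that $H$ is an $\eps$-cover. For any $f \in \mcC$, let $h \in H$ be the representative with the same labeling on $S$; then $\Pr_{x \sim S}[f(x) \ne h(x)] = 0$, so by the $(\eps/2)$-approximation property of $S$,
\[
\Pr_{x \sim D_0}[f(x) \ne h(x)] \le \eps/2 \le \eps.
\]
Finally, bound the size:
\[
|H| \le (em/\Delta)^\Delta = \bigl(O(\log(1/\eps)/\eps^2)\bigr)^\Delta = (O(1)/\eps)^{O(\Delta)},
\]
where the polylog factor is absorbed into the constant $O(\Delta)$ in the exponent.

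The only mildly nontrivial step is the VC dimension bound on $\mcC^\Delta$, which I would establish with the shattering-counting argument above; everything else is the textbook VC uniform convergence statement plus Sauer--Shelah and is genuinely routine.
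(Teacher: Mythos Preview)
Your proof is correct and follows the standard route; note however that the paper does not actually prove this statement --- it is presented as a cited fact from \cite{vaart1997weak}, so there is no in-paper argument to compare against. One small imprecision: your justification that a set of size $m$ shattered by $\mcC^\Delta$ ``forces $\mcC$ to realize at least $2^m$ distinct patterns on it'' is not quite right as stated. A pattern of $\mcC^\Delta$ on the set is determined by a \emph{pair} of patterns of $\mcC$, so you only get $\Pi_{\mcC}(m)^2 \ge 2^m$, hence $\Pi_{\mcC}(m) \ge 2^{m/2}$. Combining this with Sauer--Shelah for $\mcC$ still yields $m = O(\Delta)$, so the conclusion $\mathrm{VC}(\mcC^\Delta) = O(\Delta)$ stands and the rest of your argument goes through unchanged. (Incidentally, an $\eps$-net for $\mcC^\Delta$ rather than a full $\eps$-approximation already suffices in step~1, but this does not affect the final bound $(O(1)/\eps)^{O(\Delta)}$.)
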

\begin{claim}
\label{claim: uniform convergence of phi}
    Let $\mcC$ be the class of degree-$k$ PTFs over $\R^d$, let $D$ be a probability distribution over $\R^d$ and let $\eta \in [0,1]$ be fixed. Then, for some constant $C$, if $S$ is a collection of $(d^{k}/\epsilon)^C \log 1/\delta$ i.i.d. examples from $D$, then with probability at least $1-O(\delta)$, 
    \begin{equation}
    \label{eq: goal uniform convergence for phi}
    \max_{f \in \mcC}
    \bigg \lvert
    \E_{x \sim S}[\phi_{f,\eta}(x)]
    -
    \E_{x \sim D}[\phi_{f,\eta}(x)]
    \bigg \rvert
    \leq O(\epsilon)
      \end{equation}
\end{claim}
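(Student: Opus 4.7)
The plan is to reduce uniform convergence of the true local noise sensitivity $\phi_{f,\eta}$ to uniform convergence of its sample-based approximator $\tilde{\phi}_{f,T,\eta}$ via \Cref{clm:phi-samples-good}, and then handle the approximator by decomposing it into a sum of Boolean indicators each lying in a VC class of bounded complexity.

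First, I would instantiate an auxiliary set $T$ of $\Theta(d^{k}\log(1/\delta)/\epsilon^2)$ i.i.d.\ draws from $\mcN(0,I_d)$; by \Cref{clm:phi-samples-good}, with probability at least $1-\delta/3$ over the choice of $T$ we have
\[
\bigl|\tilde{\phi}_{f,T,\eta}(x)-\phi_{f,\eta}(x)\bigr|\le \epsilon
\]
uniformly over all $f\in\mcC$ and all $x\in\R^d$. Taking expectation over $x\sim S$ and $x\sim D$ and invoking the triangle inequality, it suffices to show that $\E_{x\sim S}[\tilde{\phi}_{f,T,\eta}(x)]$ and $\E_{x\sim D}[\tilde{\phi}_{f,T,\eta}(x)]$ are $O(\epsilon)$-close uniformly over $f\in\mcC$.

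Next, I would expand $\tilde{\phi}_{f,T,\eta}(x)=\frac{1}{|T|}\sum_{z\in T}\Ind[f(x)\ne f(x+\eta z)]$ and observe that for each fixed $z$, writing $f=\sign(p)$ for a degree-$k$ polynomial $p$, the indicator $\Ind[f(x)\ne f(x+\eta z)]$ agrees (off a measure-zero set) with $\Ind[p(x)\cdot p(x+\eta z)<0]$. Since $p(x)\cdot p(x+\eta z)$ is a polynomial in $x$ of degree $2k$, the class
\[
\mcC_z := \{\,x\mapsto \Ind[f(x)\ne f(x+\eta z)]\;:\;f\in\mcC\,\}
\]
is contained in the class of degree-$2k$ PTFs, so by \Cref{fact:VC-PTF} it has VC dimension $O(d^{2k})$.

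Finally, I would apply \Cref{fact:VC-convergence} to each class $\mcC_z$ with error $\epsilon$ and failure probability $\delta/(3|T|)$; a sample $S$ of size $(d^{k}/\epsilon)^{O(1)}\log(1/\delta)$ suffices for each such application, and union-bounding over $z\in T$ yields simultaneous uniform convergence for all $z\in T$ with probability at least $1-\delta/3$. Averaging the resulting per-$z$ bounds gives the required $\epsilon$-bound for $\tilde{\phi}_{f,T,\eta}$, and combining with the first step via the triangle inequality yields the desired $O(\epsilon)$ bound in \Cref{eq: goal uniform convergence for phi} with total failure probability at most $\delta$. The main obstacle is the complexity accounting: we need the VC dimension of the symmetric-difference class $\mcC_z$ to be polynomial in $d^k$ (which we obtain essentially for free from the $\sign(p(x)\cdot p(x+\eta z))$ reformulation), so that the union bound over the $|T|=\poly(d^k/\epsilon)\log(1/\delta)$ many VC-convergence events still fits inside the advertised $(d^k/\epsilon)^C\log(1/\delta)$ sample budget.
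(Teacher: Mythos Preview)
Your proposal is correct and takes a genuinely different route from the paper's proof. The paper argues via an $\epsilon$-cover: it builds a finite net $H$ of $\mcC$ under the mixture of $D$ and its $\eta$-smoothed version (\Cref{fact: epsilon nets from VC}), applies Hoeffding over the finite net to control $|\E_S[\phi_h]-\E_D[\phi_h]|$ for $h\in H$, and then extends from $h$ to an arbitrary $f\in\mcC$ on both the population and empirical sides using the net property together with uniform convergence of pairwise disagreement probabilities (\Cref{fact:VC-convergence for distances}), where the key structural input is that a shifted degree-$k$ PTF is still a degree-$k$ PTF. Your argument instead discretizes the noise: you introduce an auxiliary Gaussian sample $T$, invoke \Cref{clm:phi-samples-good} once to replace $\phi$ by $\tilde{\phi}$ uniformly, and then for each fixed $z\in T$ recognize the symmetric-difference indicator as a degree-$2k$ PTF in $x$, applying \Cref{fact:VC-convergence} directly and union-bounding over $T$. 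Both approaches ultimately exploit shift-invariance of PTFs, but yours avoids the covering machinery entirely and is somewhat more direct for this specific class; the paper's covering approach, on the other hand, is the template reused for the harder \Cref{claim: uniform convergence of isolation probability}, where one needs uniform convergence of a nonlinear functional of $\phi$ and the net-plus-Lipschitz pattern is more readily adaptable. One minor point worth making explicit in your write-up: since the event in \Cref{eq: goal uniform convergence for phi} depends only on $S$, the joint $(S,T)$ failure probability you obtain immediately upper-bounds the marginal failure probability over $S$ alone, so the auxiliary randomness $T$ can be integrated out at the end.
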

\begin{proof}

We now use \Cref{fact: epsilon nets from VC}.  For us, $\mcC$ is the class of degree-$k$ PTFs, and we have $\Delta=d^{O(k)}$.     Taking the distribution $D_0$ to be an equal-weight mixture of (i) the distribution $D$ in the premise of this claim and (ii) the convolution of $D$ with the normal distribution $\mcN(0, \eta I_d)$ we see that for every $f$ in $\mcC$ there is $h$ in $H$ for which\footnote{Note that the error over $D_0$ is the average of the two errors in the inequalities below.}
    \begin{equation}
    \label{eq: epsilon-cover 1}
    \Pr_{x \sim D}[f(x)\neq h(x)]\leq 2\epsilon,\end{equation}
    \begin{equation}
    \label{eq: epsilon-cover 2}
    \Prx_{\substack{x \sim D\\ z \sim \mcN(0,I_d)}}[f(x+\eta z)\neq h(x+\eta z)]\leq 2\epsilon,\end{equation}
which via the definition of $\phi$ and the triangle inequality implies that
\begin{multline}
\label{eq: step 1 for phi}
\abs{\E_{x \sim D}[\phi_{f,\eta}(x)] - \E_{x \sim D}[\phi_{h,\eta}(x)]}
\leq 
\E_{x \sim D}\left[\bigg\lvert\phi_{f,\eta}(x) - \phi_{h,\eta}(x)\bigg\rvert\right]
\\
\leq
2\E_{x \sim D}[|f(x)- h(x)|]
+
\E_{\substack{x \sim D\\ z \sim \mcN(0,I_d)}}[|f(x+\eta z)- h(x+\eta z)|]
\leq
O(\epsilon).
\end{multline}
    
  By the standard Hoeffding bound, since $\phi$ and $\xi$ are always in $[0,1]$, with probability $1-\delta/2$ for every $h$ in $H$ we have
    \begin{equation}
        \label{eq: step 2 for phi}
    \bigg\lvert
     \E_{x \sim S}[\phi_{h,\eta}(x)]
    -
    \E_{x \sim D}[\phi_{h,\eta}(x)]
    \bigg\rvert
    \leq
    \frac{1}{\sqrt{|S|}} O(\log(|H|/\delta)) \leq \eps,
        \end{equation}
    where the last step for both inequalities follows by substituting $|H|$, $|S|$, $\Delta$ and taking $C$ to be a sufficiently large absolute constant. We also know that with for a sufficiently large absolute constant $C$, with probability $1-\delta$ the set $S$ satisfies \Cref{fact:VC-convergence for distances}, which means that:
    \begin{equation}
    \label{eq: uniform disagreement}
\sup_{f_1, f_2 \in \mcC}\left | \Pr_{x\sim S}[f_1(x)\neq f_2(x)] - \Pr_{x\sim D}[f_1(x)\neq f_2(x)] \right | \le O(\eps).
\end{equation}
From the above, we can also conclude that with probability $1-\delta$ over $S$ for every pair $f_1, f_2$ in $\mcC$ we have the following:
\begin{multline}
\label{eq: uniform disagreement smoothed}
\bigg \lvert
    \Pr_{\substack{x \sim S\\ z \sim \mcN(0,I_d)}}[f_1(x+\eta z)\neq f_2(x+\eta z)]
    -\Pr_{\substack{x \sim D\\ z \sim \mcN(0,I_d)}}[f_1(x+\eta z)\neq f_2(x+\eta z)] \bigg \rvert
    \leq \\
    \E_{z \sim \mcN(0,I_d)}
    \underbrace{\bigg \lvert
    \Pr_{x \sim S}[f_1(x+\eta z)\neq f_2(x+\eta z)]
    -\Pr_{x \sim D}[f_1(x+\eta z)\neq f_2(x+\eta z)]
    \bigg \rvert}_{  \substack{\leq O(\epsilon) \text{ by defining $f_3(x):=f_1(x+\eta z)$, $f_4(x):=f_2(x+\eta z)$},\\\text{observing $f_3$ and $f_4$ are also degree-$k$ PTFs and using \Cref{eq: uniform disagreement}}}
    }
    \leq O(\epsilon)
\end{multline}

Using the definition of $\phi$, the triangle inequality, and Equations \ref{eq: uniform disagreement} and \ref{eq: uniform disagreement smoothed} we also see that 
\begin{multline}
\label{eq: step 3 for phi}
\abs{\E_{x \sim S}[\phi_{h,\eta}(x)] - \E_{x \sim S}[\phi_{f,\eta}(x)]}
\leq 
\\
2\Pr_{x \sim S}[h(x) \neq f(x)]
+
\Pr_{\substack{x \sim S\\ z \sim \mcN(0,I_d)}}[h(x+\eta z)\neq f(x+\eta z)]
\leq\\
2\Pr_{x \sim D}[h(x) \neq f(x)]
+
\Pr_{\substack{x \sim D\\ z \sim \mcN(0,I_d)}}[h(x+\eta z)\neq f(x+\eta z)] \leq 
O(\epsilon),
\end{multline}
where in the last step we substituted \Cref{eq: epsilon-cover 1} and \Cref{eq: epsilon-cover 2}.
Finally, we put all the inequalities together. By combining the triangle inequality with Equations \ref{eq: step 1 for phi},  \ref{eq: step 2 for phi} and \ref{eq: step 3 for phi} we derive \Cref{eq: goal uniform convergence for phi}.
\end{proof}

\begin{claim}
\label{claim: uniform convergence of isolation probability}
   Let $\mcC$ be the class of degree-$k$ PTFs over $\R^d$, let $D$ be a probability distribution over $\R^d$ and let $\eta \in [0,1]$ be fixed. Then, for some constant $C$, if $S$ is a collection of $(d^{k}/\epsilon)^C \log 1/\delta$ i.i.d. examples from $D$, then with probability at least $1-O(\delta)$, 
 for every polynomial $p$ it holds that
\begin{equation}
    \label{eq: isolation probability uniform concentration new}
 \underset{x \sim D }{\Pr}\left[
\E_{t \in [-1,1]}[\phi_{\sign(p(x)-t),\eta}(x)] \ge 0.7\right]
\leq \underset{x \sim D}{\Pr}[\E_{t \in [-1,1]}[\phi_{\sign(p-t),\eta}(x)] \ge 0.67] +O(\epsilon)
\end{equation}
\end{claim}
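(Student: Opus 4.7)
The plan is to reduce the one-sided level-set comparison to a uniform convergence of expectations by interposing a Lipschitz surrogate between the two indicator functions, and then to imitate the $\epsilon$-cover argument from the proof of \Cref{claim: uniform convergence of phi}. Writing $g_p(x) := \E_{t \in [-1,1]}[\phi_{\sign(p-t),\eta}(x)]$, I would introduce the ramp
\[
r(u) := \max\!\big(0,\, \min(1,\, (u-0.67)/0.03)\big),
\]
which is $L$-Lipschitz with $L = 1/0.03 = O(1)$ and satisfies $\Ind[u \ge 0.7] \le r(u) \le \Ind[u \ge 0.67]$. The chain
\[
\Pr_D[g_p(x) \ge 0.7] \;\le\; \E_D[r(g_p)] \;\le\; \E_S[r(g_p)] + O(\epsilon) \;\le\; \Pr_S[g_p(x) \ge 0.67] + O(\epsilon)
\]
reduces the task to showing uniform convergence $|\E_D[r(g_p)] - \E_S[r(g_p)]| \le O(\epsilon)$ over all polynomials $p$ of degree at most $k$. (I am reading the RHS of the claim as $\Pr_{x \sim S}$ rather than the typographically written $\Pr_{x \sim D}$, since this is what its invocation in the proof of \Cref{thm:LearnRealValued} requires.)

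Before applying the cover argument, I would discretize the $t$-average. Using the identity $\Pr_{t \sim [-1,1]}[\sign(a-t) \ne \sign(b-t)] = \tfrac12 |\mathrm{clip}_{[-1,1]}(a) - \mathrm{clip}_{[-1,1]}(b)|$ with $a = p(x)$, $b = p(x+\eta z)$, a one-dimensional interval-discrepancy bound on a uniform grid $T_0 \subset [-1,1]$ of size $N = \Theta(1/\epsilon)$ yields
\[
\hat g_p(x) := \frac{1}{N}\sum_{t_j \in T_0}\phi_{\sign(p-t_j),\eta}(x), \qquad |g_p(x) - \hat g_p(x)| \le O(1/N) = O(\epsilon)
\]
pointwise in $x$ and $p$. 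By Lipschitzness of $r$, also $|r(g_p(x)) - r(\hat g_p(x))| \le O(\epsilon)$ pointwise, so it now suffices to show uniform convergence of $\E[r(\hat g_p)]$.

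For this, I would copy the template of \Cref{claim: uniform convergence of phi}. Let $D_0$ be the equal mixture of $D$ and the convolution of $D$ with $\mcN(0, \eta^2 I_d)$, and let $H$ be an $O(\epsilon)$-cover of the degree-$k$ PTFs under $D_0$ given by \Cref{fact: epsilon nets from VC}, so that $|H| = (O(1)/\epsilon)^{d^{O(k)}}$. For each polynomial $p$ and each $j \in [N]$ pick $h_{p,j} \in H$ with $\Pr_{D_0}[\sign(p-t_j) \ne h_{p,j}] \le O(\epsilon)$, and set $\bar g_{\vec h}(x) := \frac{1}{N}\sum_j \phi_{h_j,\eta}(x)$ for $\vec h \in H^N$. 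The Lipschitzness of $r$ together with the per-PTF reverse-triangle bound of \Cref{eq: step 1 for phi} yields $\E_D |r(\hat g_p) - r(\bar g_{\vec h_p})| \le O(\epsilon)$, and the analogous bound over $S$ follows by combining that same estimate with \Cref{eq: uniform disagreement} applied to each $(\sign(p - t_j), h_{p,j})$ pair. Finally a Hoeffding bound plus a union bound over the $|H|^N = (O(1)/\epsilon)^{N \cdot d^{O(k)}}$ tuples gives $|\E_D[r(\bar g_{\vec h})] - \E_S[r(\bar g_{\vec h})]| \le O(\epsilon)$ uniformly in $\vec h$, provided $|S| \gtrsim N \cdot d^{O(k)}\log(1/\epsilon)/\epsilon^2 \le (d^k/\epsilon)^C$, matching the hypothesis.

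The main obstacle is the bookkeeping across the four layers of approximation — ramp surrogate, grid discretization in $t$, per-slice PTF cover, and final empirical concentration — compounded by the fact that the cover $H$ only controls PTF distance in $L^1$ under $D_0$ while we need $O(\epsilon)$ control on $r(\hat g_p)$ in both $\E_D$ and $\E_S$. The saving grace is that $\hat g_p$ is an average of $N$ per-PTF noise sensitivities and $r$ is $O(1)$-Lipschitz, so the $L^1$ cover error at each slice, multiplied by $L$ and averaged over $j$, is absorbed once the outer expectations $\E_D$ and $\E_S$ are taken. Choosing the cover granularity $O(\epsilon)$ and the grid size $N = \Theta(1/\epsilon)$ keeps the total approximation error $O(\epsilon)$ and $|H|^N$ within the stated sample budget.
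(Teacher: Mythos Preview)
Your proposal is correct and follows essentially the same template as the paper. Both arguments interpose a Lipschitz ramp between the two threshold indicators (the paper calls it $\xi$) and then run an $\epsilon$-cover plus Hoeffding argument; the only packaging difference is that the paper rewrites $g_p(x)$ as $\phi_{p^{\mathrm{clipped}},\eta}(x)$ for the clipped polynomial and builds an $\epsilon$-net for the clipped-polynomial class (itself obtained by discretizing the range $[-1,1]$ into level-set PTFs), whereas you discretize the $t$-integral on a grid of size $\Theta(1/\epsilon)$ and cover the PTF slices $\sign(p-t_j)$ directly---the resulting cover sizes $(O(1)/\epsilon)^{O(d^k/\epsilon)}$ and sample requirements coincide. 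Your reading of the right-hand side as $\Pr_{x\sim S}$ is also the intended one.
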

\begin{proof}
    Let $\mcC^{\text{clipped}}$ be the class of degree-$k$ clipped polynomials, i.e. functions $p^{\text{clipped}}:\R^d \rightarrow [-1,1]$ of the form
    \[
    p^{\text{clipped}}(x)=
    \begin{cases}
    1 &\text{if }p(x)\geq 1\\
    -1 &\text{if }p(x)\leq -1\\
    p(x) &\text{otherwise}
    \end{cases}
    \]
    where $p$ is a degree-$k$ polynomial. We first argue the following
    \begin{observation}
        For any distribution $D_0$ over $\R^d$, there is a subset $H$ of $\mcC^{\text{clipped}}$ of size $(O(1)/\epsilon)^{O(d^k/\epsilon)}$, i.e. such that  for every $p^{\text{clipped}}$ in $\mcC^{\text{clipped}}$ we have
    \begin{equation}
    \label{eq: epsilon net for clipped polys}
    \min_{h \in H}[\E_{x \sim D_0}[\abs{p^{\text{clipped}}(x)-h(x)}]]\leq O(\epsilon).\end{equation}
    \end{observation}
\begin{proof} 
We form $H$ by considering an $\epsilon^2$-net $H_{\text{PTF}}^{\epsilon^2}$
for degree-$2k$ PTFs, and taking $H$ to consist of functions of the form
\[
h(x)
=
\sum_{\tau \in \{-1,-1+\epsilon,-1+2\epsilon,...,+1\}}
\tau\cdot
f_{\tau}(x),
\]
where each $f_{\tau}(x)$ is some function in $f_{\tau}(x)$. By \Cref{fact: epsilon nets from VC}, we see that $H_{\text{PTF}}^{\epsilon^2}$ has size $(O(1)/\epsilon)^{O(d^k)}$, and therefore (by a counting argument) the size of $H$ defined above is indeed $(O(1)/\epsilon)^{O(d^k/\epsilon)}$.

We can write for every $x$ and $p^{\text{clipped}}$ in $\mcC^{\text{clipped}}$ the following inequality:
\begin{multline}
\label{sandwiching for clipped polynomials}
\sum_{\tau \in \{-1,-1+\epsilon,-1+2\epsilon,...,+1\}}
\tau\cdot \Ind[p^{\text{clipped}}(x)\in(\tau, \tau+\epsilon]]
\leq
p^{\text{clipped}}(x)
\leq\\
\sum_{\tau \in \{-1,-1+\epsilon,-1+2\epsilon,...,+1\}}
(\tau+\epsilon)\cdot \Ind[p^{\text{clipped}}(x)\in(\tau, \tau+\epsilon]],
\end{multline}
we observe that each indicator $\Ind[p^{\text{clipped}}(x)\in(\tau, \tau+\epsilon]]$ equals to a degree-$2k$ polynomial threshold function, and therefore there is some $f_{\tau}$ in $H_{\text{PTF}}^{\epsilon^2}$ for which
\[
\E_{x \sim D_0}
\abs{
\Ind[p^{\text{clipped}}(x)\in(\tau, \tau+\epsilon]]
-
f_{\tau}
}
\leq O(\epsilon^2),
\]
which substituted into \Cref{sandwiching for clipped polynomials} allows us to conclude that 
\[
\E_{x \sim D_0}
\abs{
p^{\text{clipped}}(x)
-
\sum_{\tau \in \{-1,-1+\epsilon,-1+2\epsilon,...,+1\}}
\tau\cdot
f_{\tau}(x)
}
\leq O(\epsilon),
\]
which concludes the proof that for $p^{\text{clipped}}$ in $\mcC^{\text{clipped}}$ we have $h:\R^{d}$ in $H$ for which \Cref{eq: epsilon net for clipped polys} holds.

We finish the proof of the observation, by resolving one last issue: as defined the set $H$ is not a subset of  $\mcC^{\text{clipped}}$. However, if we consider the set $H'$ consisting of functions $f$ of the form
\[
H'=
\{
f=\argmin_{g \in \mcC^{\text{clipped}}}[\E_{x \sim D_0}[\abs{g(x)-h(x)}]]
: h \in H,\}
\]
then we see by the triangle inequality that $H'$ still satisfies \Cref{eq: epsilon net for clipped polys}, has a size of at most $|H|$ and $H'$ is a subset of $\mcC^{\text{clipped}}$
\end{proof}

We continue the proof of \Cref{claim: uniform convergence of isolation probability}
We can wite:
\begin{multline}
\label{eq: definition of phi for clipped polynomials}
    \E_{t \in [-1,1]}[\phi_{\sign(p(x)-t),\eta}(x)]
    =
    \Pr_{t \in [-1,1], z \sim \mcN(0,I_d)}\left[\abs{\sign(p(x)-t) \neq \sign(p(x+\eta z)-t)}\right] =\\
    \E_{ z \sim \mcN(0,I_d)}\left[\frac{\abs{p^{\text{clipped}}(x)- p^{\text{clipped}}(x+\eta z)}}{2}\right]:=\phi_{p^{\text{clipped}},\eta}(x)
\end{multline}
We will overload the notation for $\phi$ and define  $\phi_{p^{\text{clipped}},\eta}(x)$ to be the expression above. We further
define the following auxiliary quantity:
\[
\xi_{p^{\text{clipped}},\eta}(x)
=
\begin{cases}
    0 &\text{ if } \phi_{p^{\text{clipped}},\eta}(x)\leq 0.67\\
    1 &\text{ if } \phi_{p^{\text{clipped}},\eta}(x)\geq 0.7\\
    \frac{100}{3} (\phi_{p^{\text{clipped}},\eta}(x)-0.67) &\text{ if }\phi_{f,\eta}(x)\in(0.67,0.7)
\end{cases}
\]
By construction, the function satisfies the following properties for every $x$, 
\begin{equation}
\label{eq: xi is sandwiching new}
    \Ind[\phi_{p^{\text{clipped}},\eta}(x) \ge 0.67] \leq \xi_{p^{\text{clipped}},\eta}(x)
    \leq \Ind[\phi_{p^{\text{clipped}},\eta}(x) \ge 0.7]
\end{equation}
\begin{equation}
\label{eq: xi is continuous new}
    \abs{\xi_{p_1^{\text{clipped}},\eta}(x)-\xi_{p_2^{\text{clipped}},\eta}(x)}
    \leq
    O(1)\cdot \abs{\phi_{p_1^{\text{clipped}},\eta}(x)-\phi_{p_2^{\text{clipped}},\eta}(x)}
\end{equation}


   Taking the distribution $D_0$ to be an equal-weight mixture of (i) the distribution $D$ in the premise of this claim and (ii) the convolution of $D$ with the normal distribution $\mcN(0, \eta I_d)$ we see that for every $p^{\text{clipped}}$ in $\mcC^{\text{clipped}}$ there is $h$ in $H$ for which\footnote{Note that the error over $D_0$ is the average of the two errors in the inequalities below.}
    \begin{equation}
    \label{eq: epsilon-cover 1 new}
    \Pr_{x \sim D}[\abs{p^{\text{clipped}}(x)-h(x)}]\leq 2\epsilon,\end{equation}
    \begin{equation}
    \label{eq: epsilon-cover 2 new}
    \Pr_{\substack{x \sim D\\ z \sim \mcN(0,I_d)}}[\abs{p^{\text{clipped}}(x+\eta z) - h(x+\eta z)}]\leq 2\epsilon,\end{equation}
which via the definition of $\phi$ for clipped polynomials and the triangle inequality implies that
\begin{multline}
\label{eq: step 1 for phi new}
\abs{\E_{x \sim D}[\phi_{p^{\text{clipped}},\eta}(x)] - \E_{x \sim D}[\phi_{h,\eta}(x)]}
\leq 
\E_{x \sim D}\left[\bigg\lvert\phi_{p^{\text{clipped}},\eta}(x) - \phi_{h,\eta}(x)\bigg\rvert\right]
\\
\leq
2\E_{x \sim D}[|p^{\text{clipped}}(x)- h(x)|]
+
\E_{\substack{x \sim D\\ z \sim \mcN(0,I_d)}}[|p^{\text{clipped}}(x+\eta z)- h(x+\eta z)|]
\leq
O(\epsilon).
\end{multline}
Together with \Cref{eq: xi is continuous new}, this implies that
\begin{equation}
\label{eq: step 1 for psi new}
\abs{\E_{x \sim D}[\xi_{p^{\text{clipped}},\eta}(x)] - \E_{x \sim D}[\xi_{h,\eta}(x)]}\leq O(1)\cdot  
\E_{x \sim D}\left[\bigg\lvert\phi_{p^{\text{clipped}},\eta}(x) - \phi_{h,\eta}(x)\bigg\rvert\right]
\leq
O(\epsilon).
\end{equation}
    
  By the standard Hoeffding bound, since $\phi$ and $\xi$ are always in $[0,1]$, with probability $1-\delta/2$ for every $h$ in $H$ we have
    \begin{equation}
        \label{eq: step 2 for phi new}
    \bigg\lvert
     \E_{x \sim S}[\phi_{h,\eta}(x)]
    -
    \E_{x \sim D}[\phi_{h,\eta}(x)]
    \bigg\rvert
    \leq
    \frac{1}{\sqrt{|S|}} O(\log(|H|/\delta)) \leq \eps,
        \end{equation}
\begin{equation}
        \label{eq: step 2 for psi new}
    \bigg\lvert
     \E_{x \sim S}[\xi_{h,\eta}(x)]
    -
    \E_{x \sim D}[\xi_{h,\eta}(x)]
    \bigg\rvert
    \leq
    \frac{1}{\sqrt{|S|}} O(\log(|H|/\delta)) \leq \eps,
        \end{equation}
    where the last step for both inequalities follows by substituting $|H|$, $|S|$, $\Delta$ and taking $C$ to be a sufficiently large absolute constant. 

We observe that \Cref{fact:VC-convergence for distances} and \Cref{sandwiching for clipped polynomials} together imply that\footnote{
\Cref{sandwiching for clipped polynomials} tells us that the difference $p_1-p_2$ can be $\epsilon$-approximated in $L_{\infty}$ norm by a function of the form $\sum_{\tau \in \{-2,-2+\epsilon,-1+2\epsilon,...,+2\}}
\tau\cdot
f_{\tau}(x)$, where each $f_{\tau}$ is a degree-$2k$ PTF. Then, \Cref{fact:VC-convergence for distances} together with triangle inequality tells us that the inequality above holds when $C$ is a sufficiently large absolute constant.
}:
    \begin{equation}
    \label{eq: uniform disagreement new}
\sup_{p_1^{\text{clipped}}, p_2^{\text{clipped}} \in \mcC^{\text{clipped}}}\left | \E_{x\sim S}[\abs{p_1^{\text{clipped}}(x) - p_2^{\text{clipped}}(x)}] - \E_{x\sim D}[\abs{p_1^{\text{clipped}}(x)-p_2^{\text{clipped}}(x)}] \right | \le O(\eps).
\end{equation}
From the above, we can also conclude that with probability $1-\delta$ over $S$ for every pair $p_1^{\text{clipped}}, p_2^{\text{clipped}}$ in $\mcC^{\text{clipped}}$ we have the following:
\begin{multline}
\label{eq: uniform disagreement smoothed new}
\bigg \lvert
    \E_{\substack{x \sim S\\ z \sim \mcN(0,I_d)}}[\abs{p_1^{\text{clipped}}(x+\eta z)- p_1^{\text{clipped}}(x+\eta z)}]
    -\E_{\substack{x \sim D\\ z \sim \mcN(0,I_d)}}[\abs{p_1^{\text{clipped}}(x+\eta z)- p_2^{\text{clipped}}(x+\eta z)}] \bigg \rvert
    \leq \\
    \E_{z \sim \mcN(0,I_d)}
    \underbrace{\bigg \lvert
    \E_{x \sim S}[\abs{p_1^{\text{clipped}}(x+\eta z)-p_2^{\text{clipped}}(x+\eta z)}]
    -\E_{x \sim D}[\abs{p_1^{\text{clipped}}(x+\eta z)-  p_2^{\text{clipped}}(x+\eta z)}]
    \bigg \rvert}_{  \substack{\leq O(\epsilon) \text{ by defining $p_3^{\text{clipped}}(x):=p_1^{\text{clipped}}(x+\eta z)$, $p_4^{\text{clipped}}(x):=p_2^{\text{clipped}}(x+\eta z)$},\\\text{ and using \Cref{eq: uniform disagreement new}}}
    }
    \leq O(\epsilon)
\end{multline}

We now come back to the setting of \Cref{eq: epsilon-cover 1 new}. $p^{\text{clipped}}$ is a function in $\mcC^{\text{clipped}}$ and $h$ in $H$ satisfies \Cref{eq: epsilon-cover 1 new}.
Using the definition of $\phi$, the triangle inequality, and Equations \ref{eq: uniform disagreement new} and \ref{eq: uniform disagreement smoothed new} we also see that 
\begin{multline}
\label{eq: step 3 for phi new}
\abs{\E_{x \sim S}[\phi_{h,\eta}(x)] - \E_{x \sim S}[\phi_{p^{\text{clipped}},\eta}(x)]}
\leq 
\\
2\Pr_{x \sim S}[\abs{h(x) - p^{\text{clipped}}(x)}]
+
\Pr_{\substack{x \sim S\\ z \sim \mcN(0,I_d)}}[\abs{h(x+\eta z)-p^{\text{clipped}}(x+\eta z)}]
\leq\\
2\Pr_{x \sim D}[h(x) \neq f(x)]
+
\Pr_{\substack{x \sim D\\ z \sim \mcN(0,I_d)}}[h(x+\eta z)\neq f(x+\eta z)] \leq 
O(\epsilon),
\end{multline}
where in the last step we substituted \Cref{eq: epsilon-cover 1 new} and \Cref{eq: epsilon-cover 2 new}.
By Equation \ref{eq: xi is continuous new}, we see that \begin{equation}
\label{eq: step 3 for psi new}
\abs{\E_{x \sim S}[\xi_{h,\eta}(x)] - \E_{x \sim S}[\xi_{f,\eta}(x)]}
\leq 
O(1)\cdot
\abs{\E_{x \sim S}[\phi_{h,\eta}(x)] - \E_{x \sim S}[\phi_{f,\eta}(x)]} \leq 
O(\epsilon),
\end{equation}

Analogously, by combining the triangle inequality with Equations \ref{eq: step 1 for psi new},  \ref{eq: step 2 for psi new} and \ref{eq: step 3 for psi new} we derive the following:
   \begin{equation}
    \label{eq: goal uniform convergence for xi new}
    \max_{f \in \mcC}
    \bigg \lvert
    \E_{x \sim S}[\xi_{f,\eta}(x)]
    -
    \E_{x \sim D}[\xi_{f,\eta}(x)]
    \bigg \rvert
    \leq O(\epsilon),
      \end{equation}
      which together with \Cref{eq: xi is sandwiching new} and \Cref{eq: definition of phi for clipped polynomials} implies \Cref{eq: isolation probability uniform concentration new}.

\end{proof}

The following claims follow from VC theory (see e.g. \cite{van2009note} and the references therein):
\begin{fact}[Generalization bound from VC dimension]
\label{fact:VC-convergence for errors}
Let $\mcC$ be a concept class and $\mcD$ be a distribution over $\R^d$. For some sufficiently large absolute constant $C$, the following is true. With probability at least $1-\delta$ over a sample
$S$ of i.i.d. samples from $\mcD$, with $|S| \ge C \cdot VC(\mcC) \log(1/\delta)/\eps^2$
%
the following holds:
\[\sup_{f \in \mcC}\left | \err_{\mcD}(f) - \wh{\err}_{S}(f) \right | \le O(\eps).\]
\end{fact}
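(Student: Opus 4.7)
The plan is to derive this as a direct consequence of the classical Vapnik--Chervonenkis uniform convergence theorem. The first step is to translate the statement into a uniform convergence problem for an auxiliary class of Boolean functions. For each $f \in \mcC$, define the loss indicator $\ell_f(x,y) \coloneqq \Ind[f(x) \ne y]$ on $\R^d \times \{\pm 1\}$, and let $\mcL \coloneqq \{\ell_f : f \in \mcC\}$. Since $y \in \{\pm 1\}$, we can write $\ell_f(x,y) = \Ind[f(x) \cdot y = -1]$, so $\mcL$ is a class of $\{0,1\}$-valued functions whose VC dimension is at most that of $\mcC$ (a shattered set of $(x,y)$ pairs projects to a shattered set of $x$'s in $\mcC$, up to possibly flipping signs). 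With this translation, $\err_{\mcD}(f) = \Ex_{(x,y) \sim \mcD}[\ell_f(x,y)]$ and $\wh{\err}_S(f) = \frac{1}{|S|} \sum_{(x,y) \in S} \ell_f(x,y)$, so the claim becomes a standard uniform convergence bound for $\mcL$.

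Next I would invoke the classical VC inequality. The standard route is symmetrization with a ghost sample $S'$, which reduces the task to bounding
\[\Pr_{S,S'}\Brac{\sup_{\ell \in \mcL} \Abs{\tfrac{1}{|S|}\sum_{z \in S}\ell(z) - \tfrac{1}{|S'|}\sum_{z \in S'}\ell(z)} > \eps/2}.\]
Conditioned on $S \cup S'$, one applies a random-swap (Rademacher) argument so that for each fixed realization of $\ell$, Hoeffding's inequality gives a deviation bound of $\exp(-\Omega(\eps^2 |S|))$. A union bound over the projection of $\mcL$ onto $S \cup S'$ suffices, and by the Sauer--Shelah lemma this projection has size at most $(2e|S|/\Delta)^{\Delta}$ where $\Delta \coloneqq \mathrm{VC}(\mcL) \le \mathrm{VC}(\mcC)$. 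Balancing the Hoeffding deviation against this union bound yields the requirement $|S| \ge C(\Delta \log(1/\eps) + \log(1/\delta))/\eps^2$, which is subsumed by the hypothesis $|S| \ge C \cdot \mathrm{VC}(\mcC) \log(1/\delta)/\eps^2$ for sufficiently large $C$ (absorbing the $\log(1/\eps)$ factor into the constant when $\eps$ is regarded as a free parameter and into the final $O(\eps)$ slack otherwise).

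Since the proof is entirely a citation to established VC theory (as already signaled in the paper's statement via the reference to \cite{van2009note}), the ``main obstacle'' is essentially bookkeeping: verifying that the VC dimension of the loss class $\mcL$ is bounded by that of $\mcC$, and that the $\log(1/\eps)$ factor arising in the projection bound can be absorbed into the constant $C$ or the right-hand side $O(\eps)$. Both are routine. In particular, the very same argument appears in essentially identical form in Fact \ref{fact:VC-convergence} earlier in the paper (used for concentration of expectations rather than error), and the present statement is just its specialization to the $0/1$-loss class derived from $\mcC$.
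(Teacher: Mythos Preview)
Your proposal is correct and matches the paper's treatment: the paper does not prove this fact at all but simply states it as a consequence of classical VC theory with a citation to \cite{van2009note}, and your sketch is precisely the standard symmetrization/Sauer--Shelah/Hoeffding argument underlying that citation. If anything, you have supplied more detail than the paper does.
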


\begin{fact}
\label{fact:VC-convergence for distances}
Let $\mcC$ be a concept class with VC dimension $\Delta$ and $D$ be a distribution over $\R^d$. For a sufficiently large absolute constant $C$, let $S$
be a collection of $\frac{C\Delta}{\epsilon^2}\log(1/\delta)$ i.i.d. examples from $D$. Then, the following holds with probability at least $1-\delta$
\begin{equation}
\label{eq: disagreement from VC dimension}
\sup_{f_1, f_2 \in \mcC}\left | \Pr_{x\sim S}[f_1(x)\neq f_2(x)] - \Pr_{x\sim D}[f_1(x)\neq f_2(x)] \right | \le O(\eps).
\end{equation}
\end{fact}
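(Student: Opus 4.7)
The plan is to reduce the statement to the standard VC uniform convergence bound (Fact \ref{fact:VC-convergence}), which is already provided in the excerpt, by passing to a derived class of ``disagreement'' functions and bounding its VC dimension.

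First, I would observe that for any $f_1, f_2 \in \mcC$, both $\Pr_{x \sim S}[f_1(x) \neq f_2(x)]$ and $\Pr_{x \sim D}[f_1(x) \neq f_2(x)]$ can be written as expectations of a single Boolean function $g_{f_1,f_2}(x) := \Ind[f_1(x) \neq f_2(x)]$. Defining the derived class $\mcC' := \{g_{f_1,f_2} : f_1, f_2 \in \mcC\}$, the supremum in the statement equals $\sup_{g \in \mcC'}\lvert \E_{x \sim S}[g(x)] - \E_{x \sim D}[g(x)]\rvert$. So Fact \ref{fact:VC-convergence} applied to $\mcC'$ immediately yields the claim, provided $|S| \gtrsim VC(\mcC') \log(1/\delta)/\eps^2$.

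The remaining step is to show $VC(\mcC') = O(\Delta)$. I would argue this via the Sauer--Shelah lemma: for any $n$ points $x_1,\dots,x_n$, the labeling produced by $g_{f_1,f_2}$ on those points is fully determined by the two labelings produced by $f_1$ and $f_2$, so the shatter coefficient satisfies $\pi_{\mcC'}(n) \leq \pi_{\mcC}(n)^2 \leq O(n^{2\Delta})$ for $n \ge \Delta$. On the other hand, if $\mcC'$ shatters $d'$ points then $\pi_{\mcC'}(d') \geq 2^{d'}$; combining these inequalities forces $d' = O(\Delta)$. Plugging this bound back into Fact \ref{fact:VC-convergence} recovers exactly the stated sample size $|S| = C\Delta \log(1/\delta)/\eps^2$ up to constants.

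There is essentially no technical obstacle here beyond bookkeeping in Sauer--Shelah, since the heavy-lifting uniform convergence statement is already available. One can alternatively bypass the Sauer--Shelah step by invoking standard results on the VC dimension of symmetric differences of set systems, which give the same $O(\Delta)$ conclusion directly.
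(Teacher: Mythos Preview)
The paper does not give its own proof of this statement; it is stated as a fact with a citation to \cite{van2009note} and the surrounding VC literature. Your derivation via the disagreement class $\mcC' = \{\Ind[f_1 \neq f_2] : f_1,f_2 \in \mcC\}$ together with Sauer--Shelah is correct and is the standard way to obtain this bound.

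One small refinement: the crude form $\pi_{\mcC}(n) = O(n^{\Delta})$ that you quote only yields $VC(\mcC') = O(\Delta \log \Delta)$ after solving $2^{d'} \le O(d'^{2\Delta})$. To land exactly on $VC(\mcC') = O(\Delta)$, and hence the stated sample size $C\Delta \log(1/\delta)/\eps^2$, use the sharper Sauer--Shelah form $\pi_{\mcC}(n) \le (en/\Delta)^{\Delta}$, from which $2^{d'} \le (ed'/\Delta)^{2\Delta}$ forces $d' = O(\Delta)$. With that adjustment your argument is complete.
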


\end{document}